\documentclass[reqno,11pt]{amsart}
%%%%%%%%%%%%%%%%%%%%%%%%%%%%%%%%%%%%%%%%%%%%%%%%%%%%%%%%%%%%%%%%%%%%%%%%%%%%%%%%%%%%%%%%%%%%%%%%%%%%%%%%%%%%%%%%%%%%%%%%%%%%%%%%%%%%%%%%%%%%%%%%%%%%%%%%%%%%%%%%%%%%%%%%%%%%%%%%%%%%%%%%%%%%%%%%%%%%%%%%%%%%%%%%%%%%%%%%%%%%%%%%%%%%%%%%%%%%%%%%%%%%%%%%%%%%
\usepackage[margin=1.2in]{geometry}
\usepackage{amssymb,amsfonts,amsmath,amsthm}
\usepackage{amsaddr}
\usepackage{siunitx}
\usepackage{graphicx,grffile}
\usepackage[longnamesfirst]{natbib} %[longnamesfirst] need to change apalike to newapa, but then change...maybe chicago?
\usepackage{hyperref}
\usepackage{hypernat}
\usepackage{graphicx,grffile,setspace}
\usepackage{epstopdf}
\usepackage{endnotes}
\DeclareGraphicsExtensions{.pdf,.png} %.pdf has preference if no extension specified

\let\footnote=\endnote

\setcounter{MaxMatrixCols}{10}

\newtheorem{theorem}{Theorem}

\newtheorem{corollary}[theorem]{Corollary}

\newtheorem{lemma}[theorem]{Lemma}

\newtheorem{proposition}[theorem]{Proposition}

\theoremstyle{definition}

\newtheorem{assumption}{Assumption}
\theoremstyle{remark}

\def \independenT#1#2{\mathrel{\rlap{$#1#2$}\mkern2mu{#1#2}}}
\newcommand{\D}[2]{\frac{\partial #1}{\partial #2}}

\def \E{\mathbb E}
\let \originalleft \left
\let \originalright \right
\renewcommand{\left}{\mathopen{}\mathclose \bgroup \originalleft}
\renewcommand{\right}{\aftergroup \egroup \originalright}

% \input{tcilatex}

% FOR ET:

\onehalfspacing 
\doublespacing

\begin{document}
% \title[Smoothed Estimating Equations for {IV} Quantile Regression]{Smoothed Estimating Equations \\ for Instrumental Variables Quantile Regression}
\title[Smoothed Estimating Equations for {IV-QR}]{Smoothed Estimating Equations \\ for Instrumental Variables Quantile Regression}

% \author{David M.\ Kaplan}
% \address{Department of Economics, University of Missouri}
% \email{kaplandm@missouri.edu}

% \author{Yixiao Sun}
% \address{Department of Economics, University of California, San Diego}
% \email{yisun@ucsd.edu}

% \date{First: January 27, 2012; this: \today}

\thanks{%
% Kaplan: Department of Economics, University of Missouri;
% kaplandm@missouri.edu. Sun: Department of Economics, University of
% California, San Diego; yisun@ucsd.edu.  
Thanks to Victor Chernozhukov (co-editor) and an anonymous referee for insightful comments and references, and thanks to Peter C.\ B.\ Phillips (editor) for additional editorial help.  Thanks to Xiaohong Chen, Brendan
Beare, Andres Santos, and active seminar and conference participants for insightful
questions and comments.} % k NEW: ADDED VICTOR/R2
%California Econometrics Conference, UC Riverside, ESWC

\maketitle

\begin{center}
{\sc David M.\ Kaplan}

\textit{Department of Economics, University of Missouri}

\textit{118 Professional Bldg, 909 University Ave, Columbia, MO 65211-6040}

E-mail: \texttt{kaplandm@missouri.edu}

~\\

{\sc Yixiao Sun}

\textit{Department of Economics, University of California, San Diego}

E-mail: \texttt{yisun@ucsd.edu}
\end{center}

% \newpage
% % \begin{abstract}

% Proposed running head: {\sc Smoothed Estimating Equations for IV-QR}

% ~

% Send proofs to: David M.\ Kaplan (corresponding author) %, 118 Professional Building, 909 University Avenue, Columbia, MO 65211-6040

% ~

\paragraph{\sc Abstract}
% % Original 167-word version:
% The moment conditions or estimating equations for instrumental variables
% quantile regression involve the discontinuous indicator function. We instead
% use smoothed estimating equations (SEE), with bandwidth $h$. We show that
% the mean squared error of the vector of the SEE is minimized for some $h>0$,
% which leads to smaller asymptotic mean squared errors of the estimating
% equations and associated parameter estimators. The same MSE-optimal $h$ also
% minimizes the higher-order type I error of a SEE-based $\chi ^{2}$ test.
% Using this bandwidth, we further show that the SEE-based $\chi ^{2}$
% test has higher size-adjusted power in large samples. Computation of the SEE
% estimator also becomes simpler and more reliable, especially with (more)
% endogenous regressors. Monte Carlo simulations demonstrate all of these
% superior properties in finite samples, and we apply our estimator to JTPA data. 
% Smoothing the estimating equations is
% not just a technical operation for establishing Edgeworth expansions and
% bootstrap refinements; it also brings the real benefits of having more
% precise estimators and more powerful tests.

% 150-word version (counting $h>0$ as two, to be safe):
The moment conditions or estimating equations for instrumental variables
quantile regression involve the discontinuous indicator function. We instead
use smoothed estimating equations (SEE), with bandwidth $h$. We show that
the mean squared error (MSE) of the vector of the SEE is minimized for some $h>0$,
leading % which leads 
to smaller asymptotic 
MSE %mean squared errors 
of the estimating
equations and associated parameter estimators. The same MSE-optimal $h$ also
minimizes the higher-order type I error of a SEE-based $\chi ^{2}$ test %.
% Using this bandwidth, we further show that the SEE-based $\chi ^{2}$
and increases % test has higher 
size-adjusted power in large samples. Computation of the SEE
estimator also becomes simpler and more reliable, especially with (more)
endogenous regressors. Monte Carlo simulations demonstrate all of these
superior properties in finite samples, and we apply our estimator to JTPA data. 
Smoothing the estimating equations is
not just a technical operation for establishing Edgeworth expansions and
bootstrap refinements; it also brings the real benefits of having more
precise estimators and more powerful tests.

% ~\\
% ~\\
% ~\\
% ~\\
% ~\\
% ~\\
% ~\\
% ~\\
% ~\\
% ~\\
% ~\\
% ~\\
% ~\\
% ~\\
% ~\\
% \medskip \medskip \medskip \medskip

% \medskip \medskip

% \textit{Keywords:} Edgeworth expansion, Instrumental variable, Optimal
% smoothing parameter choice, Quantile regression, Smoothed estimating
% equation. \medskip \medskip

% \textit{JEL Classification Numbers:} C13, C21.
% \end{abstract}

\allowdisplaybreaks[4]

% \newpage

\section{Introduction}

Many econometric models are specified by moment conditions or estimating
equations. An advantage of this approach is that the full distribution of
the data does not have to be parameterized. In this paper, we consider
estimating equations that are not smooth in the parameter of interest. We
focus on instrumental variables quantile regression (IV-QR), which
includes the usual quantile regression as a special case. Instead of using
the estimating equations that involve the nonsmooth indicator function, we
propose to smooth the indicator function, leading to our smoothed estimating
equations (SEE) and SEE estimator.

Our SEE estimator has several advantages. First, from a computational point
of view, the SEE estimator can be computed using any standard iterative
algorithm that requires smoothness. This is especially attractive in IV-QR
where simplex methods for the usual QR are not applicable. In fact,
the SEE approach has been used in %
\citet{ChenPouzo2009,ChenPouzo2012} for computing their nonparametric sieve
estimators in the presence of nonsmooth moments or generalized residuals.
However, a rigorous investigation is currently lacking. Our paper can be
regarded as a first step towards justifying the SEE approach in
nonparametric settings. 
Relatedly, \citet[][\S7.1]{FanLiao2014} have employed
the same strategy of smoothing the indicator function to
reduce the computational burden of their focused GMM approach. 
Second, from a technical point of view, smoothing
the estimating equations enables us to establish high-order properties of
the estimator. This motivates \citet{Horowitz1998}, for instance, to examine
a smoothed objective function for median regression, to show high-order
bootstrap refinement. Instead of smoothing the objective function, we show
that there is an advantage to smoothing the estimating equations. This point
has not been recognized and emphasized in the literature. For QR estimation
and inference via empirical likelihood, \citet{Otsu2008} and %
\citet{Whang2006} also examine smoothed estimators. To the best of our
knowledge, no paper has examined smoothing the estimating equations for the
usual QR estimator, let alone IV-QR. Third, from a statistical point of
view, the SEE estimator is a flexible class of estimators that includes the
IV/OLS mean regression estimators and median and quantile regression
estimators as special cases. Depending on the smoothing parameter, the SEE
estimator can have different degrees of robustness in the sense of %
\citet{Huber1964}. By selecting the smoothing parameter appropriately, we
can harness the advantages of both the mean regression estimator and the
median/quantile regression estimator. Fourth and most importantly, from an
econometric point of view, smoothing can reduce the mean squared error (MSE)
of the SEE, which in turn leads to a smaller asymptotic MSE of the parameter
estimator and to more powerful tests. We seem to be the first to establish
these advantages.

In addition to investigating the asymptotic properties of the SEE estimator,
we provide a smoothing parameter choice that minimizes different criteria:
the MSE of the SEE, the type I error of a chi-square test subject to exact
asymptotic size control, and the approximate MSE of the parameter estimator. We show
that the first two criteria produce the same optimal smoothing parameter,
which is also optimal under a variant of the third criterion. With the
data-driven smoothing parameter choice, we show that the statistical and
econometric advantages of the SEE estimator are reflected clearly in our
simulation results.

% Yixiao's version
There is a growing literature on IV-QR.  For a recent review, see \citet{ChernozhukovHansen2013}.  Our paper is built upon \citet{ChernozhukovHansen2005}, which establishes a structural framework for IV-QR and provides primitive identification conditions.  Within this framework, \citet{ChernozhukovHansen2006} and \citet{ChernozhukovEtAl2009} develop estimation and inference procedures under strong identification.  For inference procedures that are robust to weak identification, see \citet{ChernozhukovHansen2008} and \citet{Jun2008}, for example. IV-QR can also reduce bias for dynamic panel fixed effects estimation as in \citet{Galvao2011}.  
None of these papers considers smoothing the IV-QR estimating equations; that idea (along with minimal first-order theory) seemingly first appeared in an unpublished draft by \citet{MaCurdyHong1999}, although the idea of smoothing the indicator function in general appears even earlier, as in \citet{Horowitz1992} for the smoothed maximum score estimator. 
An alternative approach to overcome the computational obstacles in the presence of a nonsmooth objective function is to explore the asymptotic equivalence of the Bayesian and classical methods for regular models and use the MCMC approach to obtain the classical extremum estimator; see \citet{ChernozhukovHong2003}, whose Example 3 is IV-QR.  As a complement, our approach deals with the computation problem in the classical framework directly. 

The rest of the paper is organized as follows. Section \ref{sec:see}
describes our setup and discusses some illuminating connections with other
estimators. Sections \ref{sec:mse}, \ref{sec:eI}, and \ref{sec:est}
calculate the MSE of the SEE, the type I and type II errors of a chi-square
test, and the approximate MSE of the parameter estimator, respectively.
Section \ref{sec:emp} applies our estimator to JTPA data, and Section \ref{sec:sim} presents simulation results before we conclude. Longer
proofs and calculations are gathered in the appendix.

\section{Smoothed Estimating Equations\label{sec:see}}

\subsection{Setup}

We are interested in estimating the instrumental variables quantile
regression (IV-QR) model 
\begin{equation*}
Y_{j}=X_{j}^{\prime }\beta _{0}+U_{j}
\end{equation*}%
where $\E\left[Z_{j}\left( 1\{U_{j}<0\}-q\right)\right] =0$ for instrument vector $%
Z_{j}\in \mathbb{R}^{d}$ and $1\{ \cdot \}$ is the indicator function.
Instruments are taken as given; this does not preclude first determining the
efficient set of instruments as in \citet{Newey2004} or %
\citet{NeweyPowell1990}, for example. 
We restrict attention to the ``just identified'' case $X_{j}\in \mathbb{R}%
^{d}$ and iid data for simpler exposition; for the overidentified case, see %
\eqref{eqn:EE-overID} below.

A special case of this model is exogenous QR with $Z_{j}=X_{j}$, which is
typically estimated by minimizing a criterion function: 
\begin{equation*}
\hat{\beta}_{Q}\equiv \mathop{\rm arg\,min}_{\beta }\frac{1}{n}%
\sum_{j=1}^{n}\rho _{q}(Y_{j}-X_{j}^{\prime }\beta ),
\end{equation*}%
where $\rho _{q}(u)\equiv \left( q-1\{u<0\}\right) u$ is the check function.
Since the objective function is not smooth, it is not easy to obtain a
high-order approximation to the sampling distribution of $\hat{\beta}_{Q}$.
To avoid this technical difficulty, \citet{Horowitz1998} proposes to smooth
the objective function to obtain 
\begin{equation*}
\hat{\beta}_{H}=\mathop{\rm arg\,min}_{\beta }\frac{1}{n}\sum_{j=1}^{n}\rho
_{q}^{H}(Y_{j}-X_{j}^{\prime }\beta ),\quad \rho _{q}^{H}(u)\equiv \left[
q-G\left( -u/h\right) \right] u,
\end{equation*}%
where $G(\cdot )$ is a smooth function and $h$ is the smoothing parameter or
bandwidth. Instead of smoothing the objective function, we smooth the
underlying moment condition and define $\hat{\beta}$ to be the solution of
the vector of smoothed estimating equations (SEE) $m_{n}(\hat{\beta})=0$,
where\footnote{%
It suffices to have $m_{n}(\hat{\beta})=o_{p}(1)$, which allows for a small
error when $\hat{\beta}$ is not the exact solution to $m_{n}(\hat{\beta})=0$.%
} 
\begin{equation*}
m_{n}(\beta )\equiv \frac{1}{\sqrt{n}}\sum_{j=1}^{n}W_{j}(\beta )\text{ and }%
W_{j}(\beta )\equiv Z_{j}\left[ G\left( \frac{X_{j}^{\prime }\beta -Y_{j}}{h}%
\right) -q\right] .
\end{equation*}

Our approach is related to kernel-based nonparametric conditional quantile
estimators. The moment condition there is $\E\left[ 1\{X=x\} \left(
1\{Y<\beta \}-q\right) \right] =0$. Usually the $1\{X=x\}$ indicator
function is ``smoothed'' with a kernel, while the latter term is not. This
yields the nonparametric conditional quantile estimator $\hat{\beta}_{q}(x)=%
\mathop{\rm arg\,min}_{b}\sum_{i=1}^{n}\rho _{q}(Y_{i}-b)K[(x-X_{i})/h]$ for
the conditional $q$-quantile at $X=x$, estimated with kernel $K(\cdot)$ and
bandwidth $h$. Our approach is different in that we smooth the indicator $%
1\{Y<\beta \}$ rather than $1\{X=x\}$. Smoothing both terms may help but is
beyond the scope of this paper.

Estimating $\hat{\beta}$ from the SEE is computationally easy: $d$ equations
for $d$ parameters, and a known, analytic Jacobian. Computationally, solving
our problem is faster and more reliable than the IV-QR method in %
\citet{ChernozhukovHansen2006}, which requires specification of a grid of
endogenous coefficient values to search over, computing a conventional QR
estimator for each grid point. This advantage is important particularly when
there are more endogenous variables.

If the model is overidentified with $\dim (Z_{j})>\dim (X_{j})$, we can use
a $\dim (X_{j})\times \dim (Z_{j})$ matrix $\mathbb{W}$ to transform the
original moment conditions $\E\left[Z_{j}\left(q-1\left\{Y_{j}<X_{j}^{\prime }\beta \right\}\right)\right]=0$
into 
\begin{equation}  \label{eqn:EE-overID}
\E\left[ \tilde{Z}_{j}\left( q-1\left\{Y_{j}<X_{j}^{\prime }\beta \right\} \right)
\right] =0,\text{ for }\tilde{Z}_{j}=\mathbb{W}Z_{j}\in \mathbb{R}^{\dim
(X_{j})}.
\end{equation}%
Then we have an exactly identified model with transformed instrument vector $%
\tilde{Z}_{j}$, and our asymptotic analysis can be applied to %
\eqref{eqn:EE-overID}.

By the theory of optimal estimating equations or efficient two-step GMM, the
optimal $\mathbb{W}$ takes the following form:%
\begin{align*}
\mathbb{W} &= \left. \frac{\partial }{\partial \beta }\E\left[ Z^{\prime
}\left( q-1\left\{Y<X^{\prime }\beta \right\} \right) \right] \right \vert _{\beta
=\beta _{0}}\mathrm{Var}\left[Z\left(q-1\{Y<X^{\prime }\beta _{0}\} \right) \right]^{-1}
\\
&= \E \left[XZ^{\prime }f_{U|Z,X}(0)\right] \left\{ \E\left[ZZ^{\prime }\sigma
^{2}\left( Z\right) \right]\right\} ^{-1} ,
\end{align*}%
where $f_{U|Z,X}(0)$ is the conditional PDF of $U$ evaluated at $U=0$ given $%
\left( Z,X\right) $ and $\sigma ^{2}\left( Z\right) =\mathrm{Var}%
\left(1\left \{ U<0\right \} \mid Z\right)$. The standard two-step approach
requires an initial estimator of $\beta _{0}$ and nonparametric estimators
of $f_{U|Z,X}(0)$ and $\sigma ^{2}\left( Z\right)$. The underlying
nonparametric estimation error may outweigh the benefit of having an optimal
weighting matrix. This is especially a concern when the dimensions of $X$
and $Z$ are large.  The problem is similar to what \citet{HwangSun2015} consider in a time series GMM framework where the optimal weighting matrix
is estimated using a nonparametric HAC approach.  Under the alternative and
more accurate asymptotics that captures the estimation error of the
weighting matrix, they show that the conventionally optimal two-step
approach does not necessarily outperform a first-step approach that does not
employ a nonparametric weighting matrix estimator. While we expect a similar
qualitative message here, we leave a rigorous analysis to future research. 

In practice, a simple procedure is to ignore $%
f_{U|Z,X}(0) $ and $\sigma ^{2}\left( Z\right) $ (or assume that they are
constants) and employ the following empirical weighting matrix, 
\begin{equation*}
\mathbb{W}_{n} = \left( \frac{1}{n}\sum_{j=1}^{n}X_{j}Z_{j}^{\prime }\right) %
\left( \frac{1}{n}\sum_{j=1}^{n}Z_{j}Z_{j}^{\prime }\right) ^{-1}.
\end{equation*}%
This choice of $\mathbb{W}_{n}$ is in the spirit of the influential work of %
\citet{LiangZeger1986} who advocate the use of a working correlation matrix
in constructing the weighting matrix. Given the above choice of $\mathbb{W}%
_{n}$, $\tilde{Z}_{j}$ is the least squares projection of $X_{j}$ on $Z_{j}$%
. It is easy to show that with some notational changes our asymptotic
results remain valid in this case. 

% k REVIEW YIXIAO'S ADDITIONS STARTING HERE THRU END OF SUBSECTION
An example of an overidentified model is the conditional moment
model 
\begin{equation*}
\E\left[ \left( 1\{U_{j}<0\}-q\right) \mid Z_{j}\right] =0.
\end{equation*}%
In this case, any measurable function of $Z_{j}$ can be
used as an instrument. As a result, the model could be overidentified.
According to \citet{Chamberlain1987} and \citet{Newey1990}, the optimal set of instruments in our setting is given by
\begin{equation*}
\left. \left[ \frac{\partial }{\partial \beta }\E\left( 1\{Y_{j}-X_{j}^{%
\prime }\beta <0\}\mid Z_{j}\right) \right] \right \vert _{\beta =\beta _{0}}.
\end{equation*}%
Let $F_{U|Z,X}\left( u\mid z,x\right) $ and $%
f_{U|Z,X}\left( u\mid z,x\right) $ be the conditional
distribution function and density function of $U$ given $\left(
Z,X\right) =(z,x)$.  Then under some regularity conditions,
% k CHECK FROM CHAMBERLAIN 1987
\begin{align*}
\left. \left[ \frac{\partial }{\partial \beta }\E\left( 1\{Y_{j}-X_{j}^{%
\prime }\beta <0\}\mid Z_{j}\right) \right] \right \vert _{\beta =\beta _{0}}
  &= \left. \left\{ \frac{\partial }{\partial \beta }\E\left[ \E\left(
1\{Y_{j}-X_{j}^{\prime }\beta <0\} \mid Z_{j},X_{j}\right) \mathrel{\big|} Z_{j}\right]
\right\} \right \vert _{\beta =\beta _{0}} \\
  &= \E\left\{ \left. \left[\frac{\partial }{\partial \beta }%
F_{U_{j}|Z_{j},X_{j}}\left( X_{j}\left( \beta -\beta _{0}\right) \mid
Z_{j},X_{j}\right)\right]\right\vert _{\beta =\beta_{0}} \mathrel{\bigg|} Z_{j}\right\}  \\
  &= \E\left[  f_{U_{j}|Z_{j},X_{j}}(0\mid Z_{j},X_{j})X_{j} \mathrel{\big|} Z_{j}\right] .
\end{align*}

The optimal instruments involve the conditional density 
$f_{U|Z,X}\left( u\mid z,x\right) $ and a conditional expectation. In
principle, these objects can be estimated nonparametrically. However, the
nonparametric estimation uncertainty can be very high,
adversely affecting the reliability of inference. A simple and practical
strategy\footnote{We are not alone in recommending this simple strategy for empirical work. 
\citet{ChernozhukovHansen2006} make the same recommendation in their Remark 5 and use this strategy in
their empirical application. See also \citet{Kwak2010}.} 
is to construct the optimal instruments as the OLS projection of each $%
X_{j}$ onto some sieve basis functions $\Phi ^{K}\left(
Z_{j}\right) \equiv \left[ \Phi
_{1}(Z_{j}),\ldots,\Phi _{K}(Z_{j})\right] ^{\prime }$, leading to
\begin{equation*}
\tilde{Z}_{j}=\left[ \frac{1}{n}\sum_{j=1}^{n}X_{j}\Phi ^{K}\left(
Z_{j}\right) ^{\prime }\right] \left[ \frac{1}{n}\sum_{j=1}^{n}\Phi
^{K}\left( Z_{j}\right) \Phi ^{K}\left( Z_{j}\right) ^{\prime }\right]
^{-1}\Phi ^{K}\left( Z_{j}\right) \in \mathbb{R}^{\dim (X_{j})}
\end{equation*}%
as the instruments. Here $\left \{ \Phi _{i}\left( \cdot \right)
\right \} $ are the basis functions such as power functions. Since
the dimension of $\tilde{Z}_{j}$ is the same as the dimension of $%
X_{j}$, our asymptotic analysis can be applied for any fixed value
of $K$.\footnote{A theoretically efficient estimator can be obtained using the sieve minimum
distance approach. It entails first estimating the conditional expectation $\E%
\left[ \left( 1\{Y_{j}<X_{j}\beta \}-q\right) \mid Z_{j}\right] $ using $\Phi
^{K}\left( Z_{j}\right) $ as the basis functions and then choosing $\beta $
to minimize a weighted sum of squared conditional expectations. See, for
example, \citet{ChenPouzo2009,ChenPouzo2012}. To achieve the semiparametric
efficiency bound, $K$ has to grow with the sample size at an appropriate
rate. In work in progress, we consider nonparametric quantile regression
with endogeneity and allow $K$ to diverge, which is necessary for both
identification and efficiency. Here we are content with a fixed $K$ for
empirical convenience at the cost of possible efficiency loss.}

\subsection{Comparison with other estimators\label{sec:SEE-comp}}

%\paragraph{\textit{Smoothed criterion function}}

\subsubsection*{Smoothed criterion function}

For the special case $Z_{j}=X_{j}$, we compare the SEE with the estimating equations derived
from smoothing the criterion function as in \citet{Horowitz1998}. The first
order condition of the smoothed criterion function, evaluated at the true $%
\beta _{0}$, is 
\begin{align}
\notag
0& =\left. \frac{\partial }{\partial \beta }\right \vert _{\beta =\beta
_{0}}n^{-1}\sum_{i=1}^{n}\left[ q-G\left( \frac{X_{i}^{\prime }\beta -Y_{i}}{%
h}\right) \right] (Y_{i}-X_{i}^{\prime }\beta ) \\
\notag
& =n^{-1}\sum_{i=1}^{n}\Big[-qX_{i}-G^{\prime
}(-U_{i}/h)(X_{i}/h)Y_{i}+G^{\prime }(-U_{i}/h)(X_{i}/h)X_{i}^{\prime }\beta
_{0}+G(-U_{i}/h)X_{i}\Big] \\
\notag
& =n^{-1}\sum_{i=1}^{n}X_{i}\left[ G(-U_{i}/h)-q\right] +n^{-1}%
\sum_{i=1}^{n}G^{\prime }(-U_{i}/h)\left[ (X_{i}/h)X_{i}^{\prime }\beta
_{0}-(X_{i}/h)Y_{i}\right] \\
\label{eqn:SCF-EE}
& =n^{-1}\sum_{i=1}^{n}X_{i}\left[ G(-U_{i}/h)-q\right] +n^{-1}%
\sum_{i=1}^{n}(1/h)G^{\prime }(-U_{i}/h)[-X_{i}U_{i}]. 
\end{align}%
% Here t
The first term agrees with our proposed SEE. Technically, it should be
easier to establish high-order results for our SEE estimator since it has
one fewer term. Later we show that the absolute bias of our SEE estimator is
smaller, too. Another subtle point is that our SEE requires only the
estimating equation $\E\left[X_{j}\left( 1\{U_{j}<0\}-q\right)\right] =0$, whereas %
\citet{Horowitz1998} has to impose an additional condition to ensure that
the second term in the FOC is approximately mean zero.

%\paragraph{\textit{IV mean regression}\label{sec:see-comp-IV}}

\subsubsection*{IV mean regression\label{sec:see-comp-IV}}

When $h\rightarrow \infty $, $G(\cdot )$ only takes arguments near zero and
thus can be approximated well linearly. For example, with the $G(\cdot )$
from \citet{Whang2006} and \citet{Horowitz1998}, $G(v)=0.5+(105/64)v+O(v^{3})
$ as $v\rightarrow 0$. Ignoring the $O(v^{3})$, the corresponding estimator $%
\hat{\beta}_{\infty }$ is defined by 
\begin{align*}
0& =\sum_{i=1}^{n}Z_{i}\left[ G\left( \frac{X_{i}^{\prime }\hat{\beta}%
_{\infty }-Y_{i}}{h}\right) -q\right]  \\
& \doteq \sum_{i=1}^{n}Z_{i}\left[ \left( 0.5+(105/64)\frac{X_{i}^{\prime }%
\hat{\beta}_{\infty }-Y_{i}}{h}\right) -q\right]  \\
& =(105/64h)Z^{\prime }X\hat{\beta}_{\infty }-(105/64h)Z^{\prime
}Y+(0.5-q)Z^{\prime }\mathbf{1}_{n,1} \\
& =(105/64h)Z^{\prime }X\hat{\beta}_{\infty }-(105/64h)Z^{\prime
}Y+(0.5-q)Z^{\prime }(Xe_{1}) ,
\end{align*}%
where $e_{1}=(1,0,\ldots ,0)^{\prime }$ is $d\times 1$, $\mathbf{1}%
_{n,1}=(1,1,\ldots ,1)^{\prime }$ is $n\times 1$, $X$ and $Z$ are $n\times d$
with respective rows $X_{i}^{\prime }$ and $Z_{i}^{\prime }$, and using the
fact that the first column of $X$ is $\mathbf{1}_{n,1}$ so that $Xe_{1}=%
\mathbf{1}_{n,1}$. It then follows that 
\begin{equation*}
\hat{\beta}_{\infty }=\hat{\beta}_{IV}+\left( (64h/105)(q-0.5),0,\ldots
,0\right) ^{\prime }.
\end{equation*}%
As $h$ grows large, the smoothed QR estimator approaches the IV estimator
plus an adjustment to the intercept term that depends on $q$, the bandwidth,
and the slope of $G(\cdot )$ at zero. In the special case $Z_{j}=X_{j}$, the
IV estimator is the OLS estimator.\footnote{%
This is different from \citet{ZhouEtAl2011}, who add the $d$ OLS moment
conditions to the $d$ median regression moment conditions before estimation;
our connection to IV/OLS emerges naturally from smoothing the (IV)QR
estimating equations.}

The intercept is often not of interest, and when $q=0.5$, the adjustment is
zero anyway. The class of SEE estimators is a continuum (indexed by $h$)
with two well-known special cases at the extremes: unsmoothed IV-QR and mean
IV. For $q=0.5$ and $Z_j=X_j$, this is median regression and mean regression
(OLS). Well known are the relative efficiency advantages of 
%unsmoothed QR and OLS 
the median and the mean for different error distributions. Our estimator
with a data-driven bandwidth can harness the advantages of both, without
requiring the practitioner to make guesses about the unknown error
distribution.

%\paragraph{\textit{Robust estimation}}

\subsubsection*{Robust estimation}

With $Z_j=X_j$, the result that our SEE can yield OLS when $h\to \infty$ or
median regression when $h=0$ calls to mind robust estimators like the
trimmed or Winsorized mean (and corresponding regression estimators).
Setting the trimming/Winsorization parameter to zero generates the mean
while the other extreme generates the median. However, our SEE mechanism is
different and more general/flexible; trimming/Winsorization is not directly
applicable to $q\ne0.5$; our method to select the smoothing parameter is
novel; and the motivations for QR extend beyond (though include) robustness.

With $X_{i}=1$ and $q=0.5$ (population median estimation), our SEE becomes 
\begin{equation*}
0=n^{-1}\sum_{i=1}^{n}\left[ 2G\left(\frac{\beta -Y_{i}}{h}\right)-1\right] .
\end{equation*}%
If $G'(u)=1\{-1\le u\le 1\}/2$ (the uniform kernel), then $H(u)\equiv 2G(u)-1=u$ for $u\in[-1,1]$, $H(u)=1$ for $u>1$, and $H(u)=-1$ for $u<-1$.  The SEE is then $0=\sum_{i=1}^{n}\psi\left(Y_i;\beta\right)$ with $\psi\left(Y_i;\beta\right)=H\left(\left(\beta-Y_i\right)/h\right)$.  This produces the Winsorized mean estimator of the type in \citet[example (iii), p.\ 79]{Huber1964}.\footnote{%
For a strict mapping, multiply by $h$ to get $\psi (Y_{i};\beta )=hH[(\beta
-Y_{i})/h]$. The solution is equivalent since $\sum h\psi (Y_{i};\beta )=0$
is the same as $\sum \psi (Y_{i};\beta )=0$ for any nonzero constant $h$.}

% If $G(u)$ is supported on $[-1,1]$, then $H(u)\equiv 2G(u)-1$ takes value $1$
% for $u\geq 1$ and $-1$ for $u\leq -1$. Our estimator is then an M-estimator
% of $\psi $-type defined by $\sum_{i=1}^{n}\psi (Y_{i};\beta )=0$ where $\psi
% (Y_{i};\beta )=H[(\beta -Y_{i})/h]$. If $H(u)$ is piecewise linear with $%
% H(u)=u$ for $u\in \lbrack -1,1]$, then we have a Winsorized mean estimator
% of the type in \citet[example (iii) on page 79]{Huber1964}.\footnote{%
% For a strict mapping, multiply by $h$ to get $\psi (Y_{i};\beta )=hH[(\beta
% -Y_{i})/h]$. The solution is equivalent since $\sum h\psi (Y_{i};\beta )=0$
% is the same as $\sum \psi (Y_{i};\beta )=0$ for any nonzero constant $h$.}
% In our framework, this is equivalent to choosing $G^{\prime }(\cdot )$ to be
% the uniform kernel.

%\citet[page 80]{Huber1964} claims, ```Winsorizing' avoids these shortcomings [of the sample mean and median] and seems to be practically foolproof.  Apparently, this is connected with the fact that the corresponding $\psi$ is monotone, bounded and absolutely continuous.''  Trimming, he says in contrast, ``is rather sensitive to the behavior of $F$ [the underlying distribution] at the rejection points $\pm k$.''

Further theoretical comparison of our SEE-QR with trimmed/Winsorized mean
regression (and the IV versions) would be interesting but is beyond the
scope of this paper. For more on robust location and regression estimators,
see for example \citet{Huber1964}, \citet{KoenkerBassett1978}, and %
\citet{RuppertCarroll1980}.

\section{MSE of the SEE\label{sec:mse}}

Since statistical inference can be made based on the estimating equations
(EEs), we examine the mean squared error (MSE) of the SEE. 
An advantage of using EEs directly is that inference can be made robust to the
strength of identification. Our focus on the EEs is also in the same
spirit of the large literature on optimal estimating equations. For the
historical developments of EEs and their applications in econometrics, see %
\citet{BeraEtAl2006}. The MSE of the SEE is also related to the estimator MSE
and inference properties both intuitively and (as we will show) theoretically. Such results may provide
helpful guidance in contexts where the SEE MSE is easier to compute than the
estimator MSE, and it provides insight into how smoothing works in the QR
model as well as results that will be used in subsequent sections.

We maintain different subsets of the following assumptions for different
results. 
%They are stated together and in this order to parallel Assumptions 1--6 in both \citet{Horowitz1998} and \citet{Whang2006}. 
We write $f_{U|Z}(\cdot \mid z)$ and $F_{U|Z}(\cdot \mid z)$ as the
conditional PDF and CDF of $U$ given $Z=z$. We define $f_{U|Z,X}(\cdot \mid
z,x)$ and $F_{U|Z,X}(\cdot \mid z,x) $ similarly.

\begin{assumption}
\label{a:sampling} $(X_{j}^{\prime },Z_{j}^{\prime },Y_{j})$ is iid across $%
j=1,2,\ldots ,n$, where $Y_{j}=X_{j}^{\prime }\beta _{0}+U_{j}$, $X_{j}$ is
an observed $d\times 1$ vector of stochastic regressors that can include a
constant, $\beta _{0}$ is an unknown $d\times 1$ constant vector, $U_{j}$ is
an unobserved random scalar, and $Z_{j}$ is an observed $d\times 1$ vector
of instruments such that $\E\left[Z_{j}\left( 1\{U_{j}<0\}-q\right)\right] =0$.
\end{assumption}

\begin{assumption}
\label{a:rank} (i) $Z_{j}$ has bounded support. (ii) $\E\left(
Z_{j}Z_{j}^{\prime }\right) $ is nonsingular.
\end{assumption}

\begin{assumption}
\label{a:fUZ}(i) $P(U_{j}<0\mid Z_{j}=z)=q$ for almost all $z\in \mathcal{Z}$%
, the support of $Z$. (ii) For all $u$ in a neighborhood of zero and almost
all $z\in \mathcal{Z}$, $f_{U|Z}(u\mid z)$ exists, is bounded away from
zero, and is $r$ times continuously differentiable with $r\geq 2$. (iii)
There exists a function $C(z)$ such that $\left \vert f_{U|Z}^{(s)}(u\mid
z)\right
\vert \leq C(z)$ for $s=0,2,\ldots ,r$, almost all $z\in \mathcal{Z%
}$ and $u$ in a neighborhood of zero, and $\E\left[ C(Z)\left
\Vert
Z\right
\Vert ^{2}\right] <\infty $.
\end{assumption}

\begin{assumption}
\label{a:G} (i) $G(v)$ is a bounded function satisfying $G(v)=0$ for $%
v\leq-1 $, $G(v)=1$ for $v\geq 1$, and $1-\int_{-1}^{1}G^{2}(u)du>0$. (ii) $%
G^{\prime }(\cdot )$ is a symmetric and bounded $r$th order kernel with $%
r\geq 2$ so that $\int_{-1}^{1}G^{\prime }(v)dv=1$, $\int_{-1}^{1}v^{k}G^{%
\prime }(v)dv=0$ for $k=1,2,\ldots ,r-1$, $\int_{-1}^{1}\left \vert
v^{r}G^{\prime }(v)\right \vert dv<\infty $, and $\int_{-1}^{1}v^{r}G^{%
\prime }(v)dv\neq 0$. (iii) Let $\tilde{G}(u)=\left( G(u),[G(u)]^{2},\ldots
,[G(u)]^{L+1}\right) ^{\prime }$ for some $L\geq 1$. For any $\theta \in 
\mathbb{R}^{L+1}$ satisfying $\left \Vert \theta \right \Vert =1$, there is
a partition of $[-1,1]$ given by $-1=a_{0}<a_{1}<\cdots <a_{\tilde L}=1$ for some finite $\tilde L$ such
that $\theta ^{\prime }\tilde{G}(u)$ is either strictly positive or strictly
negative on the intervals $(a_{i-1},a_{i})$ for $i=1,2,\ldots ,\tilde L$.
\end{assumption}

\begin{assumption}
\label{a:h} $h\propto n^{-\kappa }$ for $1/\left( 2r\right) <\kappa <1$. 
% where $r\geq 2$.
\end{assumption}

\begin{assumption}
\label{a:beta} $\beta=\beta_0$ uniquely solves $\E\left[ Z_{j}\left(
q-1\{Y_{j}<X_{j}^{\prime }\beta \} \right) \right] =0$ over $\beta \in 
\mathcal{B}$.
\end{assumption}

\begin{assumption}
\label{a:power_mse}(i) $f_{U|Z,X}(u\mid z,x)$ is $r$ times continuously
differentiable in $u$ in a neighborhood of zero for almost all $x\in 
\mathcal{X}$ and $z\in \mathcal{Z}$ for $r>2$. (ii) $\Sigma _{ZX}\equiv \E%
\left[ Z_{j}X_{j}^{\prime }f_{U|Z,X}(0\mid Z_{j},X_{j})\right] $ is
nonsingular.
\end{assumption}

%
%\begin{remark}

Assumption \ref{a:sampling} describes the sampling process. Assumption \ref%
{a:rank} is analogous to Assumption 3 in both \citet{Horowitz1998} and %
\citet{Whang2006}. As discussed in these two papers, the boundedness
assumption for $Z_{j}$, which is a technical condition, is made only for
convenience and can be dropped at the cost of more complicated proofs.

Assumption \ref{a:fUZ}(i) allows us to use the law of iterated expectations
to simplify the asymptotic variance. Our qualitative conclusions do not rely
on this assumption. Assumption \ref{a:fUZ}(ii) is critical. If we are not
willing to make such an assumption, then smoothing will be of no benefit.
Inversely, with some small degree of smoothness of the conditional error
density, smoothing can leverage this into the advantages described here.
Also note that \citet{Horowitz1998} assumes $r\geq 4$, which is sufficient
for the estimator MSE result in Section \ref{sec:est}.

%Assumption \ref{a:rank}(i) is the usual rank assumption on $%
%E(Z_jX_j^{\prime})$ needed for identification in a just-identified IV setup.
%
Assumptions \ref{a:G}(i--ii) are analogous to the standard high-order kernel
conditions in the kernel smoothing literature. The integral condition in (i)
ensures that smoothing reduces (rather than increases) variance.  Note that
\begin{align*}
1-\int_{-1}^{1}G^{2}(u)du& =2\int_{-1}^{1}uG(u)G^{\prime }(u)du \\
& =2\int_{0}^{1}uG(u)G^{\prime }(u)du+2\int_{-1}^{0}uG(u)G^{\prime }(u)du \\
& =2\int_{0}^{1}uG(u)G^{\prime }(u)du-2\int_{0}^{1}vG(-v)G^{\prime }(-v)dv \\
& =2\int_{0}^{1}uG^{\prime }(u)\left[ G(u)-G(-u)\right] du,
\end{align*}
using the evenness of $G'(u)$.  When $r=2$, we can use any $G(u)$ such that $G'(u)$ is a symmetric PDF on $[-1,1]$.  %That is, we can take $G'(u)$ to be any symmetric PDF on $[-1,1]$. 
In this case, $1-\int_{-1}^1 G^2(u)du>0$ holds automatically. %for $r=2$ given (ii).  
When $r>2$, $G'(u)<0$ for some $u$, and $G(u)$ is not monotonic.  It is not easy to sign $1-\int_{-1}^1 G^2(u)du$ generally, but it is simple to calculate this quantity for any chosen $G(\cdot)$.  For example, consider $r=4$ and the $G(\cdot)$ function in \citet{Horowitz1998} and \citet{Whang2006} shown in Figure \ref{fig:G}:
\begin{equation}
G(u)=\left \{ 
\begin{array}{ll}
0, & u\leq -1 \\ 
0.5+\frac{105}{64}\left( u-\frac{5}{3}u^{3}+\frac{7}{5}u^{5}-\frac{3}{7}%
u^{7}\right) , & u\in \lbrack -1,1] \\ 
1 & u\geq 1%
\end{array}%
\right.   \label{eqn:G_fun}
\end{equation}%
The range of the function is outside $[0,1]$.  Simple calculations show that $1-\int_{-1}^1 G^2(u)du>0$. 

\begin{figure}[htbp]
\begin{center}
% \fbox{%
\includegraphics[width=0.7\textwidth,clip=true,trim=35 35 20 70]{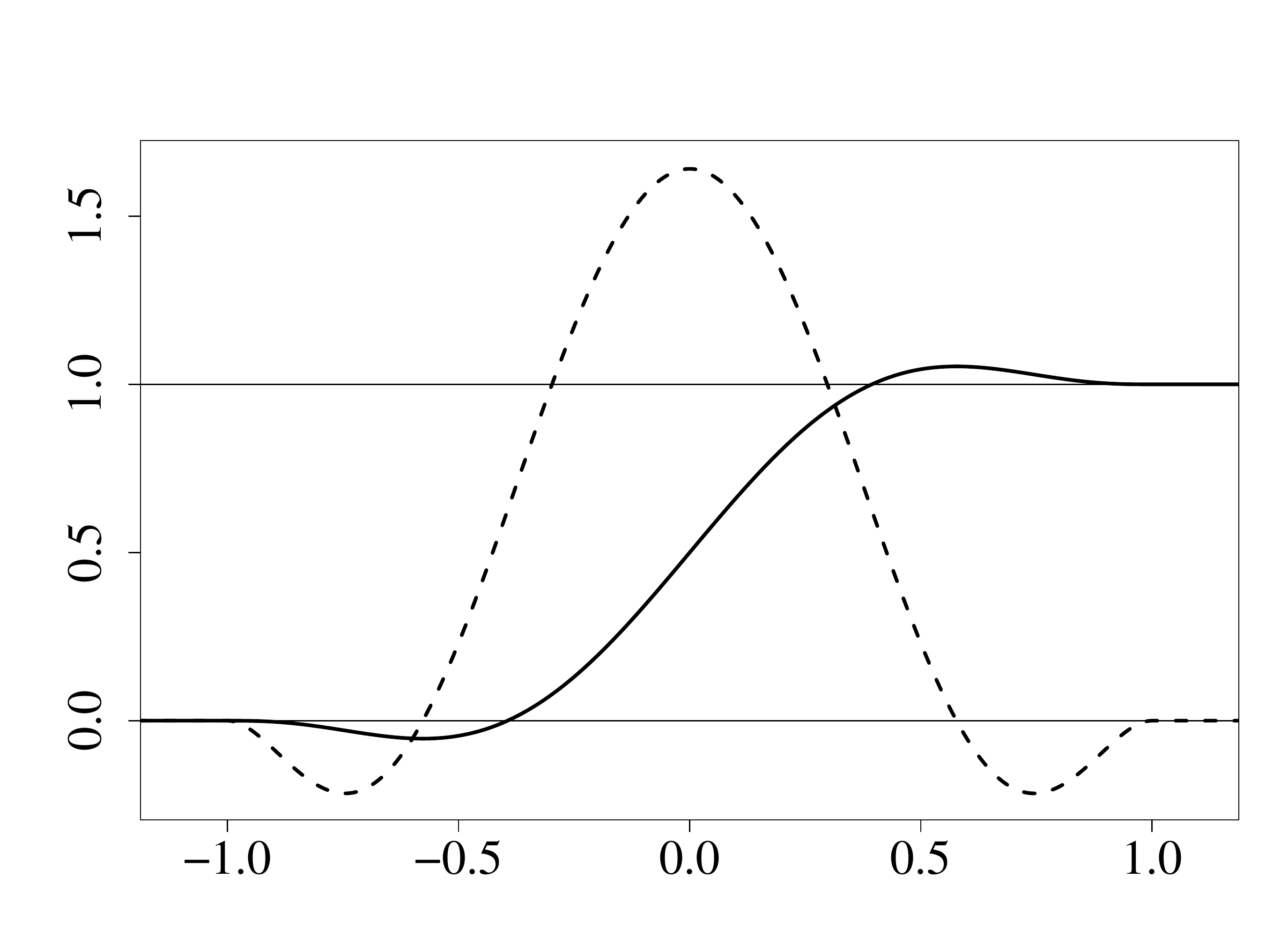}%
% }
\end{center}
\caption{Graph of $G(u)=0.5+\frac{105}{64}\left( u-\frac{5}{3}u^{3}+\frac{7}{%
5}u^{5}-\frac{3}{7}u^{7}\right) $ (solid line) and its derivative (broken).}
\label{fig:G}
\end{figure}

% ; it is easy
% to verify and holds automatically for $r=2$ given (ii). 
Assumption \ref{a:G}(iii) is needed for the Edgeworth expansion. As \citet{Horowitz1998} and %
\citet{Whang2006} discuss, Assumption \ref{a:G}(iii) is a technical
assumption that (along with Assumption \ref{a:h}) leads to a form of Cram\'{e%
}r's condition, which is needed to justify the Edgeworth expansion used in 
Section \ref{sec:eI}. Any $G(u)$ constructed by integrating polynomial kernels in \citet{Muller1984} satisfies Assumption \ref{a:G}(iii).  In fact, $G(u)$ in \eqref{eqn:G_fun} is obtained by integrating a fourth-order kernel given in Table 1 of \citet{Muller1984}. 
Assumption \ref{a:h} ensures that the bias of the SEE is of
smaller order than its variance. It is needed for the asymptotic normality
of the SEE as well as the Edgeworth expansion.  

Assumption \ref{a:beta} is an identification assumption. See Theorem 2 of %
\citet{ChernozhukovHansen2006} for more primitive conditions. It ensures the
consistency of the SEE estimator. Assumption \ref{a:power_mse} is necessary
for the $\sqrt{n}$-consistency and asymptotic normality of the SEE estimator.

Define%
\begin{equation*}
W_{j}\equiv W_{j}(\beta _{0})=Z_{j}\left[ G(-U_{j}/h)-q\right]
\end{equation*}%
and abbreviate $m_{n}\equiv m_{n}(\beta _{0})=n^{-1/2}\sum_{j=1}^{n}W_{j}$.
The theorem below gives the first two moments of $W_{j}$ and the first-order
asymptotic distribution of $m_{n}$.

\begin{theorem}
\label{thm:Wj} Let Assumptions \ref{a:rank}(i), \ref{a:fUZ}, and \ref{a:G}%
(i--ii) hold. Then 
\begin{align}
\E(W_{j})& =\frac{(-h)^{r}}{r!}\left[ \int_{-1}^{1}G^{\prime
}(v)v^{r}dv\right] \E\left[ f_{U|Z}^{(r-1)}(0\mid Z_{j})Z_{j}\right] +o\left(
h^{r}\right) ,  \label{eqn:see-bias} \\
\E(W_{j}^{\prime }W_{j})
  &= q(1-q)\E\left(Z_{j}^{\prime }Z_{j}\right)
    -h\left[1-\int_{-1}^{1}G^{2}(u)du\right] \E\left[f_{U|Z}(0\mid Z_{j})Z_{j}^{\prime
}Z_{j}\right]+O(h^{2}),  \label{eqn:W-var-trace} \\
\E(W_{j}W_{j}^{\prime })& =q(1-q)\E\left(Z_{j}Z_{j}^{\prime }\right)-h\left[
1-\int_{-1}^{1}G^{2}(u)du\right] \E\left[f_{U|Z}(0\mid Z_{j})Z_{j}Z_{j}^{\prime
}\right]+O(h^{2}).  \notag
\end{align}%
If additionally Assumptions \ref{a:sampling} and \ref{a:h} hold, then 
\begin{equation*}
m_{n}\overset{d}{\rightarrow }N(0,V),\quad V\equiv \lim_{n\rightarrow \infty
}\E\left\{ \left[W_{j}-\E(W_{j})\right]\left[W_{j}-\E(W_{j})\right]^{\prime }\right\}
=q(1-q)\E\left(Z_{j}Z_{j}^{\prime }\right).
\end{equation*}
\end{theorem}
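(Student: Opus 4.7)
The plan is to compute all quantities by first conditioning on $Z_j$ to isolate the smoothing integral, then Taylor-expand the resulting CDF using the smoothness assumption in \ref{a:fUZ}(ii), and finally apply a Lindeberg--Feller triangular-array CLT for the last claim.

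For the bias \eqref{eqn:see-bias}, I would write
\[
\E[G(-U_j/h)\mid Z_j]=\int G(-u/h)f_{U\mid Z}(u\mid Z_j)\,du
\]
and integrate by parts. Because $G(v)=0$ for $v\le-1$ and $G(v)=1$ for $v\ge 1$, the boundary terms vanish once the change of variables $v=-u/h$ is applied, yielding $\int_{-1}^{1}F_{U\mid Z}(-hv\mid Z_j)G'(v)\,dv$. A Taylor expansion of $F_{U\mid Z}(\cdot\mid Z_j)$ up to order $r$ around $0$, together with $F_{U\mid Z}(0\mid Z_j)=q$ (Assumption \ref{a:fUZ}(i)), the kernel normalization $\int G'=1$, and the moment conditions $\int v^k G'(v)dv=0$ for $k=1,\dots,r-1$ in Assumption \ref{a:G}(ii), collapses the expansion to the single surviving term $\frac{(-h)^r}{r!}f^{(r-1)}_{U\mid Z}(0\mid Z_j)\int v^r G'(v)dv$. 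The Taylor remainder is $o(h^r)$ uniformly in $z$ by Assumption \ref{a:fUZ}(iii), which also validates pulling the $o(h^r)$ outside the expectation against $Z_j$; taking expectations with respect to $Z_j$ gives \eqref{eqn:see-bias}.

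For the second-moment expansions, the same trick applies with $G$ replaced by $G^2$. Integration by parts (using $2GG'=(G^2)'$ and the boundary values $G^2(\pm1)=1,0$) gives
\[
\E[G^{2}(-U_j/h)\mid Z_j]=2\int_{-1}^{1} G(v)G'(v)F_{U\mid Z}(-hv\mid Z_j)\,dv .
\]
A two-term Taylor expansion of $F_{U\mid Z}(-hv\mid Z_j)$ around $0$ produces $q$ at leading order (since $\int 2GG'=1$) and, at the $h$ order, a term proportional to $\int_{-1}^{1} v\,G(v)G'(v)dv$; a further integration by parts turns this into $\tfrac12[1-\int_{-1}^{1}G^{2}(u)du]$, which is positive by Assumption \ref{a:G}(i). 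Combining with the already-established $\E[G(-U_j/h)\mid Z_j]=q+O(h^r)$ through the identity $[G-q]^{2}=G^{2}-2qG+q^{2}$ yields
\[
\E\{[G(-U_j/h)-q]^{2}\mid Z_j\}=q(1-q)-h\bigl[1-\tstyle\int_{-1}^{1}G^{2}\bigr]f_{U\mid Z}(0\mid Z_j)+O(h^{2}).
\]
Multiplying by $Z_j'Z_j$ (respectively $Z_jZ_j'$) and taking expectations---bounded by Assumption \ref{a:rank}(i) and the dominating function in Assumption \ref{a:fUZ}(iii)---delivers \eqref{eqn:W-var-trace} and its matrix analogue.

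Finally, for asymptotic normality of $m_n$, the $W_j$ form a triangular array (through $h=h_n$) of iid random vectors. The bias contribution is $\sqrt{n}\,\E(W_j)=O(n^{1/2-r\kappa})=o(1)$ by Assumption \ref{a:h} (since $\kappa>1/(2r)$), and $\Var(W_j)=\E(W_jW_j')-\E(W_j)\E(W_j)'\to q(1-q)\E(Z_jZ_j')=V$ by the matrix expansion just derived (the outer-product correction is $O(h^{2r})$). Because $Z_j$ is bounded and $G$ is bounded, $\|W_j\|$ is uniformly bounded, so any Lyapunov condition (say with a fourth moment) is trivial; Lindeberg--Feller then yields $m_n\xrightarrow{d} N(0,V)$. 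The main obstacle is the bias computation: getting the $h^r$ order correctly requires carefully pairing integration by parts with the vanishing kernel moments and a uniform-in-$z$ Taylor remainder, and the second-moment piece requires the non-obvious identity linking $\int vGG'$ to $1-\int G^{2}$; the CLT step is routine given the bounded-support assumptions.
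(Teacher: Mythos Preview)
Your proposal is correct and follows essentially the same route as the paper: condition on $Z_j$, integrate by parts to bring in $F_{U\mid Z}$, change variables, Taylor-expand, and exploit the vanishing kernel moments; then apply a triangular-array CLT. The only cosmetic differences are that the paper integrates $[G(-u/h)-q]^2$ by parts directly (picking up a $q^2$ boundary term) rather than handling $G^2$ separately and recombining via $[G-q]^2=G^2-2qG+q^2$, and it verifies Lindeberg by dominated convergence rather than invoking Lyapunov from the uniform bound $\|W_j\|\le C$ that boundedness of $Z_j$ and $G$ gives you---your shortcut there is arguably cleaner.
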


Compared with the EE derived from smoothing the criterion function as in \citet{Horowitz1998}, our SEE has smaller bias and variance, and these differences affect the bias and variance of the parameter estimator. 
The former approach only applies to exogenous QR with $Z_j=X_j$. 
The EE derived from smoothing the criterion function in \eqref{eqn:SCF-EE} for $Z_j=X_j$ can be written
\begin{align}\label{eqn:SCF-EE-Wj}
0 &= n^{-1}\sum_{j=1}^n W_j, \quad
W_j \equiv X_j\left[G(-U_j/h)-q\right] + (1/h)G'(-U_j/h)(-X_jU_j) .
\end{align}
Consequently, as calculated in the appendix,
\begin{align}\label{eqn:SCF-EWj}
\E(W_j) 
  & =(r+1)\frac{(-h)^r}{r!} \left[
\int G^{\prime }(v)v^{r}dv\right] \E\left[ f_{U|Z}^{(r-1)}(0\mid Z_{j})Z_{j}%
\right] +o\left( h^{r}\right) , \\
\E(W_jW_j') \label{eqn:SCF-EWjWj}
  &= q(1-q)\E(X_jX_j')
     +h \int_{-1}^1[G'(v)v]^2dv \, \E\left[f_{U|X}(0\mid X_j)X_jX_j'\right]
     +O(h^2) , \\
\E & \left[\D{}{\beta'} n^{-1/2}m_n(\beta_0)\right] \label{eqn:SCF-EdBmn}
   = \E\left[f_{U|X}(0\mid X_j)X_jX_j'\right] 
    -h \E\left[f_{U|X}'(0\mid X_j)X_jX_j'\right] 
    +O(h^2) .
\end{align}
The dominating term of the bias of our SEE in \eqref{eqn:see-bias} is $r+1$ times smaller in absolute value than that of the EE derived from a smoothed criterion function in \eqref{eqn:SCF-EWj}.  %With $r=4$, the SEE bias term is five times smaller. 
A larger bias can lead to less accurate confidence regions if the same
variance estimator is used. 
Additionally, the smoothed criterion function analog of $\E(W_jW_j')$ in \eqref{eqn:SCF-EWjWj} has a positive $O(h)$ term instead of the negative $O(h)$ term for SEE.  
The connection between these terms and the estimator's asymptotic mean squared error (AMSE) is shown in Section \ref{sec:est} to rely on the inverse of the matrix in equation \eqref{eqn:SCF-EdBmn}.  Here, though, the sign of the $O(h)$ term is indeterminant since it depends on a PDF derivative.  (A negative $O(h)$ term implies higher AMSE since this matrix is inverted in the AMSE expression, and positive implies lower.)  If $U=0$ is a mode of the conditional (on $X$) distribution, then the $O(h)$ term is zero and the AMSE comparison is driven by $\E(W_j)$ and $\E(W_jW_j')$.  Since SEE yields smaller $\E(W_jW_j')$ and smaller absolute $\E(W_j)$, it will have smaller estimator AMSE in such cases.  Simulation results in Section \ref{sec:sim} add evidence that the SEE estimator usually has smaller MSE in practice. 

The first-order asymptotic variance $V$ is the same as the asymptotic
variance of %\linebreak
\begin{equation*}
n^{-1/2}\sum_{j=1}^n Z_j \left(1\{U_j<0\}-q\right) ,
\end{equation*}
the scaled EE of the unsmoothed IV-QR. The effect of smoothing to reduce
variance is captured by the term of order $h$, where $1-%
\int_{-1}^{1}G^2(u)du>0$ by Assumption \ref{a:G}(i). This reduction in
variance is not surprising. Replacing the discontinuous indicator function $%
1\{U<0\}$ by a smooth function $G(-U/h)$ pushes the dichotomous values of
zero and one into some values in between, leading to a smaller variance. The
idea is similar to \citeauthor{Breiman1994}'s (\citeyear{Breiman1994})
bagging (bootstrap aggregating), among others. 

Define the MSE of the SEE to be $\E\left(m_{n}^{\prime }V^{-1}m_{n}\right)$. Building
upon \eqref{eqn:see-bias} and \eqref{eqn:W-var-trace}, and using $W_{i}%
\mathpalette{\protect \independenT}{\perp}W_{j}$ for $i\neq j$, we have: 
\begin{align}
& \E\left(m_{n}^{\prime }V^{-1}m_{n}\right)  \notag \\
& =\frac{1}{n}\sum_{j=1}^{n}\E\left(W_{j}^{\prime }V^{-1}W_{j}\right)+\frac{1}{n}%
\sum_{j=1}^{n}\sum_{i\neq j}\E\left( W_{i}^{\prime }V^{-1}W_{j}\right)  
\notag \\
& =\frac{1}{n}\sum_{j=1}^{n}\E\left(W_{j}^{\prime }V^{-1}W_{j}\right)+\frac{1}{n}%
n(n-1)\E(W_{j}^{\prime })V^{-1}\E(W_{j})  \notag \\
& =q(1-q)\E\left(Z_{j}^{\prime }V^{-1}Z_{j}\right)+nh^{2r} \E(B)'\E(B) -h\mathrm{tr}\left[ \E\left( AA^{\prime }\right) \right]
+o\left(h+nh^{2r}\right),  \notag \\
& =d+nh^{2r} \E(B)'\E(B) -h\mathrm{tr}\left[
\E\left( AA^{\prime }\right) \right] +o\left(h+nh^{2r}\right),  \label{eqn:mse}
\end{align}%
where 
\begin{align*}
A& \equiv \left[ 1-\int_{-1}^{1}G^{2}(u)du\right] ^{1/2}\left[ f_{U|Z}(0\mid
Z)\right] ^{1/2}V^{-1/2}Z, \\
B& \equiv \left[ \frac{1}{r!}\int_{-1}^{1}G^{\prime }(v)v^{r}dv\right]
f_{U|Z}^{(r-1)}(0\mid Z)V^{-1/2}Z.
\end{align*}

Ignoring the $o(\cdot )$ term, we obtain the asymptotic MSE of the SEE. We
select the smoothing parameter to minimize the asymptotic MSE: 
\begin{equation}
h_{\text{SEE}}^{\ast }
   \equiv \mathop{\rm arg\,min}_{h}
              nh^{2r}\E(B)'\E(B) 
            - h\mathrm{tr}\left[ \E\left(AA'\right) \right] .  \label{eqn:def_h_SEE}
\end{equation}%
The proposition below gives the optimal smoothing parameter $h_{\text{SEE}%
}^{\ast }$.

\begin{proposition}
\label{prop:hSEE} Let Assumptions \ref{a:sampling}, \ref{a:rank}, \ref{a:fUZ}%
, and \ref{a:G}(i--ii) hold. The bandwidth that minimizes the asymptotic MSE
of the SEE is 
\begin{equation*}
h_{\text{SEE}}^{\ast }
   = \left( \frac{\mathrm{tr}\left[ \E\left(AA'\right) \right] }{\E(B)'\E(B) }\frac{1}{2nr}\right) ^{%
\frac{1}{2r-1}}.
\end{equation*}%
Under the stronger assumption $U\mathpalette{\protect \independenT}{\perp} Z$%
, 
\begin{equation*}
h_{\text{SEE}}^{\ast }=\left( \frac{\left( r!\right) ^{2}\left[
1-\int_{-1}^{1}G^{2}(u)du\right] f_{U}(0)}{2r\left[ \int_{-1}^{1}G^{\prime
}(v)v^{r}dv\right] ^{2}\left[ f_{U}^{\left( r-1\right) }(0)\right] ^{2}}%
\frac{d}{n}\right) ^{\frac{1}{2r-1}}.
\end{equation*}
\end{proposition}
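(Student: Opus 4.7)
The plan is to treat \eqref{eqn:def_h_SEE} as a smooth one-dimensional optimization. Writing $\phi(h) \equiv nh^{2r}\,\E(B)'\E(B) - h\,\tr\left[\E(AA')\right]$, I would first note that both coefficients are strictly positive: $\tr\left[\E(AA')\right] > 0$ because $1 - \int_{-1}^{1} G^2(u)\,du > 0$ by Assumption \ref{a:G}(i) and $f_{U|Z}(0 \mid Z) > 0$ by Assumption \ref{a:fUZ}(ii), while $\E(B)'\E(B) > 0$ generically under Assumption \ref{a:rank}(ii). Since $\phi(0) = 0$, $\phi'(0) = -\tr\left[\E(AA')\right] < 0$, and $\phi(h) \to +\infty$ as $h \to \infty$, a unique interior minimizer exists. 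Differentiating, $\phi'(h) = 2rn\,h^{2r-1}\E(B)'\E(B) - \tr\left[\E(AA')\right]$; solving $\phi'(h) = 0$ delivers the first formula directly, and $\phi''(h) > 0$ confirms it is the minimum.

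For the simplification under $U \independent Z$, the conditional density collapses to its marginal, so $f_{U|Z}(0 \mid Z) = f_U(0)$ and $f_{U|Z}^{(r-1)}(0 \mid Z) = f_U^{(r-1)}(0)$. Substituting into the definitions of $A$ and $B$ and using $V = q(1-q)\,\E(ZZ')$ from Theorem \ref{thm:Wj}, I would compute
\begin{align*}
\tr\left[\E(AA')\right] &= \left[1 - \int_{-1}^{1}G^2(u)\,du\right] f_U(0)\,\tr\left[V^{-1}\E(ZZ')\right] = \frac{\left[1 - \int_{-1}^{1}G^2(u)\,du\right] f_U(0)\,d}{q(1-q)}, \\
\E(B)'\E(B) &= \left[\frac{1}{r!}\int_{-1}^{1}G'(v)\,v^r\,dv\right]^2 \left[f_U^{(r-1)}(0)\right]^2 \E(Z)'V^{-1}\E(Z).
\end{align*}
Because the instrument vector contains a constant as its first component (either directly, or via $\tilde Z_j$ being the OLS projection of an $X_j$ that does), $\left[\E(ZZ')\right]^{-1}\E(Z) = e_1$ gives $\E(Z)'V^{-1}\E(Z) = 1/\left[q(1-q)\right]$. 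The $q(1-q)$ factors cancel in the ratio $\tr\left[\E(AA')\right]/\E(B)'\E(B)$, and the surviving $1/(1/r!)^2 = (r!)^2$ in the numerator produces the stated closed form.

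There is no genuinely hard step once Theorem \ref{thm:Wj} is in hand: the argument is univariate calculus plus bookkeeping. The one point warranting care is the cancellation of $q(1-q)$, which relies on the constant-instrument convention; absent that, the derivation still goes through but $d$ in the simplified formula would be replaced by $d/\left[\E(Z)'(\E(ZZ'))^{-1}\E(Z)\right]$.
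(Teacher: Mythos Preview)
Your proposal is correct and follows essentially the same route as the paper. The paper likewise obtains the first formula directly from the first-order condition and, for the simplification under $U\independent Z$, packages the key identity $\E(Z'V^{-1}Z)\big/\bigl[\E(Z)'V^{-1}\E(Z)\bigr]=d$ into a separate lemma whose proof is exactly your constant-first-component argument that $[\E(ZZ')]^{-1}\E(Z)=e_1$; you do the same computation inline with the $q(1-q)$ factors tracked explicitly and then cancelled.
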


When $r=2$, the MSE-optimal $h_{\text{SEE}}^{\ast }\asymp
n^{-1/(2r-1)}=n^{-1/3}$. This is smaller than $n^{-1/5}$, the rate that
minimizes the MSE of estimated standard errors of the usual regression
quantiles. Since nonparametric estimators of $f_{U}^{(r-1)}(0)$ converge
slowly, we propose a parametric plug-in described in Section \ref{sec:sim}.

We point out in passing that the optimal smoothing parameter $h_{\text{SEE}%
}^{\ast }$ is invariant to rotation and translation of the (non-constant)
regressors. This may not be obvious but can be proved easily.

For the unsmoothed IV-QR, let 
\begin{equation*}
\tilde{m}_{n} = \frac{1}{\sqrt{n}}\sum_{j=1}^{n}Z_{j}\left( 1\left \{
Y_{j}\leq X_{j}'\beta \right \} -q\right) ,
\end{equation*}%
then the MSE of the estimating equations is $\E\left(\tilde{m}_{n}^{\prime }V^{-1}%
\tilde{m}_{n}\right)=d$. Comparing this to the MSE of the SEE given in %
\eqref{eqn:mse}, we find that the SEE has a smaller MSE when $h=h_{\text{SEE}%
}^{\ast }$ because 
\begin{equation*}
n(h_{\text{SEE}}^{\ast })^{2r} \E(B)'\E(B)
-h_{\text{SEE}}^{\ast }\mathrm{tr}\left[ \E\left(AA'\right) \right]
=-h_{\text{SEE}}^{\ast }\left( 1-\frac{1}{2r}\right) \mathrm{tr}%
\left[ \E\left(AA'\right) \right] <0.
\end{equation*}%
In terms of MSE, it is advantageous to smooth the estimating equations. To
the best of our knowledge, this point has never been discussed before in the
literature.

\section{Type {I} and Type {II} Errors of a Chi-square Test\label{sec:eI}}

In this section, we explore the effect of smoothing on a chi-square test.
Other alternatives for inference exist, such as the Bernoulli-based
MCMC-computed method from \citet{ChernozhukovEtAl2009}, empirical likelihood
as in \citet{Whang2006}, and bootstrap as in \citet{Horowitz1998}, where the
latter two also use smoothing. Intuitively, when we minimize the MSE, we may
expect lower type I error: the $\chi ^{2}$ critical value is from the
unsmoothed distribution, and smoothing to minimize MSE makes large values
(that cause the test to reject) less likely. 
%We would also expect a more powerful test: it is easier to distinguish the null hypothesis from some given alternative. 
The reduced MSE also makes it easier to distinguish the null hypothesis from
some given alternative. This combination leads to improved size-adjusted
power. As seen in our simulations, this is true especially for the IV case.

Using the results in Section \ref{sec:mse} and under Assumption \ref{a:h}, we
have 
\begin{equation*}
m_{n}^{\prime }V^{-1}m_{n}\overset{d}{\rightarrow }\chi _{d}^{2},
\end{equation*}%
where we continue to use the notation $m_{n}\equiv m_{n}(\beta _{0})$. From
this asymptotic result, we can construct a hypothesis test that rejects the
null hypothesis $H_{0}:\beta =\beta _{0}$ when 
\begin{equation*}
S_{n}\equiv m_{n}^{\prime }\hat{V}^{-1}m_{n}>c_{\alpha },
\end{equation*}%
where 
\begin{equation*}
\hat{V}=q(1-q)\frac{1}{n}\sum_{j=1}^{n}Z_{j}Z_{j}^{\prime }
\end{equation*}%
is a consistent estimator of $V$ and $c_{\alpha }\equiv \chi _{d,1-\alpha
}^{2}$ is the $1-\alpha $ quantile of the chi-square distribution 
%$\chi_d^2$ 
with $d$ degrees of freedom. As desired, the asymptotic size is 
\begin{equation*}
\lim_{n\rightarrow \infty }P\left( S_{n}>c_{\alpha }\right) =\alpha .
\end{equation*}%
Here $P\equiv P_{\beta _{0}}$ is the probability measure under the true
model parameter $\beta _{0}$. We suppress the subscript $\beta _{0}$ when
there is no confusion.

It is important to point out that the above result does not rely on the
strong identification of $\beta _{0}$. It still holds if $\beta _{0}$ is
weakly identified or even unidentified. This is an advantage of focusing on
the estimating equations instead of the parameter estimator. When a direct
inference method based on the asymptotic normality of $\hat{\beta}$ is used,
we have to impose Assumptions \ref{a:beta} and \ref{a:power_mse}.

\subsection{Type {I} error and the associated optimal bandwidth}

To more precisely measure the type I error $P\left( S_{n}>c_{\alpha }\right) 
$, we first develop a high-order stochastic expansion of $S_{n}$. Let $%
V_{n}\equiv \mathrm{Var}\left( m_{n}\right) $. Following the same
calculation as in \eqref{eqn:mse}, we have 
\begin{align*}
V_{n}& =V-h\left[ 1-\int_{-1}^{1}G^{2}(u)du\right] \E\left[f_{U|Z}(0\mid
Z_{j})Z_{j}Z_{j}^{\prime }\right]+O(h^{2}) \\
& =V^{1/2}\left[ I_{d}-h\E\left( AA'\right) +O\left(h^{2}\right)\right] \left(
V^{1/2}\right) ^{\prime },
\end{align*}%
where $V^{1/2}$ is the matrix square root of $V$ such that $V^{1/2}\left(
V^{1/2}\right) ^{\prime }=V$. We can choose $V^{1/2}$ to be symmetric but do
not have to.

Details of the following are in the appendix; here we outline our strategy
and highlight key results. Letting 
\begin{equation}
\Lambda _{n}=V^{1/2}\left[ I_{d}-h\E\left( AA'\right) +O\left(h^{2}\right)%
\right] ^{1/2}  \label{Lambda_n}
\end{equation}%
such that $\Lambda _{n}\Lambda _{n}^{\prime }=V_{n}$, and defining 
\begin{equation}
\bar{W}_{n}^{\ast }\equiv \frac{1}{n}\sum_{j=1}^{n}W_{j}^{\ast }\text{ and }%
W_{j}^{\ast }=\Lambda _{n}^{-1}Z_{j}\left[ G(-U_{j}/h)-q\right] ,
\label{define_W_star}
\end{equation}%
we can approximate the test statistic as % \begin{equation*}
$S_{n}=S_{n}^{L}+e_{n}$, % \end{equation*}%
where 
\begin{equation*}
S_{n}^{L}=\left( \sqrt{n}\bar{W}_{n}^{\ast }\right) ^{\prime }\left( \sqrt{n}%
\bar{W}_{n}^{\ast }\right) -h\left( \sqrt{n}\bar{W}_{n}^{\ast }\right)
^{\prime }\E\left( AA'\right) \left( \sqrt{n}\bar{W}_{n}^{\ast
}\right) 
\end{equation*}%
and $e_{n}$ is the remainder term satisfying $P\left( \left \vert
e_{n}\right \vert >O\left( h^{2}\right) \right) =O\left( h^{2}\right) $.%

The stochastic expansion above allows us to approximate the characteristic
function of $S_{n}$ with that of $S_{n}^{L}$. Taking the Fourier--Stieltjes
inverse of the characteristic function yields an approximation of the
distribution function, from which we can calculate the type I error by
plugging in the critical value $c_{\alpha}$.

\begin{theorem}
\label{thm:inf} Under Assumptions \ref{a:sampling}--\ref{a:h}, we have 
\begin{align*}
P\left(S_{n}^{L}<x\right) 
  &= \mathcal{G}_{d}(x) -\mathcal{G}_{d+2}^{\prime}(x) \left \{
nh^{2r}\E(B)'\E(B) -h\mathrm{tr}\left
[\E\left( AA'\right) \right] \right \} +R_n, \\
P\left( S_{n}>c_{\alpha }\right) &= \alpha +\mathcal{G}_{d+2}^{\prime
}(c_{\alpha }) \left \{ nh^{2r}\E(B)'\E(B) -h\mathrm{tr}%
\left[\E\left( AA'\right) \right] \right \} +R_{n},
\end{align*}%
where $R_{n}=O\left(h^{2}+nh^{2r+1}\right)$ and $\mathcal{G}_{d}(x)$ is the CDF of the $%
\chi _{d}^{2}$ distribution.
\end{theorem}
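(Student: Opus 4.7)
My plan is to apply Fourier--Stieltjes inversion to a sharp expansion of the characteristic function of the quadratic form $S_n^L$, and then transfer the conclusion to $S_n$ using the remainder bound $P(|e_n|>O(h^2))=O(h^2)$ recorded just before the statement. Write $S_n^L = Y_n' M_n Y_n$ with $Y_n \equiv \sqrt{n}\bar{W}_n^*$ and $M_n \equiv I_d - h\E(AA')$. By construction $\mathrm{Var}(Y_n) = I_d$, while $\E(Y_n) = \mu_n \equiv \sqrt{n}\Lambda_n^{-1}\E(W_j)$. Combining the bias formula \eqref{eqn:see-bias} with the definition of $\Lambda_n$ in \eqref{Lambda_n} and the definition of $B$ gives $\mu_n'\mu_n = nh^{2r}\E(B)'\E(B) + o(nh^{2r})$, so the ``noncentrality'' of $Y_n$ already matches the coefficient appearing in the theorem.

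The first substantive step is an Edgeworth expansion for $Y_n - \mu_n$. All needed moments of $W_j^*$ are controlled through Assumptions~\ref{a:rank}(i) and~\ref{a:fUZ}(iii), while Assumption~\ref{a:G}(iii) supplies a Cram\'er-type non-lattice condition on the characteristic function of $W_j^*$ exactly in the manner of \citet{Horowitz1998} and \citet{Whang2006}. This justifies expanding the characteristic function of $Y_n$ around the Gaussian one to an order below what we retain in $R_n$.

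Next I would expand $\E[e^{itS_n^L}] = \E[e^{itY_n'Y_n} e^{-ithY_n'\E(AA')Y_n}]$ in the two small parameters $h$ (through $M_n$) and $\mu_n'\mu_n = O(nh^{2r})$. Using the closed-form characteristic function of a Gaussian quadratic form $\det(I_d - 2itM_n)^{-1/2}\exp\{it\mu_n'(I_d-2itM_n)^{-1}M_n\mu_n\}$ for the leading piece, and the Edgeworth correction above for the sub-leading piece, a direct expansion yields
\begin{equation*}
\phi_{S_n^L}(t) = (1-2it)^{-d/2} + \frac{it\left(nh^{2r}\E(B)'\E(B) - h\tr[\E(AA')]\right)}{(1-2it)^{d/2+1}} + O\!\left(h^2 + nh^{2r+1}\right),
\end{equation*}
where the $O$ absorbs both the $O(h^2)$ tail of the expansion of $V_n$ and $M_n$ and the Edgeworth-level deviations from Gaussianity. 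Fourier--Stieltjes inversion identifies the first term as $\mathcal{G}_d(x)$ on the CDF side and, using $\mathcal{F}^{-1}[(1-2it)^{-(d/2+1)}] = \mathcal{G}_{d+2}'(x)$ together with the identity $\mathcal{F}^{-1}[it\phi(t)] = -(\mathrm{density})'$ at the CDF level, shows that the second term integrates to $-\mathcal{G}_{d+2}'(x)$ times the bracketed scalar, which is exactly the first display of the theorem.

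The second display then follows by evaluating the first at $x=c_\alpha$, using $1-\mathcal{G}_d(c_\alpha)=\alpha$, flipping the sign, and adding the error from $S_n = S_n^L + e_n$: boundedness of the density of $S_n^L$ in a neighborhood of $c_\alpha$ combined with $P(|e_n|>O(h^2))=O(h^2)$ gives $P(S_n>c_\alpha) - P(S_n^L>c_\alpha) = O(h^2)$, absorbed into $R_n$. I expect the main obstacle to be the first step: verifying the Cram\'er condition on $W_j^*$ is delicate because $W_j^*$ is bounded, $h$-dependent, and constant on $\{|U_j|>h\}$, so absolute continuity of the joint law is unavailable, and Assumption~\ref{a:G}(iii) must be used directly to show that $|\E \exp(it'W_j^*)|$ is bounded away from $1$ uniformly for large $\|t\|$, on the scale relevant after rescaling by $\Lambda_n^{-1}$.
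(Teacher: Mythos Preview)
Your proposal is correct and follows essentially the same route as the paper: an Edgeworth expansion for $\sqrt{n}\bar W_n^*$ (with the Cram\'er condition supplied by Assumption~\ref{a:G}(iii) as in \citet{Horowitz1998} and \citet{Whang2006}), a second-order expansion of the characteristic function of $S_n^L$ in $h$ and the noncentrality $\mu_n'\mu_n$, and Fourier--Stieltjes inversion via $(-it)(1-2it)^{-(d+2)/2}\leftrightarrow \mathcal{G}_{d+2}'$, followed by transfer from $S_n^L$ to $S_n$ through the $e_n$ bound. The only cosmetic difference is that the paper linearizes $e^{-ithY'\E(AA')Y}$ first and computes the resulting $C_0(t)$ and $C_1(t)$ separately, whereas you use the closed-form Gaussian quadratic-form characteristic function and expand it directly; the two computations are equivalent.
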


From Theorem \ref{thm:inf}, an approximate measure of the type I error of
the SEE-based chi-square test is%
\begin{equation*}
\alpha +\mathcal{G}_{d+2}^{\prime }(c_{\alpha })\left\{ nh^{2r}\E(B)'\E(B) -h\mathrm{tr}%
\left[\E\left( AA'\right) \right] \right\} ,
\end{equation*}%
and an approximate measure of the coverage probability error (CPE) is%
\footnote{%
The CPE is defined to be the nominal coverage minus the true coverage
probability, which may be different from the usual definition. Under this
definition, smaller CPE corresponds to higher coverage probability (and
smaller type I error).} 
\begin{equation*}
\mathrm{CPE}=\mathcal{G}_{d+2}^{\prime }(c_{\alpha })\left\{ nh^{2r}\E(B)'\E(B) -h\mathrm{tr}%
\left[\E\left( AA'\right) \right] \right\} ,
\end{equation*}%
which is also the error in rejection probability under the null.

Up to smaller-order terms, the term $nh^{2r}\E(B)'\E(B)$
characterizes the bias effect from smoothing. The bias increases type I
error and reduces coverage probability. The term $h\mathrm{tr}\left[\E\left( AA'\right) \right]$ characterizes the variance
effect from smoothing. The variance reduction decreases type I error and
increases coverage probability. %When $h\rightarrow 0$, the
The type I error is $\alpha $ up to order $O\left(h+nh^{2r}\right)$. 
% For $h=0$ as well as some $h>0$ that makes bias and variance effects cancel, the type I error is $\alpha $ up to smaller-order terms in $R_{n}$.
There exists some $h>0$ that makes bias and variance effects cancel, leaving
type I error equal to $\alpha$ up to smaller-order terms in $R_{n}$.

Note that $nh^{2r}\E(B)'\E(B) -h\mathrm{tr}\left[\E\left( AA'\right) \right]$ is identical to the
high-order term in the asymptotic MSE of the SEE in \eqref{eqn:mse}. The $h_{%
\text{CPE}}^{\ast }$ that minimizes type I error is the same as $h_{\text{SEE%
}}^{\ast }$.

\begin{proposition}
\label{prop:hCPE} Let Assumptions \ref{a:sampling}--\ref{a:h} hold. The
bandwidth that minimizes the approximate type I error of the chi-square test
based on the test statistic $S_{n}$ is 
\begin{equation*}
h_{\text{CPE}}^{\ast }=h_{\text{SEE}}^{\ast }=\left( \frac{\mathrm{tr}\left[\E\left( AA'\right) \right]}{\E(B)'\E(B)}%
\frac{1}{2nr}\right) ^{\frac{1}{2r-1}}.
\end{equation*}
\end{proposition}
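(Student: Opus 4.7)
The plan is to read the result off directly from Theorem \ref{thm:inf}, exploiting the fact that the leading-order correction to the type I error is, up to a strictly positive multiplicative constant, exactly the leading-order correction to the SEE's asymptotic MSE in \eqref{eqn:mse}. Specifically, I would begin by rewriting the approximate type I error from Theorem \ref{thm:inf} as
\begin{equation*}
P(S_{n}>c_{\alpha}) = \alpha + \mathcal{G}_{d+2}'(c_{\alpha})\,\phi(h) + R_{n},
\qquad
\phi(h) \equiv nh^{2r}\E(B)'\E(B) - h\,\mathrm{tr}\left[\E(AA')\right],
\end{equation*}
and observe that $\mathcal{G}_{d+2}'(c_{\alpha})>0$ since it is the $\chi^{2}_{d+2}$ density evaluated at the positive point $c_{\alpha}$. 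Thus, at the level of the leading-order approximation, minimizing the type I error over $h$ is the same as minimizing $\phi(h)$; but $\phi(h)$ is precisely the objective \eqref{eqn:def_h_SEE} that defines $h_{\text{SEE}}^{\ast}$ in Proposition \ref{prop:hSEE}. The equality $h_{\text{CPE}}^{\ast}=h_{\text{SEE}}^{\ast}$ then follows at once, and the closed-form expression is inherited from Proposition \ref{prop:hSEE}.

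To produce the explicit formula, I would write down the first-order condition
\begin{equation*}
\phi'(h) = 2rnh^{2r-1}\E(B)'\E(B) - \mathrm{tr}\left[\E(AA')\right] = 0,
\end{equation*}
solve for $h$ to obtain $h^{2r-1} = \mathrm{tr}[\E(AA')]/(2rn\,\E(B)'\E(B))$, and verify that $\phi''(h) = 2r(2r-1)nh^{2r-2}\E(B)'\E(B)>0$ so the critical point is the unique minimum on $(0,\infty)$.

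The only step that is not entirely mechanical is confirming that the remainder $R_{n}=O(h^{2}+nh^{2r+1})$ is indeed negligible relative to $\phi(h)$ in a neighborhood of the candidate optimum, so that the minimizer of the approximation coincides (to the order reported) with the minimizer of the true type I error. At $h \asymp n^{-1/(2r-1)}$ consistent with Assumption \ref{a:h}, one has $\phi(h) \asymp n^{-1/(2r-1)}$, whereas $h^{2}$ and $nh^{2r+1}$ are both of order $n^{-2/(2r-1)}$, which is strictly smaller. This verification is the main (and only real) substantive point; the rest of the proof is a direct appeal to Theorem \ref{thm:inf} combined with the optimization already carried out in Proposition \ref{prop:hSEE}.
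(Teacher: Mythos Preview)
Your proposal is correct and follows essentially the same approach as the paper: the paper observes (immediately before stating the proposition) that the bracketed term in Theorem \ref{thm:inf} is identical to the high-order term in the SEE's asymptotic MSE \eqref{eqn:mse}, so minimizing the approximate type I error reduces to the optimization already solved in Proposition \ref{prop:hSEE}. Your additional checks---positivity of $\mathcal{G}_{d+2}'(c_{\alpha})$, the second-order condition, and the order comparison showing $R_{n}=o(\phi(h))$ at $h\asymp n^{-1/(2r-1)}$---are all correct and make the argument slightly more explicit than the paper's own treatment, which leaves these points implicit.
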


The result that $h_{\text{CPE}}^{\ast }=h_{\text{SEE}}^{\ast }$ is
intuitive. Since $h_{\text{SEE}}^{\ast }$ minimizes $\E\left(m_{n}^{\prime
}V^{-1}m_{n}\right)$, for a test with $c_{\alpha }$ and $\hat{V}$ both invariant
to $h$, the null rejection probability $P\left(m_{n}^{\prime }\hat{V}%
^{-1}m_{n}>c_{\alpha }\right)$ should be smaller when the SEE's MSE is smaller.

When $h=h_{\text{CPE}}^{\ast }$, 
\begin{equation*}
P\left( S_{n}>c_{\alpha }\right) =\alpha -C^{+}\mathcal{G}_{d+2}^{\prime
}(c_{\alpha })h_{\text{CPE}}^{\ast }\left[1+o(1)\right]
\end{equation*}%
where $C^{+}=\left( 1-\frac{1}{2r}\right) \mathrm{tr}\left[\E\left( AA'\right) \right] >0$. If instead we construct the test statistic
based on the unsmoothed estimating equations, $\tilde{S}_{n}=\tilde{m}%
_{n}^{\prime }\hat{V}^{-1}\tilde{m}_{n}$, then it can be shown that 
\begin{equation*}
P\left( \tilde{S}_{n}>c_{\alpha }\right) =\alpha +Cn^{-1/2}\left[1+o(1)\right]
\end{equation*}%
for some constant $C$, which is in general not equal to zero. Given that $%
n^{-1/2}=o(h_{\text{CPE}}^{\ast })$ and $C^{+}>0$, we can expect the
SEE-based chi-square test to have a smaller type I error in large samples.

\subsection{Type II error and local asymptotic power}

To obtain the local asymptotic power of the $S_{n}$ test, we let the true
parameter value be $\beta _{n}=\beta _{0}-\delta /\sqrt{n}$, where $\beta _{0}
$ is the parameter value that satisfies the null hypothesis $H_{0}$. In this
case, %we have 
\begin{equation*}
m_{n}\left( \beta _{0}\right) =\frac{1}{\sqrt{n}}\sum_{j=1}^{n}Z_{j}\left[
G\left( \frac{X_{j}^{\prime }\delta /\sqrt{n}-U_{j}}{h}\right) -q\right] .
\end{equation*}%
In the proof of Theorem \ref{thm:power}, we show that 
\begin{align*}
\E\left[m_{n}\left( \beta _{0}\right)\right] & =\Sigma _{ZX}\delta +\sqrt{n}%
(-h)^{r}V^{1/2}\E(B)+O\left( n^{-1/2}+\sqrt{n}h^{r+1}\right) , \\
V_{n}& =\mathrm{Var}\left[ m_{n}\left( \beta _{0}\right) \right]
=V-hV^{1/2}\left[ \E\left( AA'\right)\right] (V^{1/2})^{\prime }+O\left( n^{-1/2}+h^{2}\right) .
\end{align*}%

\begin{theorem}
\label{thm:power} Let Assumptions \ref{a:sampling}--\ref{a:h} and \ref{a:power_mse}(i) hold. Define $\Delta \equiv \E\left[V_{n}^{-1/2}m_{n}(\beta _{0})\right]$ and $\tilde{\delta}\equiv V^{-1/2}\Sigma_{ZX}\delta $. We have%
\begin{align*}
P_{\beta _{n}}\left( S_{n}<x\right) 
  &= \mathcal{G}_{d}\left( x;\left \Vert\Delta \right \Vert ^{2}\right)
    +\mathcal{G}_{d+2}^{\prime }\left(x;\Vert \Delta\Vert ^{2}\right)
       h\mathrm{tr}\left[\E\left( AA'\right) \right]  \\
& \quad +\mathcal{G}_{d+4}^{\prime }\left(x;\left \Vert \Delta \right \Vert ^{2}\right)h%
\left[ \Delta ^{\prime }\E\left( AA'\right) \Delta \right] +O\left(
h^{2}+n^{-1/2}\right)  \\
  &= \mathcal{G}_{d}\left( x;\Vert \tilde{\delta}\Vert ^{2}\right) 
    -\mathcal{G}_{d+2}^{\prime }\left(x;\Vert \tilde{\delta}\Vert ^{2}\right)
     \left\{ nh^{2r}\E(B)'\E(B)-h\mathrm{tr}\left[\E\left( AA'\right)\right]\right\}\\
& \quad +\left[ \mathcal{G}_{d+4}^{\prime }\left(x;\Vert \tilde{\delta}\Vert
^{2}\right)-\mathcal{G}_{d+2}^{\prime }\left(x;\Vert \tilde{\delta}\Vert ^{2}\right)\right] h%
\left[ \tilde{\delta}^{\prime }\E\left( AA'\right) \tilde{\delta}%
\right]  \\
& \quad -\mathcal{G}_{d+2}^{\prime }\left(x;\Vert \tilde{\delta}\Vert ^{2}\right)2%
\tilde{\delta}^{\prime }\sqrt{n}(-h)^{r}\E(B)+O\left( h^{2}+n^{-1/2}\right) ,
\end{align*}%
where $\mathcal{G}_{d}(x;\lambda )$ is the CDF of the noncentral chi-square
distribution with degrees of freedom $d$ and noncentrality parameter $%
\lambda $. If we further assume that $\tilde{\delta}$ is uniformly
distributed on the sphere $\mathcal{S}_{d}(\tau )=\{ \tilde{\delta}\in 
\mathbb{R}^{d}:\Vert \tilde{\delta}\Vert =\tau \}$, then 
\begin{align*}
\E_{\tilde{\delta}}& \left[P_{\beta _{n}}\left( S_{n}>c_{\alpha }\right) \right] \\
& =1-\mathcal{G}_{d}\left( c_{\alpha };\tau ^{2}\right) +\mathcal{G}%
_{d+2}^{\prime }(c_{\alpha };\tau ^{2})\left\{ nh^{2r}\E(B)'\E(B)
-h\mathrm{tr}\left[\E\left( AA'\right)\right]\right\}  \\
& \quad -\left[ \mathcal{G}_{d+4}^{\prime }(c_{\alpha };\tau ^{2})-\mathcal{G%
}_{d+2}^{\prime }(c_{\alpha };\tau ^{2})\right] \frac{\tau ^{2}}{d}h\mathrm{tr}\left[\E\left( AA'\right)\right]
+O\left(h^{2}+n^{-1/2}\right) 
\end{align*}%
where $\E_{\tilde{\delta}}$ takes the average uniformly over the sphere $%
\mathcal{S}_{d}(\tau )$.
\end{theorem}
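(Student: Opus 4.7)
The plan is to extend the Edgeworth-type expansion of Theorem \ref{thm:inf} to the local alternative $\beta_n = \beta_0 - \delta/\sqrt{n}$, accounting for the new fact that the standardized SEE now has a nonzero mean. First I would Taylor expand $\E Z_j[G((X_j'\delta/\sqrt{n} - U_j)/h) - q]$ in $\delta/\sqrt{n}$, using Assumptions \ref{a:fUZ}, \ref{a:G}, and \ref{a:power_mse}(i); combining a one-term Taylor in $\delta/\sqrt{n}$ with the $h$-expansion from Theorem \ref{thm:Wj} delivers the stated $\E m_n(\beta_0)=\Sigma_{ZX}\delta+\sqrt{n}(-h)^rV^{1/2}\E(B)+O(n^{-1/2}+\sqrt{n}h^{r+1})$. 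A parallel expansion of the variance as in Section \ref{sec:eI} gives the displayed $V_n$. A Lindeberg--Feller CLT for iid triangular arrays (valid because $Z_j$ and $G$ are bounded) then shows $\tilde m_n \equiv V_n^{-1/2} m_n(\beta_0) = \Delta + Y_n$ with $Y_n \Rightarrow N(0, I_d)$.

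\textbf{Stochastic expansion and characteristic function.}
Writing $S_n = \tilde m_n' V_n^{1/2}\hat V^{-1} V_n^{1/2}\tilde m_n$ and plugging in $\hat V = V + O_p(n^{-1/2})$ together with the $V_n$-expansion gives $S_n = \tilde m_n'\tilde m_n - h\,\tilde m_n'\E(AA')\tilde m_n + e_n$, with $e_n = O_p(h^2 + n^{-1/2})$, in direct parallel with Theorem \ref{thm:inf}. The main analytical step is to compute the CF of the leading quadratic form $S_n^L = \tilde m_n' M \tilde m_n$ with $M=I_d-h\E(AA')$. Using the standard CF formula for a Gaussian quadratic form, factoring $I - 2itM = (1-2it)\bigl[I + \tfrac{2ith}{1-2it}\E(AA')\bigr]$, and expanding the determinant and $(I-2itM)^{-1}$ to first order in $h$ via a geometric series, I obtain
\[
\E[e^{it S_n^L}] = \phi_d(t;\|\Delta\|^2) - ith\,\mathrm{tr}[\E(AA')]\,\phi_{d+2}(t;\|\Delta\|^2) - ith[\Delta'\E(AA')\Delta]\,\phi_{d+4}(t;\|\Delta\|^2) + O(h^2),
\]
where $\phi_k(t;\lambda) = (1-2it)^{-k/2}\exp(it\lambda/(1-2it))$ is the CF of $\chi^2_k(\lambda)$. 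Because $-it\phi_k(t;\lambda)$ is the Fourier transform of $\partial_x f_{k,\lambda}$ with $f_{k,\lambda}=\mathcal{G}_k'(\cdot;\lambda)$, termwise inversion and integration from $-\infty$ to $x$ produce the first displayed equation, once the slack $e_n$ is absorbed by the same Edgeworth-remainder argument as in Theorem \ref{thm:inf}.

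\textbf{Passing to the $\tilde\delta$-form and the spherical average.}
Substituting $\|\Delta\|^2=\|\tilde\delta\|^2+2\sqrt{n}(-h)^r\tilde\delta'\E(B)+nh^{2r}\|\E(B)\|^2+\text{(smaller)}$ and Taylor expanding each $\mathcal{G}$-function in the noncentrality parameter around $\|\tilde\delta\|^2$ yields the second displayed equation. The key derivative identities are
\[
\partial_\lambda \mathcal{G}_d(x;\lambda) = -\mathcal{G}'_{d+2}(x;\lambda), \qquad \partial_\lambda \mathcal{G}'_{d+2}(x;\lambda) = \tfrac{1}{2}\bigl[\mathcal{G}'_{d+4}(x;\lambda) - \mathcal{G}'_{d+2}(x;\lambda)\bigr],
\]
both provable from the Poisson-mixture representation of the noncentral chi-square via the elementary identity $\mathcal{G}_d(x)-\mathcal{G}_{d+2}(x)=2\mathcal{G}'_{d+2}(x)$ (integrate by parts on the gamma density). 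Collecting terms reproduces exactly the linear-in-$\tilde\delta$ piece $-\mathcal{G}'_{d+2}(x;\|\tilde\delta\|^2)\cdot 2\tilde\delta'\sqrt{n}(-h)^r\E(B)$, the $-\mathcal{G}'_{d+2}(x;\|\tilde\delta\|^2)\,nh^{2r}\E(B)'\E(B)$ piece, and the $[\mathcal{G}'_{d+4}-\mathcal{G}'_{d+2}]\,h\,\tilde\delta'\E(AA')\tilde\delta$ term in the statement, with cross residuals like $\sqrt{n}h^{r+1}$ and $nh^{2r+1}$ absorbed into the $O(h^2+n^{-1/2})$. For the final spherical average, I would use $\E_{\tilde\delta}[\tilde\delta]=0$ to annihilate the linear-in-$\tilde\delta$ term (its $\mathcal{G}'_{d+2}$ prefactor depends on $\tilde\delta$ only through $\|\tilde\delta\|^2=\tau^2$, constant on $\mathcal{S}_d(\tau)$) and $\E_{\tilde\delta}[\tilde\delta\tilde\delta']=(\tau^2/d)I_d$ to convert $\tilde\delta'\E(AA')\tilde\delta$ into $(\tau^2/d)\mathrm{tr}[\E(AA')]$.

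\textbf{Main obstacle.}
The difficulty is not the formal computation but the rigorous control of the Fourier-inversion remainder at the $O(h^2+n^{-1/2})$ level: this requires an Edgeworth-type bound for a smooth function of sample means with a shrinking bandwidth, whose Cram\'er-type nonsingularity hypothesis is exactly what Assumption \ref{a:G}(iii) supplies. A secondary technical point is that the $\hat V$-versus-$V_n$ discrepancy, together with the cross terms arising in the $\tilde\delta$-substitution, must all be shown to fit inside the stated remainder under the bandwidth rate in Assumption \ref{a:h}.
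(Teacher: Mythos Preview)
Your approach is essentially the one in the paper: compute the mean and variance of $m_n(\beta_0)$ under the local alternative, write $S_n=S_n^L+e_n$ with $S_n^L$ a quadratic form in the standardized SEE, approximate its characteristic function, Fourier--invert to get the first display, then re-expand around $\|\tilde\delta\|^2$ and average over the sphere. The only organizational difference is that you handle the CF of $S_n^L=\tilde m_n'M\tilde m_n$ via the closed-form Gaussian quadratic-form CF and then expand in $h$, whereas the paper first expands $\exp(itS_n^L)$ as $C_0(t)-hC_1(t)$ and evaluates the two Gaussian integrals separately; these are the same computation in a different order.

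There is, however, one concrete slip in your passage from the first display to the second. Your expansion
\[
\|\Delta\|^2=\|\tilde\delta\|^2+2\sqrt n(-h)^r\tilde\delta'\E(B)+nh^{2r}\|\E(B)\|^2+\text{(smaller)}
\]
omits a term of order $h$ that is \emph{not} smaller than the ones you keep. Since $\Delta=V_n^{-1/2}\E m_n(\beta_0)$ and $V_n^{-1/2}=\bigl[I_d+\tfrac{h}{2}\E(AA')+O(h^2+n^{-1/2})\bigr]V^{-1/2}$, you get $\Delta=\bigl[I_d+\tfrac{h}{2}\E(AA')\bigr]\tilde\delta+\sqrt n(-h)^r\E(B)+\cdots$, and hence
\[
\|\Delta\|^2=\|\tilde\delta\|^2+h\,\tilde\delta'\E(AA')\tilde\delta+2\sqrt n(-h)^r\tilde\delta'\E(B)+nh^{2r}\E(B)'\E(B)+\text{smaller}.
\]
It is precisely the $h\,\tilde\delta'\E(AA')\tilde\delta$ piece that, after applying $\partial_\lambda\mathcal G_d(x;\lambda)=-\mathcal G_{d+2}'(x;\lambda)$, produces the $-\mathcal G_{d+2}'(x;\|\tilde\delta\|^2)\,h\,\tilde\delta'\E(AA')\tilde\delta$ contribution; combined with $\mathcal G_{d+4}'(x;\|\Delta\|^2)\,h\,\Delta'\E(AA')\Delta\approx\mathcal G_{d+4}'(x;\|\tilde\delta\|^2)\,h\,\tilde\delta'\E(AA')\tilde\delta$ from the first display, this is what gives the bracket $[\mathcal G_{d+4}'-\mathcal G_{d+2}']$ in the second display. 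With only the terms you listed you would obtain $\mathcal G_{d+4}'\,h\,\tilde\delta'\E(AA')\tilde\delta$ rather than $[\mathcal G_{d+4}'-\mathcal G_{d+2}']\,h\,\tilde\delta'\E(AA')\tilde\delta$. Your second derivative identity $\partial_\lambda\mathcal G_{d+2}'=\tfrac12[\mathcal G_{d+4}'-\mathcal G_{d+2}']$ is correct but does not rescue this: applying it to the $\mathcal G_{d+2}'(x;\|\Delta\|^2)\,h\,\mathrm{tr}[\E(AA')]$ term only generates contributions of order $h\cdot o(1)$, which go into the remainder. Once you restore the missing $h\,\tilde\delta'\E(AA')\tilde\delta$ term, only the first derivative identity is needed, exactly as in the paper.
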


When $\delta=0$, which implies $\tau=0$, 
the expansion in Theorem \ref{thm:power} reduces to that in Theorem \ref%
{thm:inf}.

When $h=h_{\text{SEE}}^{\ast }$, it follows from Theorem \ref{thm:inf} that 
\begin{align*}
P_{\beta _{0}}\left( S_{n}>c_{\alpha }\right) &= 1-\mathcal{G}_{d}\left(
c_{\alpha }\right) -C^{+}\mathcal{G}_{d+2}^{\prime }(c_{\alpha })h_{\text{SEE%
}}^{\ast }+o(h_{\text{SEE}}^{\ast }) \\
&= \alpha -C^{+}\mathcal{G}_{d+2}^{\prime }(c_{\alpha })h_{\text{SEE}}^{\ast
}+o(h_{\text{SEE}}^{\ast }).
\end{align*}%
To remove the error in rejection probability of order $h_{\text{SEE}}^{\ast
} $, we make a correction to the critical value $c_{\alpha }$. Let $%
c_{\alpha }^{\ast }$ be a high-order corrected critical value such that $%
P_{\beta_{0}}\left( S_{n}>c_{\alpha }^{\ast }\right) =\alpha +o(h_{\text{SEE}%
}^{\ast})$. Simple calculation shows that 
\begin{equation*}
c_{\alpha }^{\ast }=c_{\alpha }-\frac{\mathcal{G}_{d+2}^{\prime }(c_{\alpha
})}{\mathcal{G}_{d}^{\prime }\left( c_{\alpha }\right) }C^{+}h_{\text{SEE}%
}^{\ast }
\end{equation*}%
meets the requirement.

To approximate the size-adjusted power of the $S_{n}$ test, we use $%
c_{\alpha }^{\ast }$ rather than $c_{\alpha }$ because $c_{\alpha }^{\ast }$
leads to a more accurate test in large samples. Using Theorem \ref{thm:power}%
, we can prove the following corollary.

\begin{corollary}
\label{cor:power} Let the assumptions in Theorem \ref{thm:power} hold. 
Then for $h=h_{\text{SEE}}^{\ast }$, 
\begin{equation}
\begin{split}
\E_{\tilde{\delta}}& \left[P_{\beta _{n}}\left( S_{n}>c_{\alpha }^{\ast }\right)\right] \\
& =1-\mathcal{G}_{d}\left( c_{\alpha };\tau ^{2}\right) +Q_{d}\left(
c_{\alpha },\tau ^{2},r\right) \mathrm{tr}\left[\E\left( AA'\right)\right]
h_{\text{SEE}}^{\ast }+O\left( h_{\text{SEE}}^{\ast 2}+n^{-1/2}\right) ,
\end{split}
\label{eqn:asym-local-power}
\end{equation}%
where 
\begin{align*}
Q_{d}\left( c_{\alpha },\tau ^{2},r\right) & =\left( 1-\frac{1}{2r}\right) %
\left[ \mathcal{G}_{d}^{\prime }\left( c_{\alpha };\tau ^{2}\right) \frac{%
\mathcal{G}_{d+2}^{\prime }(c_{\alpha })}{\mathcal{G}_{d}^{\prime }\left(
c_{\alpha }\right) }-\mathcal{G}_{d+2}^{\prime }(c_{\alpha };\tau ^{2})%
\right] \\
& \quad -\frac{1}{d}\left[ \mathcal{G}_{d+4}^{\prime }(c_{\alpha };\tau
^{2})-\mathcal{G}_{d+2}^{\prime }(c_{\alpha };\tau ^{2})\right] \tau ^{2}.
\end{align*}
\end{corollary}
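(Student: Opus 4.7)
The plan is to obtain the corollary by substituting $h = h_{\text{SEE}}^{\ast}$ into the power expansion of Theorem \ref{thm:power} and then replacing the nominal critical value $c_{\alpha}$ by the size-adjusted $c_{\alpha}^{\ast}$ via a first-order Taylor expansion. The key algebraic observation is that at $h = h_{\text{SEE}}^{\ast}$ the first-order condition defining $h_{\text{SEE}}^{\ast}$ in \eqref{eqn:def_h_SEE} gives $n(h_{\text{SEE}}^{\ast})^{2r}\E(B)'\E(B) = h_{\text{SEE}}^{\ast}\tr[\E(AA')]/(2r)$, so that the combined bias--variance term in Theorem \ref{thm:power} collapses to
\[
n(h_{\text{SEE}}^{\ast})^{2r}\E(B)'\E(B) - h_{\text{SEE}}^{\ast}\tr[\E(AA')] = -\bigl(1-1/(2r)\bigr)\tr[\E(AA')]\,h_{\text{SEE}}^{\ast} = -C^{+}h_{\text{SEE}}^{\ast}.
\]
Since $h_{\text{SEE}}^{\ast} \asymp n^{-1/(2r-1)}$, we also have $(h_{\text{SEE}}^{\ast})^{2} = o(h_{\text{SEE}}^{\ast})$, so the $O(h^{2}+n^{-1/2})$ remainder in Theorem \ref{thm:power} becomes $O(h_{\text{SEE}}^{\ast 2}+n^{-1/2})$, matching the claimed remainder.

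Next, I would write $\E_{\tilde{\delta}}[P_{\beta_n}(S_n > c_{\alpha}^{\ast})]$ by replacing $c_{\alpha}$ with $c_{\alpha}^{\ast}$ everywhere in the Theorem \ref{thm:power} expansion. Using $c_{\alpha}^{\ast} - c_{\alpha} = -[\mathcal{G}_{d+2}'(c_{\alpha})/\mathcal{G}_{d}'(c_{\alpha})]C^{+}h_{\text{SEE}}^{\ast}$, a first-order Taylor expansion of the leading term gives
\[
1-\mathcal{G}_{d}(c_{\alpha}^{\ast};\tau^{2}) = 1-\mathcal{G}_{d}(c_{\alpha};\tau^{2}) + \mathcal{G}_{d}'(c_{\alpha};\tau^{2})\,\frac{\mathcal{G}_{d+2}'(c_{\alpha})}{\mathcal{G}_{d}'(c_{\alpha})}\,C^{+}h_{\text{SEE}}^{\ast} + O(h_{\text{SEE}}^{\ast 2}).
\]
For the other terms in the expansion, the coefficients $\mathcal{G}_{d+2}'(c_{\alpha};\tau^{2})$ and $\mathcal{G}_{d+4}'(c_{\alpha};\tau^{2})$ already multiply $h_{\text{SEE}}^{\ast}$, so replacing $c_{\alpha}$ by $c_{\alpha}^{\ast}$ changes those terms only by $O(h_{\text{SEE}}^{\ast 2})$, which is absorbed into the remainder. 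One then collects the three coefficients multiplying $h_{\text{SEE}}^{\ast}\tr[\E(AA')]$ and factors out $(1-1/(2r))$ from the first two (which come from $-\mathcal{G}_{d+2}'(c_{\alpha};\tau^{2})(-C^{+}h_{\text{SEE}}^{\ast})$ and the Taylor correction above) to recover the expression $Q_{d}(c_{\alpha},\tau^{2},r)$ stated in the corollary.

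The main obstacle is primarily bookkeeping: one must verify carefully that every term produced by the critical-value correction is either reproduced exactly in $Q_{d}$ or demonstrably of order $O(h_{\text{SEE}}^{\ast 2}+n^{-1/2})$. In particular, one should check that the Taylor remainder for $\mathcal{G}_{d}(c_{\alpha}^{\ast};\tau^{2})$ is $O(h_{\text{SEE}}^{\ast 2})$ uniformly in $\tau$ on a compact set (using smoothness of the noncentral $\chi^{2}$ CDF in its argument), and that the cross-term coming from Taylor-expanding $\mathcal{G}_{d+2}'(c_{\alpha}^{\ast};\tau^{2})\cdot h_{\text{SEE}}^{\ast}$ is genuinely $O(h_{\text{SEE}}^{\ast 2})$. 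Once these checks are in place, matching the resulting coefficient to $Q_{d}(c_{\alpha},\tau^{2},r)$ is mechanical.
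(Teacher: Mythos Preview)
Your proposal is correct and follows essentially the same route as the paper: substitute $h=h_{\text{SEE}}^{\ast}$ into the Theorem~\ref{thm:power} expansion, use the first-order condition for $h_{\text{SEE}}^{\ast}$ to collapse $nh^{2r}\E(B)'\E(B)-h\,\tr[\E(AA')]$ to $-(1-1/(2r))\tr[\E(AA')]h_{\text{SEE}}^{\ast}$, Taylor-expand $\mathcal{G}_d(c_\alpha^{\ast};\tau^2)$ around $c_\alpha$ using the definition of $c_\alpha^{\ast}$, absorb the $O(h_{\text{SEE}}^{\ast})$ shift in the already-$O(h)$ terms into $O(h_{\text{SEE}}^{\ast 2})$, and collect coefficients into $Q_d$. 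The paper does not spell out the uniformity-in-$\tau$ check you flag, but otherwise your bookkeeping matches the paper's line by line.
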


In the asymptotic expansion of the local power function in %
\eqref{eqn:asym-local-power}, $1-\mathcal{G}_{d}\left( c_{\alpha };\tau
^{2}\right) $ is the usual first-order power of a standard chi-square test.
The next term of order $O(h_{\text{SEE}}^{\ast })$ captures the effect of
smoothing the estimating equations. To sign this effect, we plot the
function $Q_{d}\left( c_{\alpha },\tau ^{2},r\right) $ against $\tau ^{2}$
for $r=2$, $\alpha =10\%$, and different values of $d$ in Figure \ref%
{fig:power}. Figures for other values of $r$ and $\alpha $ are qualitatively
similar. The range of $\tau ^{2}$ considered in Figure \ref{fig:power} is
relevant as the first-order local asymptotic power, i.e.,\ $1-\mathcal{G}%
_{d}\left( c_{\alpha };\tau ^{2}\right) $, increases from $10\%$ to about $%
94\%$, $96\%$, $97\%$, and $99\%$, respectively for $d=1,2,3,4$. It is clear
from this figure that $Q_{d}\left( c_{\alpha },\tau ^{2},r\right) >0$ for
any $\tau ^{2}>0$. This indicates that smoothing leads to a test with
improved power. The power improvement increases with $r$. The smoother the
conditional PDF of $U$ in a neighborhood of the origin is, the larger the
power improvement is.

\begin{figure}[tbp]
\centering
\includegraphics[width=0.65\textwidth,clip=true,trim=40 190 80 200]{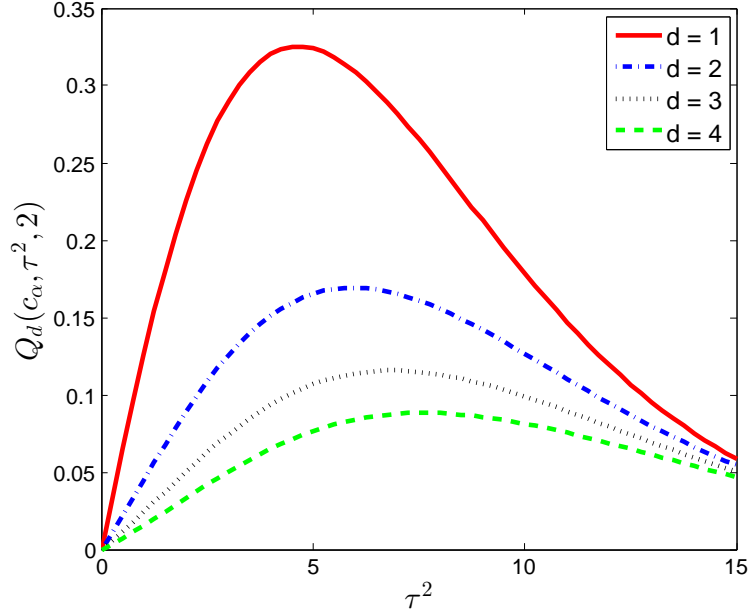}
\caption{Plots of $Q_{d}\left( c_{\protect \alpha },\protect \tau %
^{2},2\right) $ against $\protect \tau ^{2}$ for different values of $d$ with 
$\protect \alpha =10\%$.}
\label{fig:power}
\end{figure}

\section{MSE of the Parameter Estimator\label{sec:est}}

In this section, we examine the approximate MSE of the parameter estimator.
The approximate MSE, being a Nagar-type approximation \citep{Nagar1959}, can
be motivated from the theory of optimal estimating equations, as presented
in \citet{Heyde1997}, for example.

The SEE estimator $\hat{\beta}$ satisfies $m_{n}(\hat{\beta})=0$. In Lemma %
\ref{lem:stochastic_expansion_beta} in the appendix, we show that%
\begin{equation}
\sqrt{n}\left( \hat{\beta}-\beta _{0}\right) 
  = -\left \{ \E\left[\frac{\partial }{%
\partial \beta ^{\prime }}\frac{1}{\sqrt{n}}m_{n}\left( \beta _{0}\right)
\right]\right\} ^{-1}m_{n}+O_{p}\left( \frac{1}{\sqrt{nh}}\right)
\label{eqn:expansion1}
\end{equation}%
and 
\begin{equation}
\E\left[\frac{\partial }{\partial \beta ^{\prime }}\frac{1}{\sqrt{n}}m_{n}\left(
\beta _{0}\right)\right]
  = \E\left[ Z_{j}X_{j}^{\prime }f_{U|Z,X}(0\mid Z_{j},X_{j})%
\right] +O\left(h^{r}\right).  \label{eqn:expansion2}
\end{equation}%
Consequently, the approximate MSE (AMSE) of $\sqrt{n}\left( \hat{\beta}%
-\beta _{0}\right) $ is\footnote{%
Here we follow a common practice in the estimation of nonparametric and
nonlinear models and define the AMSE to be the MSE of $\sqrt{n}\left( \hat{%
\beta}-\beta _{0}\right) $ after dropping some smaller-order terms. So the
asymptotic MSE we define here is a Nagar-type approximate MSE. See %
\citet{Nagar1959}.} 
\begin{align*}
\mathrm{AMSE}_{\beta } 
  &= \left \{ \E\left[\frac{\partial }{\partial \beta ^{\prime
}}\frac{1}{\sqrt{n}}m_{n}\left( \beta _{0}\right) \right] \right\} ^{-1}
\E\left(m_n m_n'\right) \left\{ \E\left[\frac{\partial }{\partial \beta
^{\prime }}\frac{1}{\sqrt{n}}m_{n}\left( \beta _{0}\right) \right]\right\}
^{-1\prime } \\
&= \Sigma _{ZX}^{-1}V\Sigma _{XZ}^{-1}+\Sigma _{ZX}^{-1}V^{1/2}\left[
nh^{2r} \E(B)\E(B') -h\E\left( AA'\right) \right] \left( V^{1/2}\right) ^{\prime }\Sigma _{XZ}^{-1} \\
&\quad+O\left(h^r\right)+o\left(h+nh^{2r}\right),
\end{align*}%
where 
\begin{equation*}
\Sigma _{ZX} = \E\left[ Z_{j}X_{j}^{\prime }f_{U|Z,X}(0\mid Z_{j},X_{j})\right] 
\text{ and }\Sigma _{XZ}=\Sigma _{ZX}^{\prime }.
\end{equation*}

The first term of $\mathrm{AMSE}_{\beta }$ is the asymptotic variance of the
unsmoothed QR estimator. The second term captures the higher-order effect of
smoothing on the AMSE of $\sqrt{n}(\hat{\beta}-\beta _{0})$. When $%
nh^{r}\rightarrow \infty $ and $n^{3}h^{4r+1}\rightarrow \infty$, we have $%
h^{r}=o\left( nh^{2r}\right) $ and $1/\sqrt{nh}=o\left( nh^{2r}\right) $, so
the terms of order $O_{p}(1/\sqrt{nh})$ in \eqref{eqn:expansion1} and of
order $O\left( h^{r}\right) $ in \eqref{eqn:expansion2} are of smaller order
than the $O(nh^{2r})$ and $O(h)$ terms in the AMSE. If $h\asymp
n^{-1/(2r-1)} $ as before, these rate conditions are satisfied when $r>2$.

\begin{theorem}
\label{thm:est-MSE} Let Assumptions \ref{a:sampling}--\ref{a:G}(i--ii), \ref%
{a:beta}, and \ref{a:power_mse} hold. If $nh^{r}\rightarrow \infty $ and $%
n^{3}h^{4r+1}\rightarrow \infty $, then the AMSE of $\sqrt{n}(\hat{\beta}%
-\beta _{0})$ is%
\begin{equation*}
\Sigma _{ZX}^{-1} V^{1/2} 
\left[ I_{d}+nh^{2r}\E(B)\E(B') - h\E\left( AA'\right) \right] 
\left(V^{1/2}\right) ^{\prime }\left( \Sigma _{ZX}^{\prime }\right)^{-1}
+O\left(h^r\right)+o\left(h+nh^{2r}\right).
\end{equation*}
\end{theorem}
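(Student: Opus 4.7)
The plan is to leverage the stochastic expansion supplied by Lemma \ref{lem:stochastic_expansion_beta}, namely \eqref{eqn:expansion1} and \eqref{eqn:expansion2}, and compute the second moment of the leading linear functional of $m_n$, substituting the first two moments of $W_j$ from Theorem \ref{thm:Wj}. The AMSE in the Nagar sense is the MSE of this leading term, with all contributions of smaller order than the retained $O(h+nh^{2r})$ piece discarded.

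First I would compute $\E(m_n m_n')$. Since $m_n = n^{-1/2}\sum_{j=1}^n W_j$ and the $W_j$ are iid,
\begin{equation*}
\E(m_n m_n') = \E(W_j W_j') + (n-1)\,\E(W_j)\,\E(W_j)'.
\end{equation*}
Theorem \ref{thm:Wj} gives $\E(W_jW_j') = V - h\, V^{1/2}\E(AA')(V^{1/2})' + O(h^2)$, after recognising that $[1-\int_{-1}^1 G^2]\,\E[f_{U|Z}(0\mid Z_j)Z_jZ_j'] = V^{1/2}\E(AA')(V^{1/2})'$; similarly $\E(W_j) = (-h)^r V^{1/2}\E(B) + o(h^r)$, so $(n-1)\E(W_j)\E(W_j)' = nh^{2r}V^{1/2}\E(B)\E(B)'(V^{1/2})' + o(nh^{2r})$. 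Combining,
\begin{equation*}
\E(m_n m_n') = V + V^{1/2}\bigl[\,nh^{2r}\E(B)\E(B)' - h\,\E(AA')\,\bigr](V^{1/2})' + o\bigl(h+nh^{2r}\bigr).
\end{equation*}

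Next I would form the sandwich. By \eqref{eqn:expansion2}, $J_n \equiv \E[n^{-1/2}\,\partial m_n(\beta_0)/\partial\beta'] = \Sigma_{ZX} + O(h^r)$, and since $\Sigma_{ZX}$ is nonsingular by Assumption \ref{a:power_mse}(ii), $J_n^{-1} = \Sigma_{ZX}^{-1} + O(h^r)$. Writing \eqref{eqn:expansion1} as $\sqrt{n}(\hat\beta-\beta_0) = -J_n^{-1} m_n + r_n$ with $r_n = O_p(1/\sqrt{nh})$, squaring and taking expectations yields the desired sandwich form $\Sigma_{ZX}^{-1}\E(m_n m_n')\Sigma_{XZ}^{-1}$ together with three kinds of remainders: (i) the $O(h^r)$ perturbation of $J_n^{-1}$ acting on a moment of order $O(1)$, contributing $O(h^r)$; (ii) cross terms $\E[J_n^{-1} m_n r_n']$ bounded by Cauchy--Schwarz by $O(1)\cdot O(1/\sqrt{nh}) = O(1/\sqrt{nh})$; and (iii) $\E[r_n r_n'] = O(1/(nh))$. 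Substituting the expression for $\E(m_n m_n')$ gives exactly the stated AMSE.

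The main obstacle is the bookkeeping of these remainders against the retained $O(h) + O(nh^{2r})$ terms. In the Nagar framework the MSE is computed on the truncated expansion, so the stochastic $r_n$ can be treated at the order indicated by its $O_p$ bound. Under $nh^r\to\infty$, remainder (i) satisfies $h^r = o(nh^{2r})$, and under $n^3 h^{4r+1}\to\infty$, remainders (ii) and (iii) satisfy $1/\sqrt{nh} = o(nh^{2r})$ and $1/(nh) = o(nh^{2r})$ respectively, so all three are absorbed into the error term. Both rate conditions are compatible with the MSE-optimal rate $h\asymp n^{-1/(2r-1)}$ precisely when $r>2$, which is what Assumption \ref{a:power_mse}(i) supplies.
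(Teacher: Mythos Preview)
Your proposal is correct and follows essentially the same route as the paper: the derivation in Section~\ref{sec:est} defines the Nagar-type AMSE directly as the sandwich $J_n^{-1}\E(m_nm_n')J_n^{-1\prime}$, expands $\E(m_nm_n')=\E(W_jW_j')+(n-1)\E(W_j)\E(W_j)'$ using Theorem~\ref{thm:Wj}, and substitutes $J_n=\Sigma_{ZX}+O(h^r)$ from Lemma~\ref{lem:stochastic_expansion_beta}, with the rate conditions $nh^r\to\infty$ and $n^3h^{4r+1}\to\infty$ ensuring the $O(h^r)$ and $O_p(1/\sqrt{nh})$ remainders are of smaller order than the retained $O(h+nh^{2r})$ terms. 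Your explicit treatment of the cross terms via Cauchy--Schwarz is slightly more than the paper spells out (since AMSE is \emph{defined} as the MSE of the truncated expansion rather than bounded from the full one), but it is in the same spirit and leads to the identical conclusion.
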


The optimal $h^{\ast }$ that minimizes the high-order AMSE satisfies 
\begin{align*}
&\Sigma _{ZX}^{-1} V^{1/2} \left[ n\left( h^{\ast }\right) ^{2r}\E(B)\E(B') - h^{\ast }\E\left( AA'\right) \right]
\left( V^{1/2} \right)^\prime \left( \Sigma _{ZX}^{\prime }\right) ^{-1} \\
&\quad \leq \Sigma _{ZX}^{-1} V^{1/2} \left[ nh^{2r}\E(B)\E(B') - h\E\left( AA'\right) \right]
\left( V^{1/2} \right)^\prime \left( \Sigma _{ZX}^{\prime
}\right) ^{-1}
\end{align*}%
in the sense that the difference between the two sides is nonpositive
definite for all $h$. This is equivalent to 
\begin{equation*}
n\left( h^{\ast }\right) ^{2r}\E(B)\E(B') - h^{\ast }\E\left( AA'\right)
\leq nh^{2r}\E(B)\E(B') - h\E\left( AA'\right) .
\end{equation*}

This choice of $h$ can also be motivated from the theory of optimal
estimating equations. Given the estimating equations $m_{n}=0$, we follow %
\citet{Heyde1997} and define the standardized version of $m_{n}$ by 
\begin{equation*}
m_{n}^{s}(\beta _{0},h)
  = -\E\left\{\frac{\partial }{\partial \beta ^{\prime }}%
m_{n}\left( \beta _{0}\right) \left[ \E(m_{n}m_{n}^{\prime })\right]
^{-1}m_{n} \right\}.
\end{equation*}%
We include $h$ as an argument of $m_{n}^{s}$ to emphasize the dependence of $%
m_{n}^{s}$ on $h$. The standardization can be motivated from the following
considerations. On one hand, the estimating equations need to be close to
zero when evaluated at the true parameter value. Thus we want $%
\E(m_{n}m_{n}^{\prime })$ to be as small as possible. On the other hand, we
want $m_{n}\left( \beta +\delta \beta \right) $ to differ as much as
possible from $m_{n}\left( \beta \right) $ when $\beta $ is the true value.
That is, we want $\E\frac{\partial }{\partial \beta ^{\prime }}m_{n}\left(
\beta _{0}\right) $ to be as large as possible. To meet these requirements,
we choose $h$ to maximize%
\begin{equation*}
\E\left\{m_{n}^{s}(\beta _{0},h)\left[ m_{n}^{s}\left( \beta _{0},h\right) \right]
^{\prime }\right\}
  = \left[ \E\frac{\partial }{\partial \beta ^{\prime }}m_{n}\left(
\beta _{0}\right) \right] \left[ \E(m_{n}m_{n}^{\prime })\right] ^{-1}\left[ \E%
\frac{\partial }{\partial \beta ^{\prime }}m_{n}\left( \beta _{0}\right) %
\right] ^{\prime }.
\end{equation*}%
More specifically, $h^{\ast }$ is optimal if 
\begin{equation*}
\E\left\{m_{n}^{s}(\beta _{0},h^{\ast })\left[ m_{n}^{s}\left( \beta _{0},h^{\ast
}\right) \right] ^{\prime }\right\}
  - \E\left\{m_{n}^{s}(\beta _{0},h)\left[
m_{n}^{s}\left( \beta _{0},h\right) \right] ^{\prime }\right\}
\end{equation*}%
is nonnegative definite for all $h\in \mathbb{R}^{+}$. But $%
\E\left[m_{n}^{s}\left( m_{n}^{s}\right) ^{\prime }\right]=\left( \mathrm{AMSE}_{\beta
}\right) ^{-1}$, so maximizing $\E\left[m_{n}^{s}\left( m_{n}^{s}\right) ^{\prime
}\right]$ is equivalent to minimizing $\mathrm{AMSE}_{\beta }$.

The question is whether such an optimal $h$ exists. If it does, then the
optimal $h^{\ast }$ satisfies 
\begin{equation}  \label{AMSE_obj}
h^{\ast } = \mathop{\rm arg\,min}_{h} u^{\prime }\left[ nh^{2r}\E(B)\E(B') - h\E\left( AA'\right) \right] u
\end{equation}%
for all $u\in \mathbb{R}^{d}$, by the definition of nonpositive definite
plus the fact that the above yields a unique minimizer for any $u$. Using
unit vectors $e_1=(1,0,\ldots,0)$, $e_2=(0,1,0,\ldots,0)$, etc.,\ for $u$,
and noting that $\mathrm{tr}(A)=e_1^{\prime
}Ae_1+\cdots+e_d^{\prime }Ae_d$ for $d\times d$ matrix $A$, this implies
that 
\begin{align*}
h^{\ast } &= \mathop{\rm arg\,min}_{h}\mathrm{tr}\left[ nh^{2r}\E(B)\E(B') - h\E\left( AA'\right) \right] \\
&= \mathop{\rm arg\,min}_{h}\left\{ nh^{2r}\E(B)'\E(B) - h\mathrm{tr}\left[\E\left( AA'\right) \right] \right\} .
\end{align*}%
In view of \eqref{eqn:def_h_SEE}, $h_\text{SEE}^*=h^*$ if $h^*$ exists.
Unfortunately, it is easy to show that no single $h$ can minimize the
objective function in \eqref{AMSE_obj} for all $u\in \mathbb{R}^{d}$. Thus,
we have to redefine the optimality with respect to the direction of $u$. The
direction depends on which linear combination of $\beta $ is the focus of
interest, as $u^{\prime }\left[ nh^{2r}\E(B)\E(B') - h\E\left( AA'\right) \right] u$ is the high-order AMSE of 
$c^{\prime }\sqrt{n}(\hat{\beta}-\beta _{0})$ for $c=\Sigma_{XZ} \left(
V^{-1/2}\right)^{\prime} u$.

Suppose we are interested in only one linear combination. Let $h_{c}^{\ast }$
be the optimal $h$ that minimizes the high-order AMSE of $c^{\prime }\sqrt{n}%
(\hat{\beta}-\beta _{0})$. Then 
\begin{equation*}
h_{c}^{\ast }=\left( \frac{u^{\prime }\E\left( AA^{\prime }\right) u}{%
u^{\prime }\E(B)\E(B')u}\frac{1}{2nr}%
\right) ^{\frac{1}{2r-1}}
\end{equation*}%
for $u=\left( V^{1/2}\right) ^{\prime } \Sigma_{XZ}^{-1} c$. Some algebra
shows that 
\begin{equation*}
h_{c}^{\ast }\geq \left( \frac{1}{\E(B)'\left[
\E\left(AA'\right)\right]^{-1}\E(B)}\frac{1}{2nr}\right) ^{\frac{1}{2r-1}}>0.
\end{equation*}%
So although $h_{c}^{\ast }$ depends on $c$ via $u$, it is nevertheless
greater than zero.

Now suppose without loss of generality we are interested in $d$ directions $%
\left( c_{1},\ldots ,c_{d}\right) $ jointly where $c_{i}\in \mathbb{R}^{d}$.
In this case, it is reasonable to choose $h_{c_{1},\ldots ,c_{d}}^{\ast }$
to minimize the sum of direction-wise AMSEs, i.e.,
\begin{equation*}
h_{c_{1},\ldots ,c_{d}}^{\ast }=\mathop{\rm arg\,min}_{h}%
\sum_{i=1}^{d}u_{i}^{\prime }\left[ nh^{2r}\E(B)\E(B') - h\E\left( AA'\right) \right] u_{i},
\end{equation*}%
where $u_{i}=\left( V^{1/2}\right) ^{\prime }\Sigma _{XZ}^{-1}c_{i}$. It is
easy to show that%
\begin{equation*}
h_{c_{1},\ldots ,c_{d}}^{\ast }
  = \left[ \frac{\sum_{i=1}^{d}u_{i}^{\prime
}\E\left( AA'\right) u_{i}}{\sum_{i=1}^{d}u_{i}^{\prime }\E(B)\E(B')u_{i}}\frac{1}{2nr}\right]^{\frac{1}{%
2r-1}}.
\end{equation*}

As an example, consider $u_{i}=e_{i}=\left( 0,\ldots ,1,\ldots ,0\right) $,
the $i$th unit vector in $\mathbb{R}^{d}$. Correspondingly%
\begin{equation*}
\left( \tilde{c}_{1},\ldots,\tilde{c}_{d}\right) = \Sigma_{XZ} \left(
V^{-1/2}\right)^{\prime } \left( e_{1},\ldots,e_{d}\right) .
\end{equation*}
It is clear that%
\begin{equation*}
h_{\tilde{c}_{1},\ldots ,\tilde{c}_{d}}^{\ast }=h_{\text{SEE}}^{\ast }=h_{%
\text{CPE}}^{\ast } ,
\end{equation*}%
so all three selections coincide with each other. A special case of interest
is when $Z=X$, non-constant regressors are pairwise independent and
normalized to mean zero and variance one, and $U%
\mathpalette{\protect
\independenT}{\perp} X$. Then $u_{i}=c_{i}=e_{i}$ and the $d$ linear
combinations reduce to the individual elements of $\beta $.

The above example illustrates the relationship between $h_{c_{1},\ldots
,c_{d}}^{\ast }$ and $h_{\text{SEE}}^{\ast }$. While $h_{c_{1},\ldots
,c_{d}}^{\ast }$ is tailored toward the flexible linear combinations $%
\left(c_{1},\ldots ,c_{d}\right)$ of the parameter vector, $h_{\text{SEE}%
}^{\ast }$ is tailored toward the fixed $\left( \tilde{c}_{1},\ldots ,\tilde{%
c}_{d}\right) $. While $h_{c_{1},\ldots ,c_{d}}^{\ast }$ and $h_{\text{SEE}%
}^{\ast }$ are of the same order of magnitude, in general there is no
analytic relationship between $h_{c_{1},\ldots ,c_{d}}^{\ast }$ and $h_{%
\text{SEE}}^{\ast }$.

To shed further light on the relationship between $h_{c_{1},\ldots
,c_{d}}^{\ast }$ and $h_{\text{SEE}}^{\ast }$, let $\left \{ \lambda
_{k},k=1,\ldots ,d\right \} $ be the eigenvalues of $nh^{2r}\E(B)\E(B') - h\E\left( AA'\right)$ with the
corresponding orthonormal eigenvectors $\left \{ \ell _{k},k=1,\ldots
,d\right \} $. Then we have $nh^{2r}\E(B)\E(B') - h\E\left( AA'\right) =\sum_{k=1}^{d}\lambda _{k}\ell
_{k}\ell _{k}^{\prime }$ and $u_{i}=\sum_{j=1}^{d}u_{ij}\ell _{j}$ for $%
u_{ij}=u_{i}^{\prime }\ell _{j}$. Using these representations, the objective
function underlying $h_{c_{1},\ldots ,c_{d}}^{\ast }$ becomes 
\begin{align*}
& \sum_{i=1}^{d}u_{i}^{\prime }\left[ nh^{2r}\E(B)\E(B') - h\E\left( AA'\right)\right] u_{i} \\
& =\sum_{i=1}^{d}\left( \sum_{j=1}^{d}u_{ij}\ell _{j}^{\prime }\right)
\left( \sum_{k=1}^{d}\lambda _{k}\ell _{k}\ell _{k}^{\prime }\right) \left(
\sum_{\tilde{j}=1}^{d}u_{i\tilde{j}}\ell _{\tilde{j}}\right) \\
& =\sum_{j=1}^{d}\left( \sum_{i=1}^{d}u_{ij}^{2}\right) \lambda _{j}.
\end{align*}%
That is, $h_{c_{1},\ldots ,c_{d}}^{\ast }$ minimizes a weighted sum of the
eigenvalues of $nh^{2r}\E(B)\E(B') - h\E\left( AA'\right)$ with weights depending on $c_{1},\ldots ,c_{d}$. By
definition, $h_{\text{SEE}}^{\ast }$ minimizes the simple unweighted sum of
the eigenvalues, viz.\ $\sum_{j=1}^{d}\lambda _{j}$. While $h_{\text{SEE}%
}^{\ast }$ may not be ideal if we know the linear combination(s) of
interest, it is a reasonable choice otherwise.

In empirical applications, we can estimate $h_{c_{1},\ldots ,c_{d}}^{\ast }$
using a parametric plug-in approach similar to our plug-in implementation of 
$h_{\text{SEE}}^{\ast }$. If we want to be agnostic about the directional
vectors $c_{1},\ldots ,c_{d}$, we can simply use $h_{\text{SEE}}^{\ast }$.

%%%%%%%%%%%%%%%%%%%%%%%%%%%%%%%%%%%%%%%%%%%%%%%%%%%%

\section{Empirical example: JTPA\label{sec:emp}} % k NEW

We revisit the IV-QR analysis of Job Training Partnership Act (JTPA) data in %
\citet{AbadieEtAl2002}, specifically their Table III.\footnote{%
Their data and Matlab code for replication are helpfully provided online in
the Angrist Data Archive, %
\url{http://economics.mit.edu/faculty/angrist/data1/data/abangim02}.} They
use 30-month earnings as the outcome, randomized offer of JTPA services as
the instrument, and actual enrollment for services as the endogenous
treatment variable. Of those offered services, only around 60 percent
accepted, so self-selection into treatment is likely. Section 4 of %
\citet{AbadieEtAl2002} provides much more background and descriptive
statistics.

We compare estimates from a variety of methods.\footnote{Code and data for replication is available on the first author's website.}
 ``AAI'' is the original paper's
estimator. AAI restricts $X$ to have finite support (see condition (iii) in
their Theorem 3.1), which is why all the regressors in their example are
binary. Our fully automated plug-in estimator is ``SEE ($\hat{h}$).''
``CH'' is \citet{ChernozhukovHansen2006}. Method ``tiny $h$'' uses $h=400$ (compared with our plug-in values on the order of $10\,000$), while ``huge $h$'' uses $h=5\times 10^{6}$. 2SLS is the usual (mean) two-stage least squares estimator, put in the $q=0.5$ column only for convenience of comparison.

\begin{table}[htbp]
\centering
\caption{\label{tab:emp}IV-QR estimates of coefficients for certain regressors as in Table III of \citet{AbadieEtAl2002} for adult men.}
% \begin{tabular}[c]{ccrrrrr}
\begin{tabular}[c]{ccS[table-format=9.0]S[table-format=8.0]S[table-format=6.0]S[table-format=7.0]S[table-format=8.0]}
\hline\hline
 &  &  \multicolumn{5}{c}{Quantile} \\
 \cline{3-7}
Regressor    &     Method     &   \multicolumn{1}{c}{0.15}  &   \multicolumn{1}{c}{0.25}  &   \multicolumn{1}{c}{0.50}  &   \multicolumn{1}{c}{0.75}  &   \multicolumn{1}{c}{0.85}  \\
\hline
Training     & AAI            &      121 &      702 &  1544 &  3131 &  3378 \\
Training     & SEE ($\hat h$) &       57 &      381 &  1080 &  2630 &  2744 \\
Training     & CH             &     -125 &      341 &      385 &  2557 &  3137 \\
Training     & tiny $h$       &     -129 &      500 &      381 &  2760 &  3114 \\
Training     & huge $h$       &  1579 &  1584 &  1593 &  1602 &  1607 \\
Training     & 2SLS           &          &          &  1593 &          &          \\
HS or GED    & AAI            &      714 &  1752 &  4024 &  5392 &  5954 \\
HS or GED    & SEE ($\hat h$) &      812 &  1498 &  3598 &  6183 &  6753 \\
HS or GED    & CH             &      482 &  1396 &  3761 &  6127 &  6078 \\
HS or GED    & tiny $h$       &      463 &  1393 &  3767 &  6144 &  6085 \\
HS or GED    & huge $h$       &  4054 &  4062 &  4075 &  4088 &  4096 \\
HS or GED    & 2SLS           &          &          &  4075 &          &          \\
Black        & AAI            &     -171 &     -377 & -2656 & -4182 & -3523 \\
Black        & SEE ($\hat h$) &     -202 &     -546 & -1954 & -3273 & -3653 \\
Black        & CH             &      -38 &     -109 & -2083 & -3233 & -2934 \\
Black        & tiny $h$       &      -18 &     -139 & -2121 & -3337 & -2884 \\
Black        & huge $h$       & -2336 & -2341 & -2349 & -2357 & -2362 \\
Black        & 2SLS           &          &          & -2349 &          &          \\
Married      & AAI            &  1564 &  3190 &  7683 &  9509 & 10185 \\
Married      & SEE ($\hat h$) &  1132 &  2357 &  7163 & 10174 & 10431 \\
Married      & CH             &      504 &  2396 &  7722 & 10463 & 10484 \\
Married      & tiny $h$       &      504 &  2358 &  7696 & 10465 & 10439 \\
Married      & huge $h$       &  6611 &  6624 &  6647 &  6670 &  6683 \\
Married      & 2SLS           &          &          &  6647 &          &          \\
Constant     & AAI            &     -134 &  1049 &  7689 & 14901 & 22412 \\
Constant     & SEE ($\hat h$) &      -88 &  1268 &  7092 & 15480 & 22708 \\
Constant     & CH             &      242 &  1033 &  7516 & 14352 & 22518 \\
Constant     & tiny $h$       &      294 &  1000 &  7493 & 14434 & 22559 \\
% Constant   & huge $h$       & -1000000    & -800000 &  10641 & 800000 & 1000000 \\
Constant     & huge $h$       & -1157554    & -784046 &  10641 & 805329 & 1178836 \\
Constant     & 2SLS           &          &          & 10641 &          &          \\
\hline
\end{tabular}
\end{table}

Table \ref{tab:emp} shows results from the sample of $5102$ adult men, for a
subset of the regressors used in the model. Not shown in the table are
coefficient estimates for dummies for Hispanic, working less than 13 weeks
in the past year, five age groups, originally recommended service strategy,
and whether earnings were from the second follow-up survey. CH is very close
to ``tiny $h$''; that is,
simply using the smallest possible $h$ with SEE provides a good
approximation of the unsmoothed estimator in this case. Demonstrating our theoretical results in Section \ref{sec:see-comp-IV}, ``huge $h$'' is very close to
2SLS for everything except the constant term for $q\neq 0.5$. The IVQR-SEE estimator using
our plug-in bandwidth has some economically significant differences with the
unsmoothed estimator. Focusing on the treatment variable (``Training''), the unsmoothed median effect estimate is below 
$400$ (dollars), whereas SEE$(\hat{h})$ yields $1080$, both of
which are smaller than AAI's $1544$ (AAI is the most positive at all
quantiles). For the $0.15$-quantile effect, the unsmoothed estimates are
actually slightly negative, while SEE$(\hat{h})$ and AAI are
slightly positive. 
% For $q=0.25$ and $q=0.75$, though, the unsmoothed and $\hat h$ estimates are very close, and 
For $q=0.85$, though, the SEE$(\hat{h})$ estimate is smaller than
the unsmoothed one, and the two are quite similar for $q=0.25$ and $q=0.75$;
there is no systematic ordering.

Computationally, our code takes only one second total to calculate the plug-in bandwidths and coefficient estimates at all five quantiles.  Using the fixed $h=400$ or $h=5\times10^6$, computation is immediate. 

\begin{table}[htbp]
\centering
\caption{\label{tab:emp2}IV-QR estimates similar to Table \ref{tab:emp}, but replacing age dummies with a quartic polynomial in age and adding baseline measures of weekly hours worked and wage.}
\begin{tabular}[c]{ccS[table-format=4.0]S[table-format=4.0]S[table-format=4.0]S[table-format=4.0]S[table-format=4.0]}
\hline\hline
 &  &  \multicolumn{5}{c}{Quantile} \\
 \cline{3-7}
Regressor    &     Method     &   \multicolumn{1}{c}{0.15}  &   \multicolumn{1}{c}{0.25}  &   \multicolumn{1}{c}{0.50}  &   \multicolumn{1}{c}{0.75}  &   \multicolumn{1}{c}{0.85}  \\
\hline\rule{0pt}{12pt}
Training     & SEE ($\hat h$) &       74 &      398 &  1045 &  2748 &  2974 \\
Training     & CH             &      -20 &      451 &      911 &  2577 &  3415 \\
Training     & tiny $h$       &      -50 &      416 &      721 &  2706 &  3555 \\
Training     & huge $h$       &  1568 &  1573 &  1582 &  1590 &  1595 \\
Training     & 2SLS           &          &          &  1582 &          &          \\
\hline
\end{tabular}
\end{table}

Table \ref{tab:emp2} shows estimates of the endogenous coefficient when various ``continuous'' control variables are added, specifically a quartic polynomial in age (replacing the age range dummies), baseline weekly hours worked, and baseline hourly wage.\footnote{Additional JTPA data downloaded from the W.E.\ Upjohn Institute at \url{http://upjohn.org/services/resources/employment-research-data-center/national-jtpa-study}; variables are named \texttt{age}, \texttt{bfhrswrk}, and \texttt{bfwage} in file \texttt{expbif.dta}.}  The estimates do not change much; the biggest difference is for the unsmoothed estimate at the median.  Our code again computes the plug-in bandwidth and SEE coefficient estimates at all five quantiles in one second.  Using the small $h=400$ bandwidth now takes nine seconds total (more iterations of \texttt{fsolve} are needed); $h=5\times10^6$ still computes almost immediately.

%%%%%%%%%%%%%%%%%%%%%%%%%%%%%%%%%%%%%%%%%%%%%%%%%%%%

\section{Simulations\label{sec:sim}}

For our simulation study,\footnote{%
Code to replicate our simulations is available on the first author's
website.} we use $G(u)$ given in \eqref{eqn:G_fun} as in %
\citet{Horowitz1998} and \citet{Whang2006}. This satisfies Assumption \ref%
{a:G} with $r=4$. Using (the integral of) an Epanechnikov
kernel with $r=2$ also worked well in the cases we consider here, though
never better than $r=4$. Our error distributions always have at
least four derivatives, so $r=4$ working somewhat better is expected.
Selection of optimal $r$ and $G(\cdot)$, and the quantitative impact
thereof, remain open questions.

We implement a plug-in version ($\hat h$) of the infeasible $h^{\ast }\equiv h_{\text{%
SEE}}^{\ast }$. We make the plug-in assumption $U%
\mathpalette{\protect
\independenT}{\perp} Z$ and parameterize the distribution of $U$. Our
current method, which has proven quite accurate and stable, fits the
residuals from an initial $h=(2nr)^{-1/(2r-1)}$ IV-QR to Gaussian, $t$,
gamma, and generalized extreme value distributions via maximum likelihood.
With the distribution parameter estimates, $f_{U}(0)$ and $f_{U}^{(r-1)}(0)$
can be computed and plugged in to calculate $\hat{h}$.  With larger $n$, a nonparametric kernel estimator may perform better, but nonparametric estimation of $f_U^{(r-1)}(0)$ will have high variance in smaller samples.  % Since the biggest risk 
Viewing the unsmoothed estimator as a reference point, 
potential regret (of using $\hat h$ instead of $h=0$) is largest when $\hat h$ is too large, 
% comes from taking an $h$ that is too large, 
so we separately calculate $\hat{h}$
for each of the four distributions and take the smallest. Note that this
particular plug-in approach works well even under heteroskedasticity and/or
misspecification of the error distribution: DGPs 3.1--3.6 in Section \ref{sec:sim-additional} have error
distributions other than these four, and DGPs 1.3, 2.2, 3.3--3.6 are
heteroskedastic, as are the JTPA-based simulations. For the infeasible $h^{\ast }$, if the PDF derivative in
the denominator is zero, it is replaced by $0.01$ to avoid $h^{\ast }=\infty 
$.

For the unsmoothed IV-QR estimator, we use code based on %
\citet{ChernozhukovHansen2006} from the latter author's website. 
% , for reasons given in their Section 3.3. 
We use the option to let their code determine the
grid of possible endogenous coefficient values from the data. 
%We allow the code to estimate the grid of endogenous coefficient values to search over. 
This code in turn uses the interior point method in \texttt{rq.m} (developed
by Roger Koenker, Daniel Morillo, and Paul Eilers) to solve exogenous QR
linear programs. %, as do we. 

\subsection{JTPA-based simulations\label{sec:sim-JTPA}}

We use two DGPs based on the JTPA data examined in Section \ref%
{sec:emp}. The first DGP corresponds to the variables used in the original
analysis in \citet{AbadieEtAl2002}. For individual $i$, let $Y_{i}$ be the
scalar outcome (30-month earnings), $X_{i}$ be the vector of exogenous
regressors, $D_{i}$ be the scalar endogenous training dummy, $Z_{i}$ be the
scalar instrument of randomized training offer, and $U_{i}\sim \text{Unif}%
(0,1)$ be a scalar unobservable term. We draw $X_{i}$ from the joint
distribution estimated from the JTPA data. We randomize $Z_{i}=1$ with
probability $0.67$ and zero otherwise. If $Z_{i}=0$, then we set the
endogenous training dummy $D_{i}=0$ (ignoring that in reality, a few percent
still got services). If $Z_{i}=1$, we set $D_{i}=1$ with a probability
increasing in $U_{i}$. Specifically, $P(D_{i}=1\mid Z_{i}=1,U_{i}=u)=\min
\{1,u/0.75\}$, which roughly matches the $P(D_{i}=1\mid Z_{i}=1)=0.62$ in
the data. This corresponds to a high degree of self-selection into treatment
(and thus endogeneity). Then, $Y_{i}=X_{i}\beta _{X}+D_{i}\beta
_{D}(U_{i})+G^{-1}(U_{i})$, where $\beta _{X}$ is the IVQR-SEE $\hat{\beta}%
_{X}$ from the JTPA data (rounded to the nearest $500$), the function $\beta
_{D}(U_{i})=2000U_{i}$ matches $\hat{\beta}_{D}(0.5)$ and the increasing
pattern of other $\hat{\beta}_{D}(q)$, and $G^{-1}(\cdot )$ is a recentered
gamma distribution quantile function with parameters estimated to match the
distribution of residuals from the IVQR-SEE estimate with JTPA data.  
In each of $1000$ simulation replications, we generate $n=5102$ iid observations. 

For the second DGP, we add a second endogenous regressor (and instrument)
and four exogenous regressors, all with normal distributions. Including the
intercept and two endogenous regressors, there are $20$ regressors. The
second instrument is $Z_{2i}\overset{iid}{\sim}N(0,1)$, and the second
endogenous regressor is $D_{2i}=0.8Z_{2i}+0.2\Phi^{-1}(U_i)$. The
coefficient on $D_{2i}$ is $1000$ at all quantiles. The new exogenous
regressors are all standard normal and have coefficients of $500$ at all
quantiles. To make the asymptotic bias of 2SLS relatively more important,
the sample size is increased to $n=50\,000$.

\begin{table}[htbp]
\centering
\caption{\label{tab:sim-JTPA1}Simulation results for endogenous coefficient estimators with first JTPA-based DGP.  ``Robust MSE'' is squared median-bias plus the square of the interquartile range divided by $1.349$, $\textrm{Bias}_{\textrm{median}}^{2}+(\textrm{IQR}/1.349)^2$; it is shown in units of $10^5$, so $7.8$ means $7.8\times10^5$, for example.  ``Unsmoothed'' is the estimator from \citet{ChernozhukovHansen2006}.}
\begin{tabular}[c]{cS[table-format=2.1]S[table-format=2.1]S[table-format=2.1]cS[table-format=5.1]S[table-format=4.1]S[table-format=5.1]}%{crrrlrrr} %S[table-format=4.2]
\hline\hline\rule{0pt}{12pt}
 & \multicolumn{3}{c}{Robust MSE / $10^5$} & & \multicolumn{3}{c}{Median Bias} \\ %MSE, Bias
  \cline{2-4}\cline{6-8} \rule{0pt}{14pt}
$q$ & \multicolumn{1}{c}{Unsmoothed} & \multicolumn{1}{c}{SEE ($\hat h$)} & \multicolumn{1}{c}{2SLS} & &
      \multicolumn{1}{c}{Unsmoothed} & \multicolumn{1}{c}{SEE ($\hat h$)} & \multicolumn{1}{c}{2SLS} \\
\hline %\rule{0pt}{12pt}
$0.15$ &  78.2 &  43.4 &  18.2  &&  -237.6 &   8.7 & 1040.6 \\%262.6 &  54.2 &  17.4  &  -1199.8 & -272.5 & 1001.4
$0.25$ &  30.5 &  18.9 &  14.4  &&  -122.2 &  16.3 & 840.6 \\% 27.2 &  18.8 &  13.8  &  -125.5 &   8.7 & 801.4
$0.50$ &   9.7 &   7.8 &   8.5  &&   24.1 &  -8.5 & 340.6 \\%  9.2 &   7.9 &   8.3  &  -18.0 & -11.7 & 301.4
$0.75$ &   7.5 &   7.7 &   7.6  &&   -5.8 & -48.1 & -159.4 \\%  8.5 &   7.3 &   7.8  &  -19.1 & -28.4 & -198.6
$0.85$ &  11.7 &   9.4 &   8.6  &&   49.9 & -17.7 & -359.4 \\% 12.2 &   9.7 &   9.0  &   10.5 & -11.5 & -398.6
\hline
\end{tabular}
\end{table}

Table \ref{tab:sim-JTPA1} shows results for the first JTPA-based DGP, for three estimators of the endogenous coefficient: \citet{ChernozhukovHansen2006}; SEE with our data-dependent $\hat h$; and 2SLS.  The first and third can be viewed as limits of IVQR-SEE estimators as $h\to0$ and $h\to\infty$, respectively.  We show median bias and ``robust MSE,'' which is squared median bias plus the square of the interquartile range divided by $1.349$, $\textrm{Bias}_{\textrm{median}}^{2}+(\textrm{IQR}/1.349)^2$.  We report these ``robust'' versions of bias and MSE since the (mean) IV estimator does not even possess a first moment in finite samples \citep{Kinal1980}.  We are unaware of an analogous result for IV-QR but remain wary of presenting bias and MSE results for IV-QR, too, especially since the IV estimator is the limit of the SEE IV-QR estimator as $h\to\infty$.  At all quantiles, for all methods, the robust MSE is dominated by the IQR rather than bias.  Consequently, even though the 2SLS median bias is quite large for $q=0.15$, it has less than half the robust MSE of SEE($\hat h$), which in turn has half the robust MSE of the unsmoothed estimator.  With only a couple exceptions, this is the ordering among the three methods' robust MSE at all quantiles.  Although the much larger bias of 2SLS than that of SEE($\hat h$) or the unsmoothed estimator is expected, the smaller median bias of SEE($\hat h$) than that of the unsmoothed estimator is surprising.  However, the differences are not big, and they may be partly due to the much larger variance of the unsmoothed estimator inflating the simulation error in the simulated median bias, especially for $q=0.15$.  The bigger difference is the reduction in variance from smoothing. 

\begin{table}[htbp]
\centering
\caption{\label{tab:sim-JTPA2}Simulation results for endogenous coefficient estimators with second JTPA-based DGP.}
\begin{tabular}[c]{cS[table-format=6.0]S[table-format=6.0]S[table-format=7.0]cS[table-format=4.1]S[table-format=4.1]S[table-format=5.1]}%{crrrlrrr} %S[table-format=4.2]
\hline\hline\rule{0pt}{12pt}
 & \multicolumn{3}{c}{Robust MSE} & & \multicolumn{3}{c}{Median Bias} \\ %MSE, Bias
  \cline{2-4}\cline{6-8} \rule{0pt}{14pt}
$q$ & \multicolumn{1}{c}{$h=400$} & \multicolumn{1}{c}{SEE ($\hat h$)} & \multicolumn{1}{c}{2SLS} & &
      \multicolumn{1}{c}{$h=400$} & \multicolumn{1}{c}{SEE ($\hat h$)} & \multicolumn{1}{c}{2SLS} \\
\hline %\rule{0pt}{12pt}
\multicolumn{8}{c}{\textit{Estimators of binary endogenous regressor's coefficient}} \\
$0.15$ &   780624 &   539542 &  1071377  &&  -35.7 &  10.6 & 993.6 \\%  723856 &   524639 &  1060738  &  -69.5 & -24.9 & 989.3
$0.25$ &   302562 &   227508 &   713952  &&  -18.5 &  17.9 & 793.6 \\%  273749 &   219483 &   705021  &  -15.6 &  17.4 & 789.3
$0.50$ &   101433 &    96350 &   170390  &&  -14.9 & -22.0 & 293.6 \\%   98549 &    89963 &   165726  &  -16.2 & -12.3 & 289.3
$0.75$ &    85845 &    90785 &   126828  &&   -9.8 & -22.5 & -206.4 \\%   87771 &    83331 &   126432  &  -16.0 & -22.4 & -210.7
$0.85$ &   147525 &   119810 &   249404  &&  -15.7 & -17.4 & -406.4 \\%  138270 &   115398 &   250714  &  -15.8 & -17.1 & -410.7
\multicolumn{8}{c}{\textit{Estimators of continuous endogenous regressor's coefficient}} \\
$0.15$ &     9360 &     7593 &    11434  &&   -3.3 &  -5.0 &  -7.0 \\
$0.25$ &    10641 &     9469 &    11434  &&   -3.4 &  -3.5 &  -7.0 \\
$0.50$ &    13991 &    12426 &    11434  &&   -5.7 &  -9.8 &  -7.0 \\
$0.75$ &    28114 &    25489 &    11434  &&  -12.3 & -17.9 &  -7.0 \\
$0.85$ &    43890 &    37507 &    11434  &&  -17.2 & -17.8 &  -7.0 \\
\hline
\end{tabular}
\end{table}

Table \ref{tab:sim-JTPA2} shows results from the second JTPA-based DGP.  The first estimator is now a nearly-unsmoothed SEE estimator instead of the unsmoothed \citet{ChernozhukovHansen2006} estimator.  Although in principle \citet{ChernozhukovHansen2006} can be used with multiple endogenous coefficients, the provided code allows only one, and Tables \ref{tab:emp} and \ref{tab:emp2} show that SEE with $h=400$ produces very similar results in the JTPA data.  For the binary endogenous regressor's coefficient, the 2SLS estimator now has the largest robust MSE since the larger sample size reduces the variance of all three estimators but does not reduce the 2SLS median bias (since it has first-order asymptotic bias).  The plug-in bandwidth yields smaller robust MSE than the nearly-unsmoothed $h=400$ at four of five quantiles.  At the median, for example, compared with $h=400$, $\hat h$ slightly increases the median bias but greatly reduces the dispersion, so the net effect is to reduce robust MSE.  This is consistent with the theoretical results.  For the continuous endogenous regressor's coefficient, the same pattern holds for the nearly-unsmoothed and $\hat h$-smoothed estimators.  Since this coefficient is constant across quantiles, the 2SLS estimator is consistent and very similar to the SEE estimators with $q=0.5$.

\subsection{Comparison of SEE and smoothed criterion function\label{sec:sim-SCF}}

For exogenous QR, smoothing the criterion function (SCF) is a different
approach, as discussed. The following simulations compare the MSE of our SEE
estimator with that of the SCF estimator. All DGPs have $n=50$, $X_{i}%
\overset{iid}{\sim }\text{Unif}(1,5)$, $U_{i}\overset{iid}{\sim }N(0,1)$, $%
X_{i}\mathpalette{\protect \independenT}{\perp}U_{i}$, and $%
Y_{i}=1+X_{i}+\sigma (X_{i})\left( U_{i}-\Phi ^{-1}(q)\right) $. DGP 1 has $%
q=0.5$ and $\sigma (X_{i})=5$. DGP 2 has $q=0.25$ and $\sigma
(X_{i})=1+X_{i} $. DGP 3 has $q=0.75$ and $\sigma (X_{i})=1+X_{i}$. In
addition to using our plug-in $\hat{h}$, we also compute the estimators for
a much smaller bandwidth in each DGP: $h=1$, $h=0.8$, and $h=0.8$,
respectively. Each simulation ran $1000$ replications. We
compare only the slope coefficient estimators.

\begin{table}[htbp]
\center
\caption{\label{tab:sim-SCF}Simulation results comparing SEE and SCF exogenous QR estimators.}
\begin{tabular}[c]{cS[table-format=1.3]S[table-format=1.3]cS[table-format=1.3]S[table-format=1.3]cS[table-format=2.3]S[table-format=2.3]cS[table-format=2.3]S[table-format=2.3]}
\hline\hline
 & \multicolumn{5}{c}{MSE} & & \multicolumn{5}{c}{Bias} \\
  \cline{2-6}\cline{8-12} 
\rule{0pt}{14pt}
 & \multicolumn{2}{c}{Plug-in $\hat h$} && \multicolumn{2}{c}{small $h$} && 
   \multicolumn{2}{c}{Plug-in $\hat h$} && \multicolumn{2}{c}{small $h$} \\
  \cline{2-3}\cline{5-6} \cline{8-9}\cline{11-12}
\rule{0pt}{12pt}
DGP & \multicolumn{1}{c}{SEE} & \multicolumn{1}{c}{SCF} &
    & \multicolumn{1}{c}{SEE} & \multicolumn{1}{c}{SCF} &
    & \multicolumn{1}{c}{SEE} & \multicolumn{1}{c}{SCF} &
    & \multicolumn{1}{c}{SEE} & \multicolumn{1}{c}{SCF} \\
\hline %\rule{0pt}{12pt}
1 & 0.423 & 0.533 && 0.554 & 0.560 &&  -0.011 & -0.013 && -0.011 & -0.009 \\% &&  0.423 & 0.533 && 0.554 & 0.560
2 & 0.342 & 0.433 && 0.424 & 0.430 &&  0.092 & -0.025 && 0.012 & 0.012 \\% &&  0.334 & 0.432 && 0.424 & 0.430
3 & 0.146 & 0.124 && 0.127 & 0.121 &&  -0.292 & -0.245 && -0.250 & -0.232 \\% &&  0.061 & 0.065 && 0.064 & 0.067
\hline
\end{tabular}
\end{table}

Table \ref{tab:sim-SCF} shows MSE and bias for the SEE and SCF estimators, for our plug-in $\hat h$ as well as the small, fixed $h$ mentioned above.  The SCF estimator can have slightly lower MSE, as in the third DGP ($q=0.75$ with heteroskedasticity), but the SEE estimator has more substantially lower MSE in more DGPs, including the homoskedastic conditional median DGP.  The differences are quite small with the small $h$, as expected.  Deriving and implementing an MSE-optimal bandwidth for the SCF estimator could shrink the differences, but based on these simulations and the theoretical comparison in Section \ref{sec:see}, such an effort seems unlikely to yield improvement over the SEE estimator.

\subsection{Additional simulations\label{sec:sim-additional}}

We tried additional data generating processes (DGPs).  The first three DGPs are for exogenous QR, taken directly from \citet{Horowitz1998}.  In each case, $q=0.5$, $Y_i=X_i'\beta_0+U_i$, $\beta_0=(1,1)'$, $X_i=(1,x_i)'$ with $x_i\stackrel{iid}{\sim}\textrm{Uniform}(1,5)$, and $n=50$.  In DGP 1.1, the $U_i$ are sampled iid from a $t_3$ distribution scaled to have variance two.  In DGP 1.2, the $U_i$ are iid from a type I extreme value distribution again scaled and centered to have median zero and variance two.  In DGP 1.3, $U_i=(1+x_i)V/4$ where $V_i\stackrel{iid}{\sim}N(0,1)$. 

DGPs 2.1, 2.2, and 3.1--3.6 are shown in the working paper version; they include variants of the \citet{Horowitz1998} DGPs with $q\ne0.5$, different error distributions, and another regressor. 

DGPs 4.1--4.3 have endogeneity.  DGP 4.1 has $q=0.5$, $n=20$, and $\beta _{0}=(0,1)^{\prime }$.  It uses the reduced form
equations in \citet[equation 2]{CattaneoEtAl2012} with $\gamma _{1}=\gamma
_{2}=1$, $x_{i}=1$, $z_{i}\sim N(0,1)$, and $\pi =0.5$. Similar to their
simulations, we set $\rho =0.5$, $(\tilde{v}_{1i},\tilde{v}_{2i})$ iid $%
N(0,1)$, and $(v_{1i},v_{2i})^{\prime }=\left(\tilde{v}_{1i},\sqrt{1-\rho ^{2}}%
\tilde{v}_{2i}+\rho \tilde{v}_{1i}\right)'$.  DGP 4.2 is similar to DGP 4.1 but with $(\tilde v_{1i},\tilde v_{2i})'$ iid Cauchy, $n=250$, and $\beta_0=\left(0,\left[\rho-\sqrt{1-\rho^2}\right]^{-1}\right)'$. % so that the structural error $u_i=v_{1i}-v_{2i}\beta_0$ is Cauchy with standard deviation $2\left[1-\rho/\left(\rho-\sqrt{1-\rho^2}\right)\right]$. 
DGP 4.3 is the same as DGP 4.1 but with $q=0.35$ (and consequent re-centering of the error term) and $n=30$. 

We compare MSE for our SEE estimator using the plug-in $\hat h$ and estimators using different (fixed) values of $h$.  We include $h=0$ by using unsmoothed QR or the method in \citet{ChernozhukovHansen2006} for the endogenous DGPs.  We also include $h=\infty$ (although not in graphs) by using the usual IV estimator.  For the endogenous DGPs, we consider both MSE and the ``robust MSE'' defined in Section \ref{sec:sim-JTPA} as $\textrm{Bias}_{\textrm{median}}^{2}+(\textrm{IQR}/1.349)^2$. 

For ``size-adjusted'' power (SAP) of a test with nominal size $\alpha$, the
critical value is picked as the $(1-\alpha)$-quantile of the simulated test
statistic distribution. This is for demonstration, not practice. The size
adjustment fixes the left endpoint of the size-adjusted power curve to the
null rejection probability $\alpha $. The resulting size-adjusted power
curve is one way to try to visualize a combination of type I and type II
errors, in the absence of an explicit loss function. One shortcoming is that
it does not reflect the variability/uniformity of size and power over the
space of parameter values and DGPs.

Regarding notation in the size-adjusted power figures, the vertical axis in the size-adjusted
power figures shows the simulated rejection probability. The horizontal axis
shows the magnitude of deviation from the null hypothesis, where a
randomized alternative is generated in each simulation iteration as that
magnitude times a random point on the unit sphere in $\mathbb{R}^{d}$, where 
$\beta \in \mathbb{R}^{d}$. As the legend shows, the dashed line corresponds
to the unsmoothed estimator ($h=0$), the dotted line to the infeasible $h_{%
\text{SEE}}^{\ast }$, and the solid line to the plug-in $\hat{h}$.

For the MSE graphs, the flat horizontal solid and dashed lines are the MSE
of the intercept and slope estimators (respectively) using feasible plug-in $%
\hat{h}$ (recomputed each replication). The other solid and dashed lines
(that vary with $h$) are the MSE when using the value of $h$ from the
horizontal axis. The left vertical axis shows the MSE values for the
intercept parameter; the right vertical axis shows the MSE for slope
parameter(s); and the horizontal axis shows a log transformation of the
bandwidth, $\log _{10}(1+h)$.

Our plug-in bandwidth is quite stable.  The range of $\hat{h}$ values over the
simulation replications is usually less than a factor of $10$, and the range
from $0.05$ to $0.95$ empirical quantiles is around a factor of two. This corresponds to
a very small impact on MSE; note the log transformation in the x-axis in the MSE 
graphs.

\begin{figure}[tbph]
\begin{center}
\includegraphics[clip=true,trim=25 175 145 200,width=.49\textwidth]
	{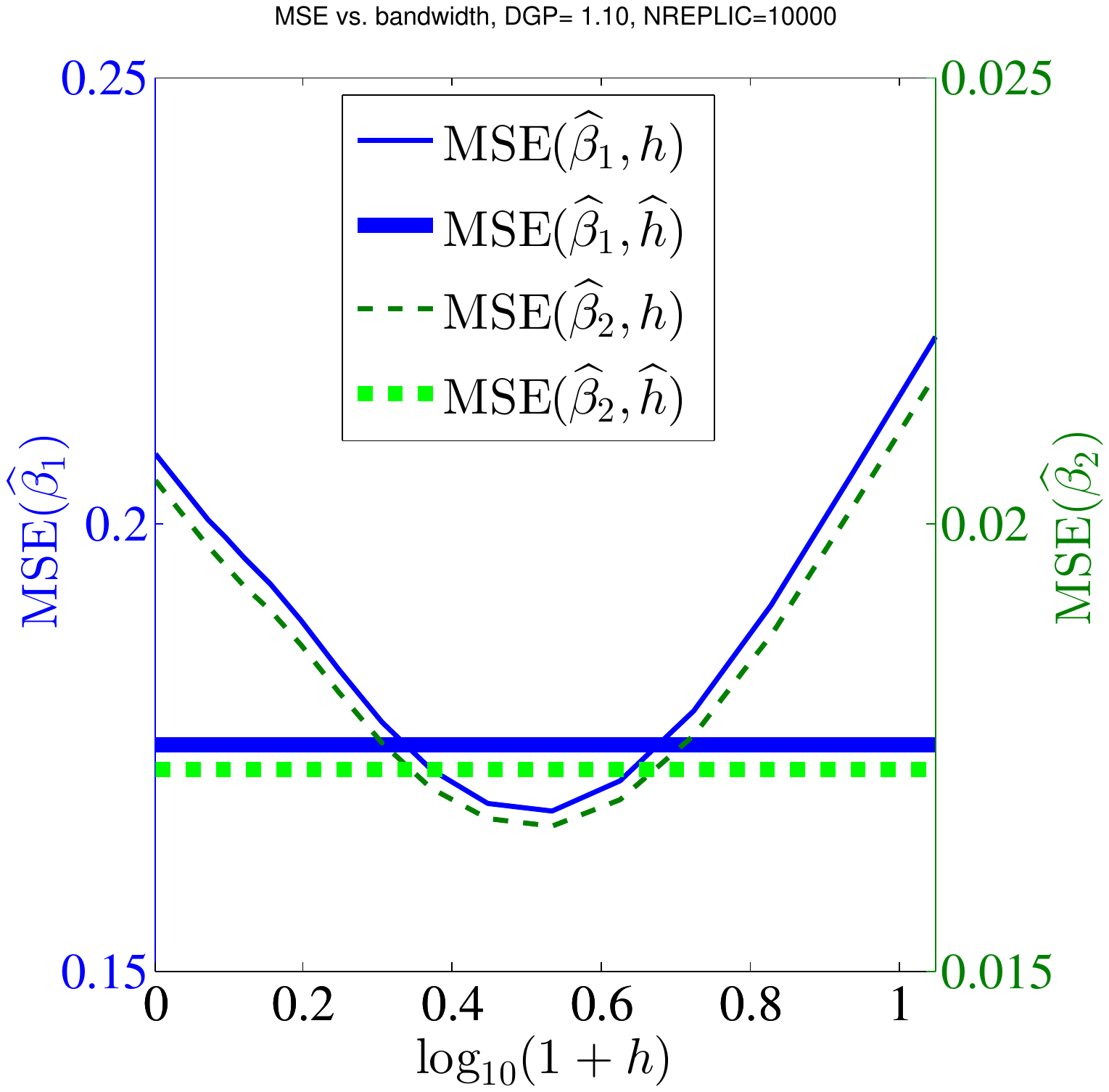}
	\hfill
\includegraphics[clip=true,trim=25 175 145 200,width=.49\textwidth]
	{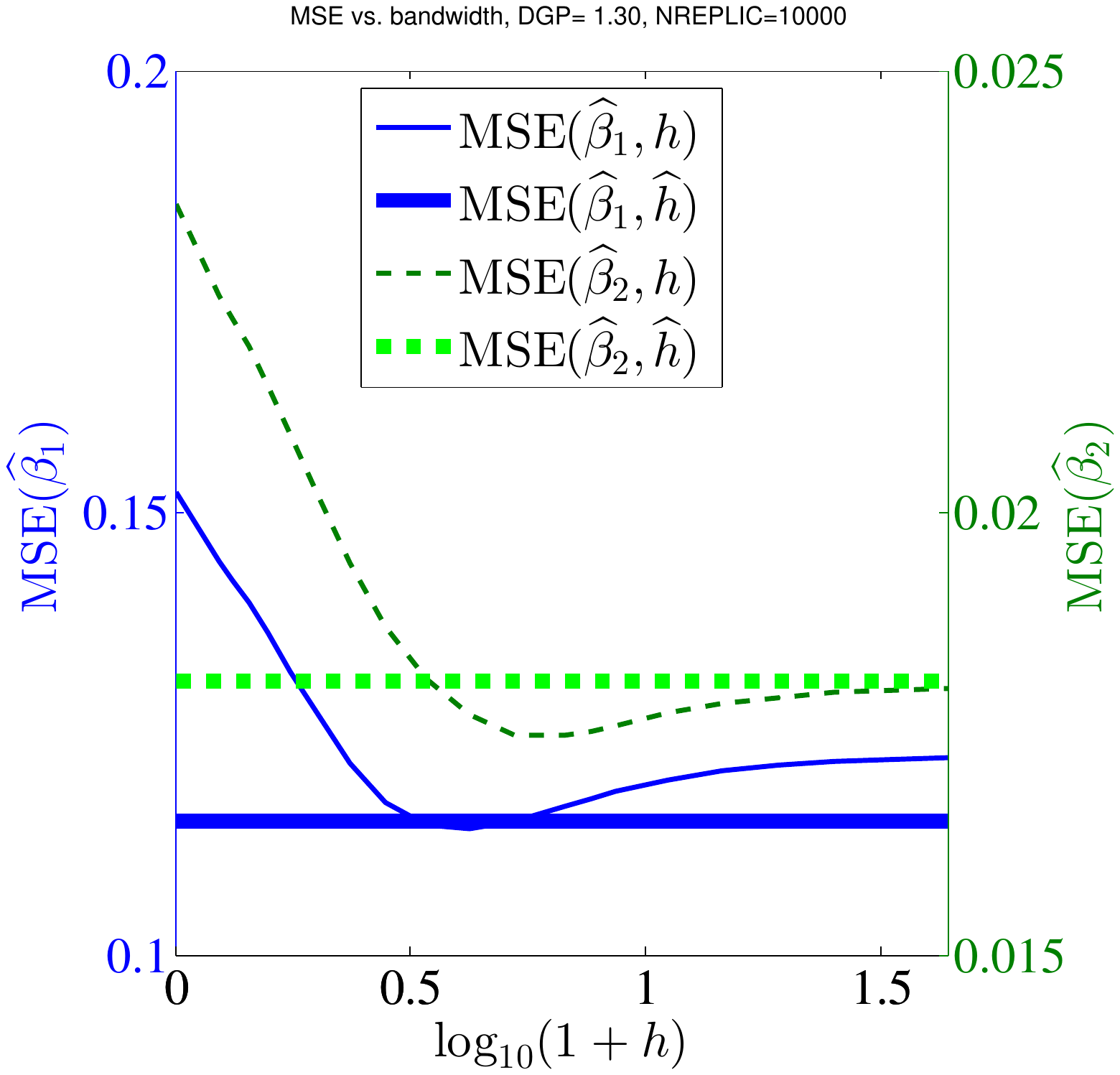}
\end{center}
\caption{MSE for DGPs 1.1 (left) and 1.3 (right).}
\label{fig:mse-1a}
\end{figure}
\begin{figure}[tbph]
\begin{center}
\includegraphics[clip=true,trim=30 175 145 200,width=.48\textwidth]
	{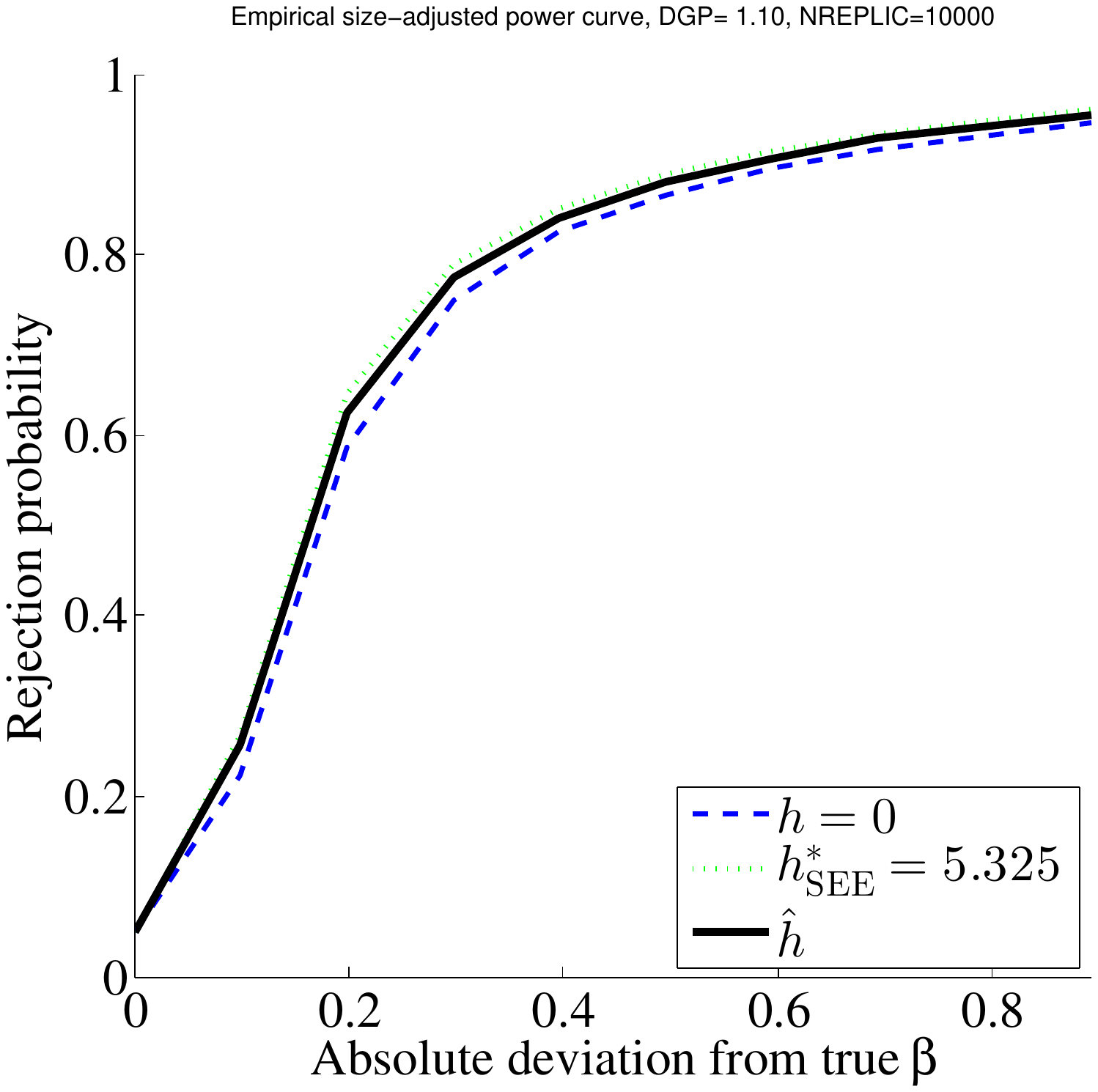}
	\hfill
\includegraphics[clip=true,trim=30 175 145 200,width=.48\textwidth]
	{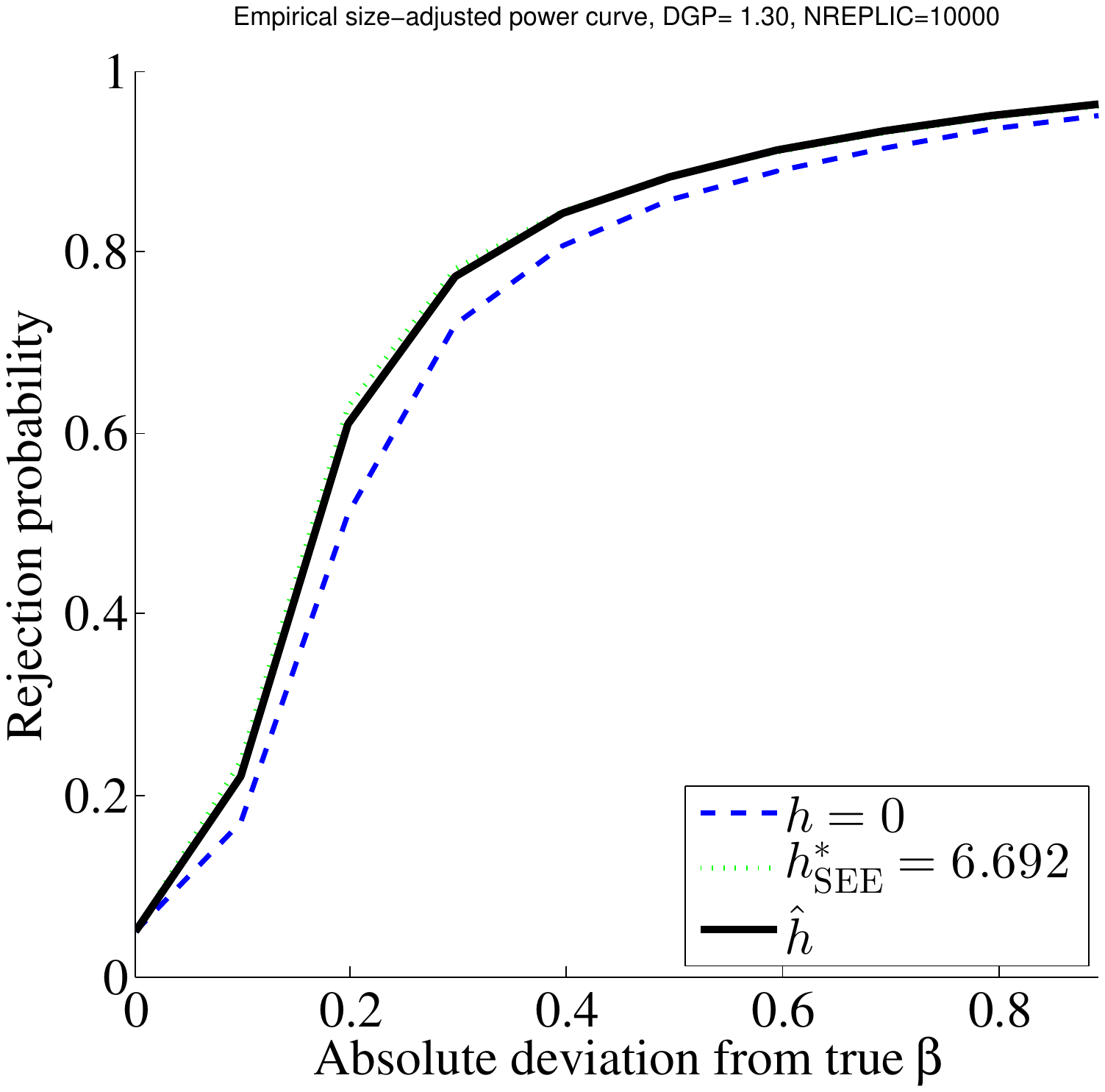}
\end{center}
\caption{Size-adjusted power for DGPs 1.1 (left) and 1.3 (right).}
\label{fig:adjpwr-1a}
\end{figure}

In DGPs 1.1--1.3, SEE($\hat h$) has smaller MSE than either the unsmoothed estimator or OLS, for both the intercept and slope coefficients.  Figure \ref{fig:mse-1a} shows MSE for DGPs 1.1 and 1.3.  It shows that the MSE of SEE($\hat h$) is very close to that of the best estimator with a fixed $h$.  In principle, a data-dependent $\hat h$ can attain MSE even lower than any fixed $h$.  SAP for SEE($\hat h$) is similar to that with $h=0$; see Figure \ref{fig:adjpwr-1a} for DGPs 1.1 and 1.3. 

\begin{figure}[tbph]
\begin{center}
\includegraphics[clip=true,trim=0 175 125 200,width=.49\textwidth]
	{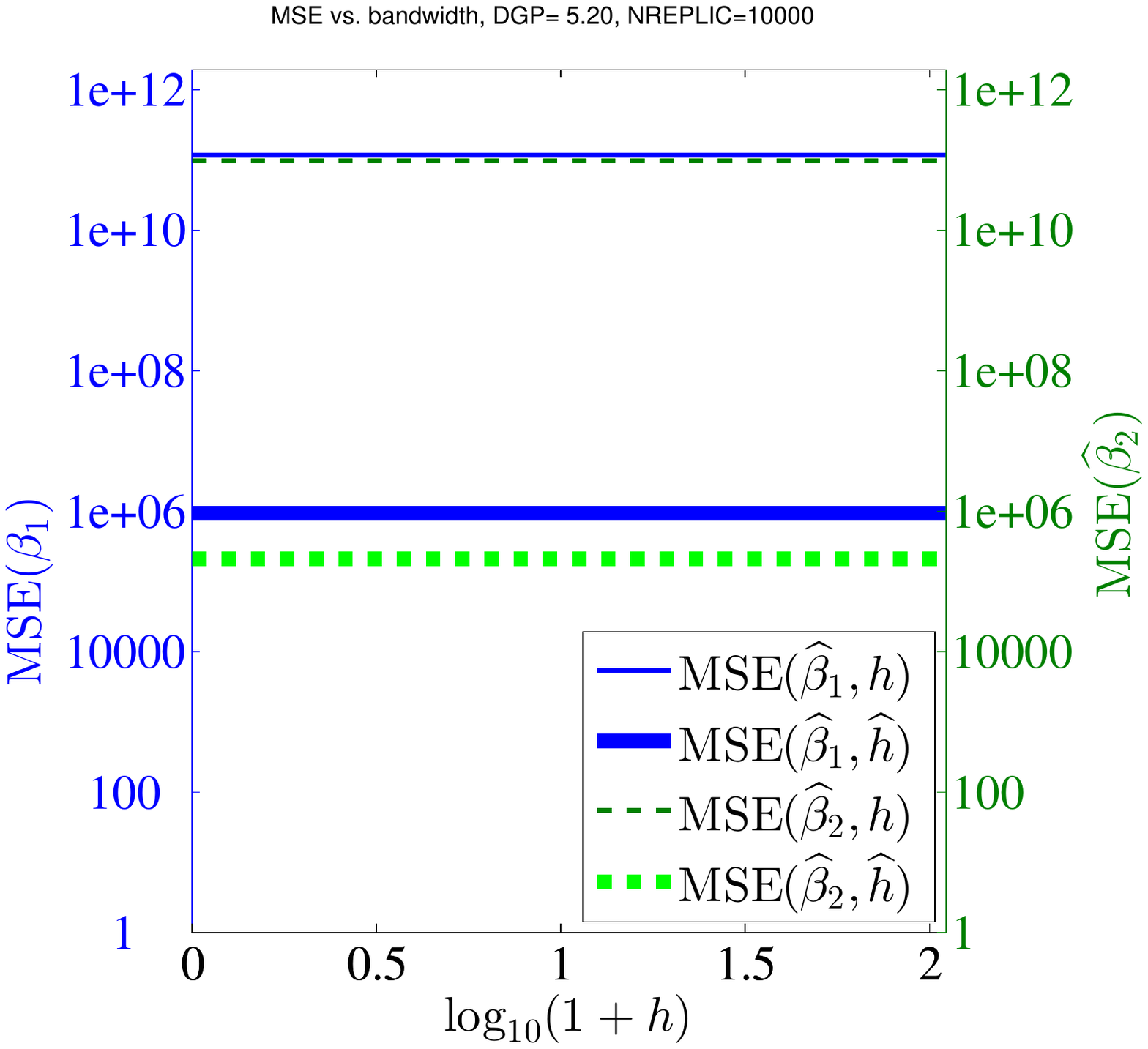}
	\hfill
\includegraphics[clip=true,trim=0 175 125 200,width=.49\textwidth]
	{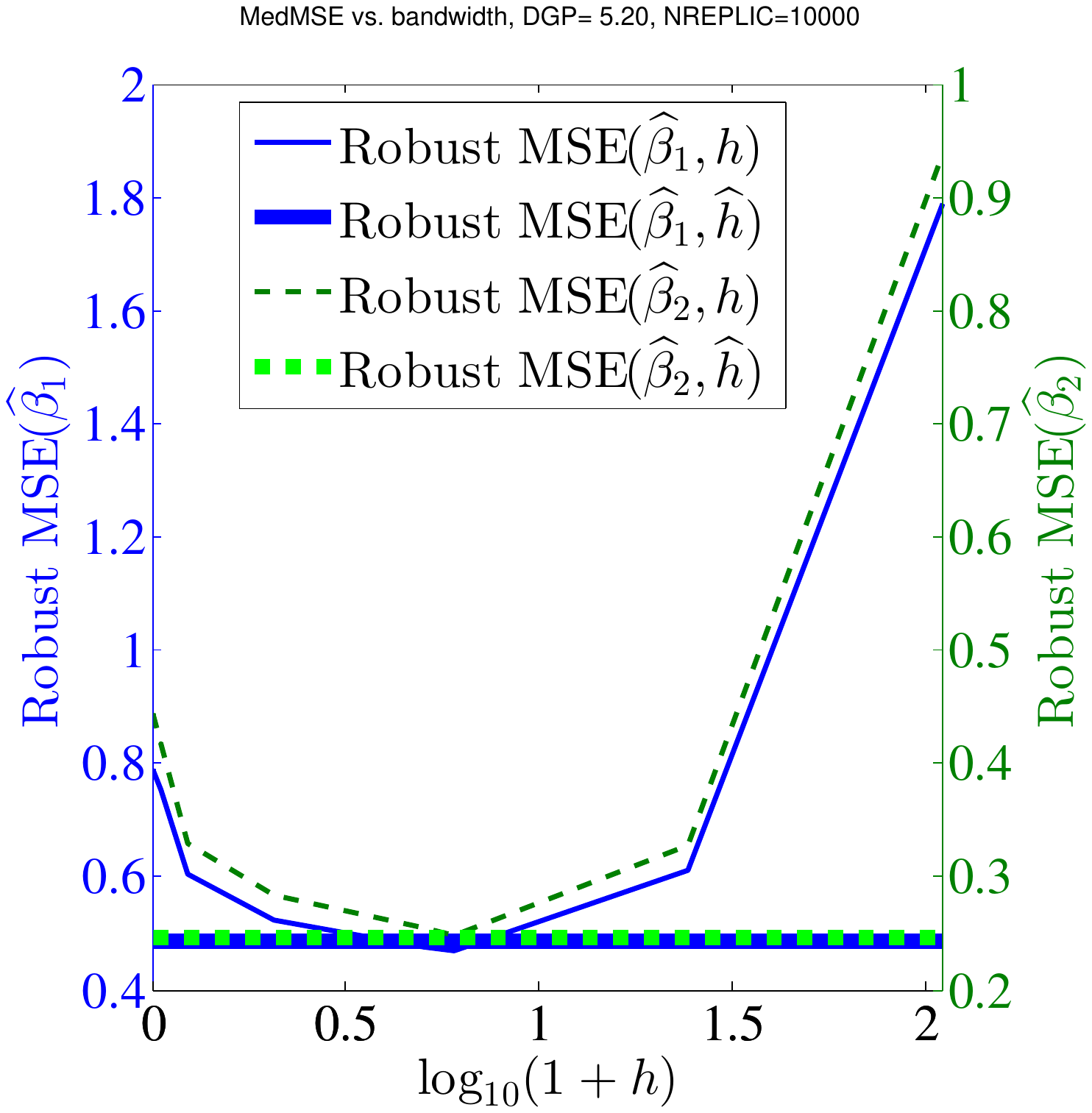}
\end{center}
\caption{For DGP 4.2, MSE (left) and ``robust
MSE'' (right): squared median-bias plus the square of the
interquartile range divided by $1.349$, $\text{Bias}_{\mathrm{median}}^{2}+(%
\text{IQR}/1.349)^{2}$.}
\label{fig:mse-4.2}
\end{figure}

\begin{figure}[tbph]
\centering
\includegraphics[clip=true,trim=25 175 125 200,width=.49\textwidth]
	{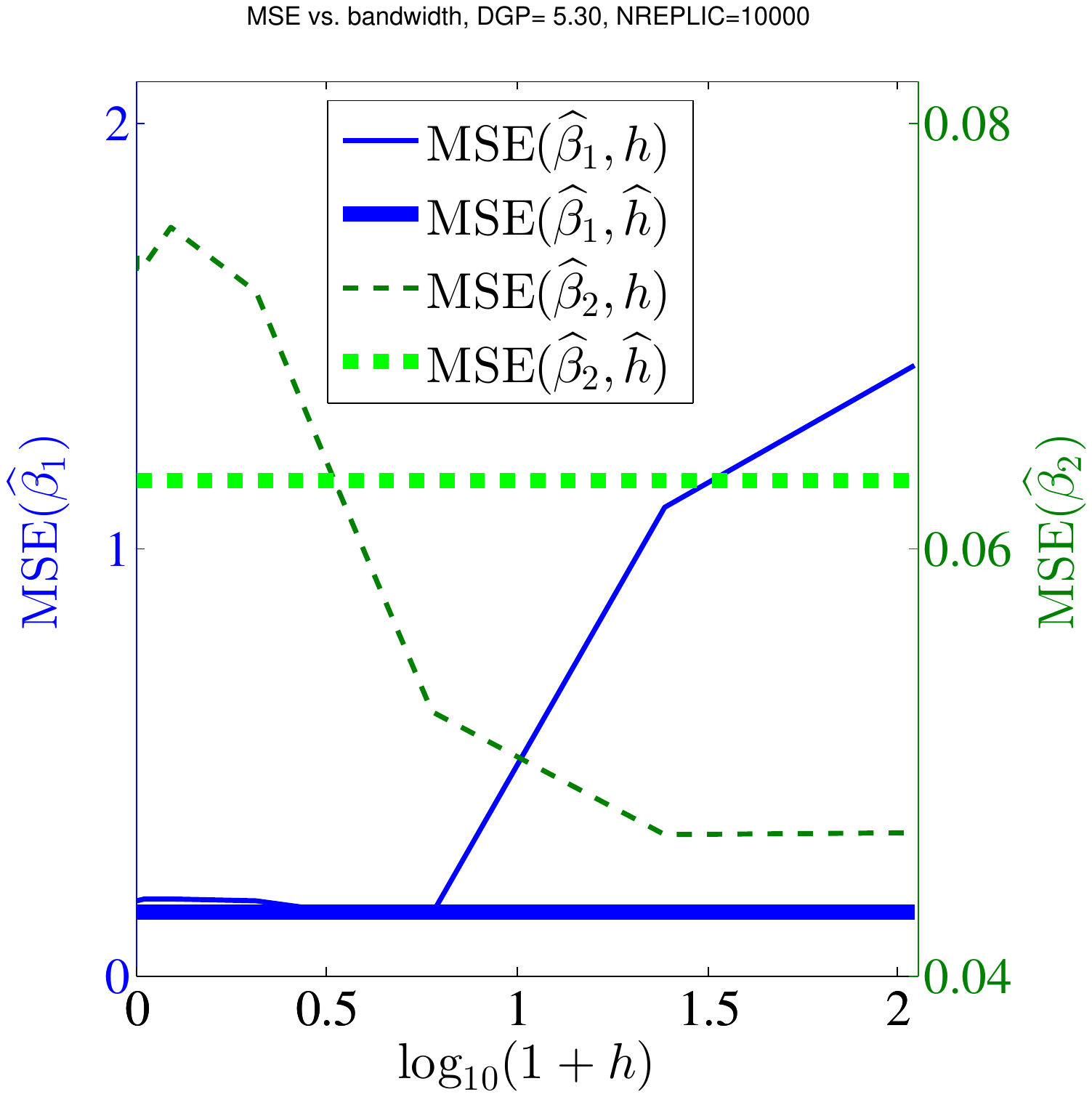}
	\hfill
\includegraphics[clip=true,trim=25 175 125 200,width=.49\textwidth]
	{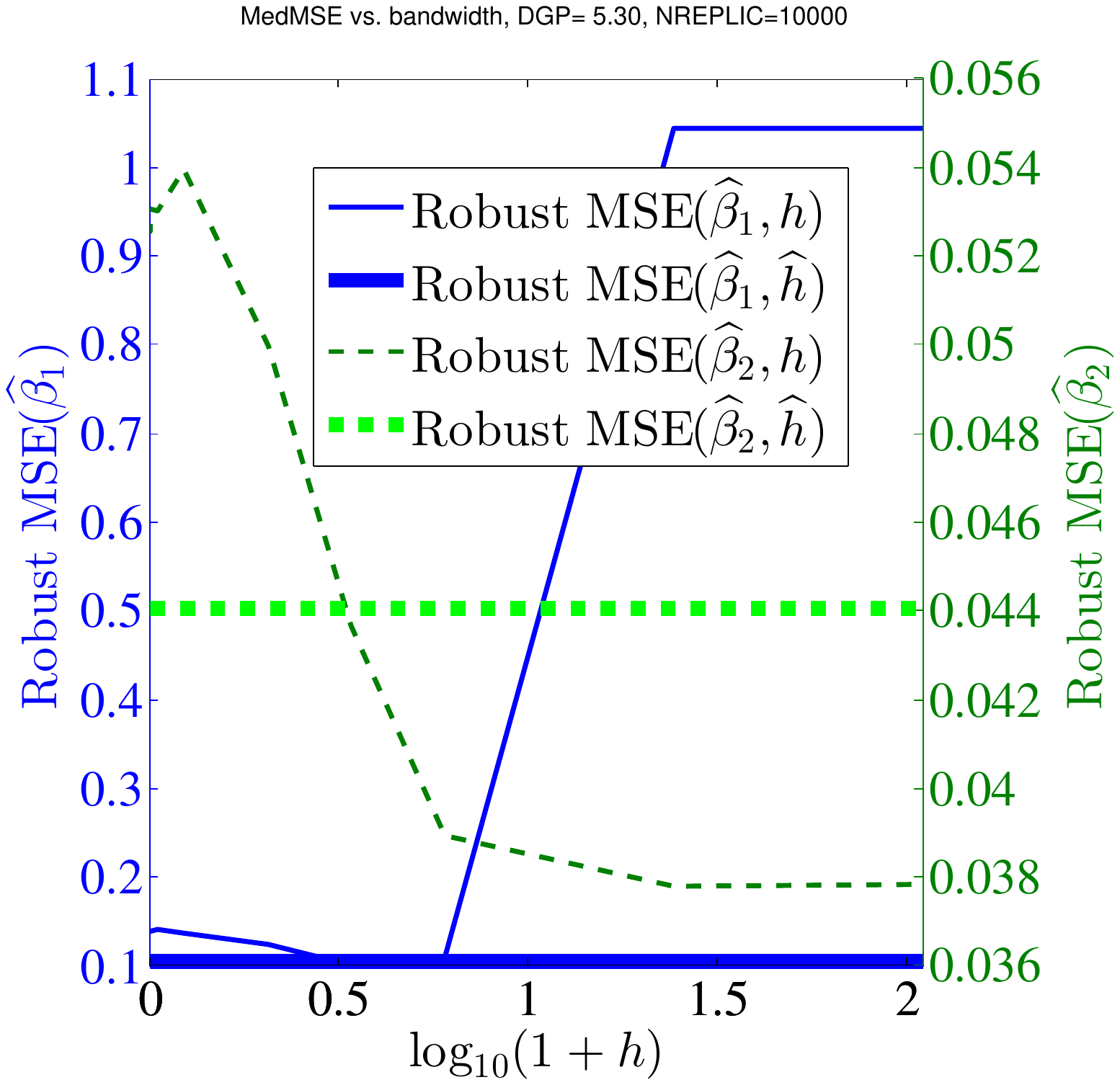}
\caption{Similar to Figure \ref{fig:mse-4.2}, MSE (left) and ``robust MSE'' (right) for DGP 4.3.}
\label{fig:mse-4.3}
\end{figure}

Figures \ref{fig:mse-4.2} and \ref{fig:mse-4.3} show MSE and ``robust MSE'' for two DGPs with endogeneity.  Graphs for the other endogenous DGP (4.1) are similar to those for the slope estimator in DGP 4.3 but with larger MSE; they may be found in the working paper.  The MSE graph for DGP 4.2 is not as informative since it is sensitive to very large outliers that occur in only a few replications.  However, as shown, the MSE for SEE($\hat h$) is still better than that for the unsmoothed IV-QR estimator, and it is nearly the same as the MSE for the mean IV estimator (not shown: $1.1\times10^6$ for $\beta_1$, $2.1\times10^5$ for $\beta_2$).  
% It is also %better than MSE for IV in DGP 4.1, and 
% similar to MSE for IV in 4.2 and 4.3.  
For robust MSE, SEE($\hat h$) is again always better than the unsmoothed estimator.  For DGP 4.3 with normal errors and $q=0.35$, it is similar to the IV estimator, slightly worse for the slope coefficient and slightly better for the intercept, as expected.  Also as expected, for DGP 4.2 with Cauchy errors, SEE($\hat h$) is orders of magnitude better than the mean IV estimator. 
Overall, using $\hat{h}$ appears
to consistently reduce the MSE of all estimator components compared with $h=0$ and with IV ($h=\infty $). Almost always, the exception is cases where
MSE is monotonically decreasing with $h$ (mean regression is more
efficient), in which $\hat{h}$ is much better than $h=0$ but not quite large
enough to match $h=\infty $. 

\begin{figure}[tbph]
\begin{center}
\includegraphics[clip=true,trim=30 175 145 200,width=.48\textwidth]
	{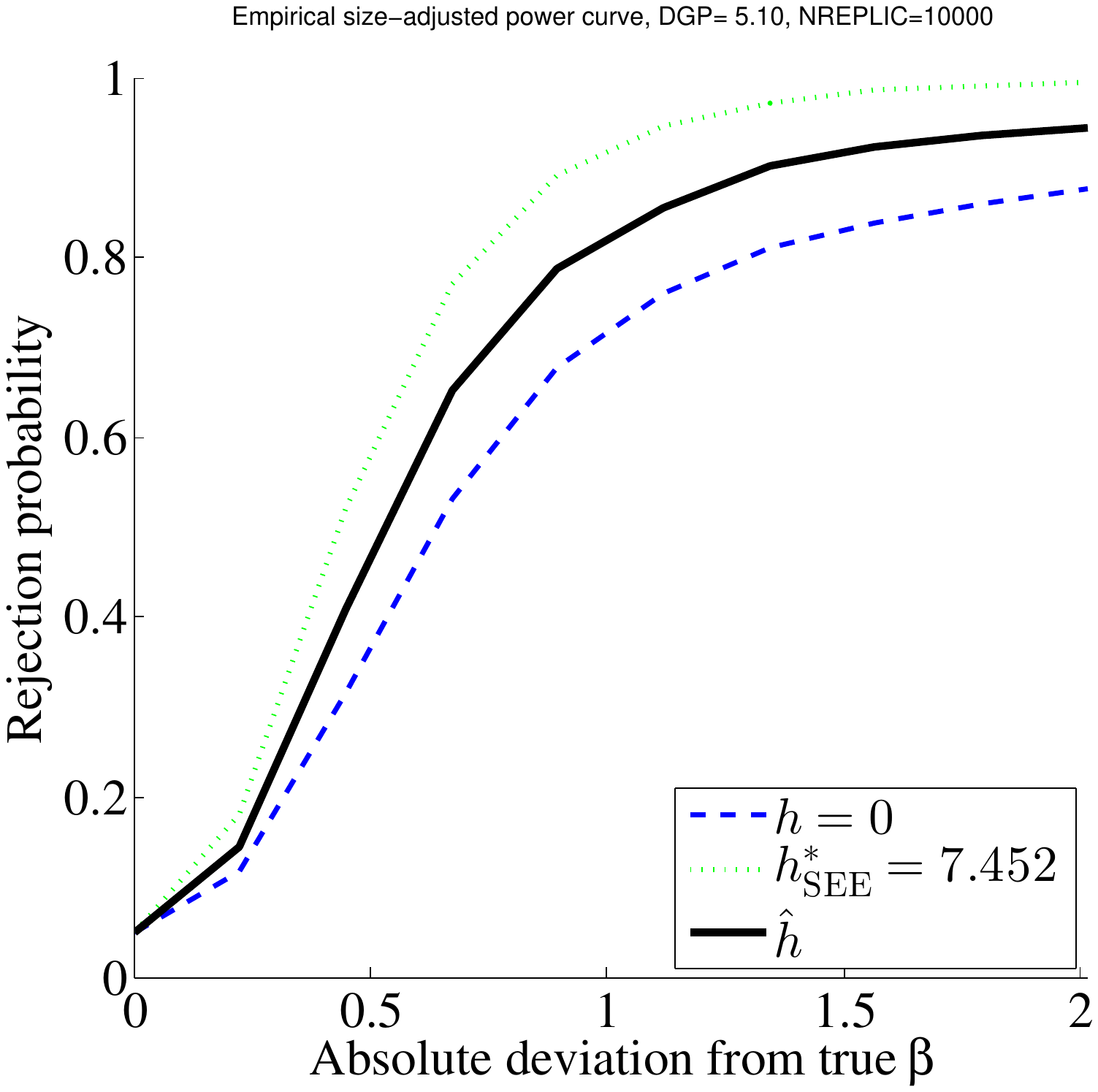} 
	\hfill
\includegraphics[clip=true,trim=30 175 145 200,width=.48\textwidth]
	{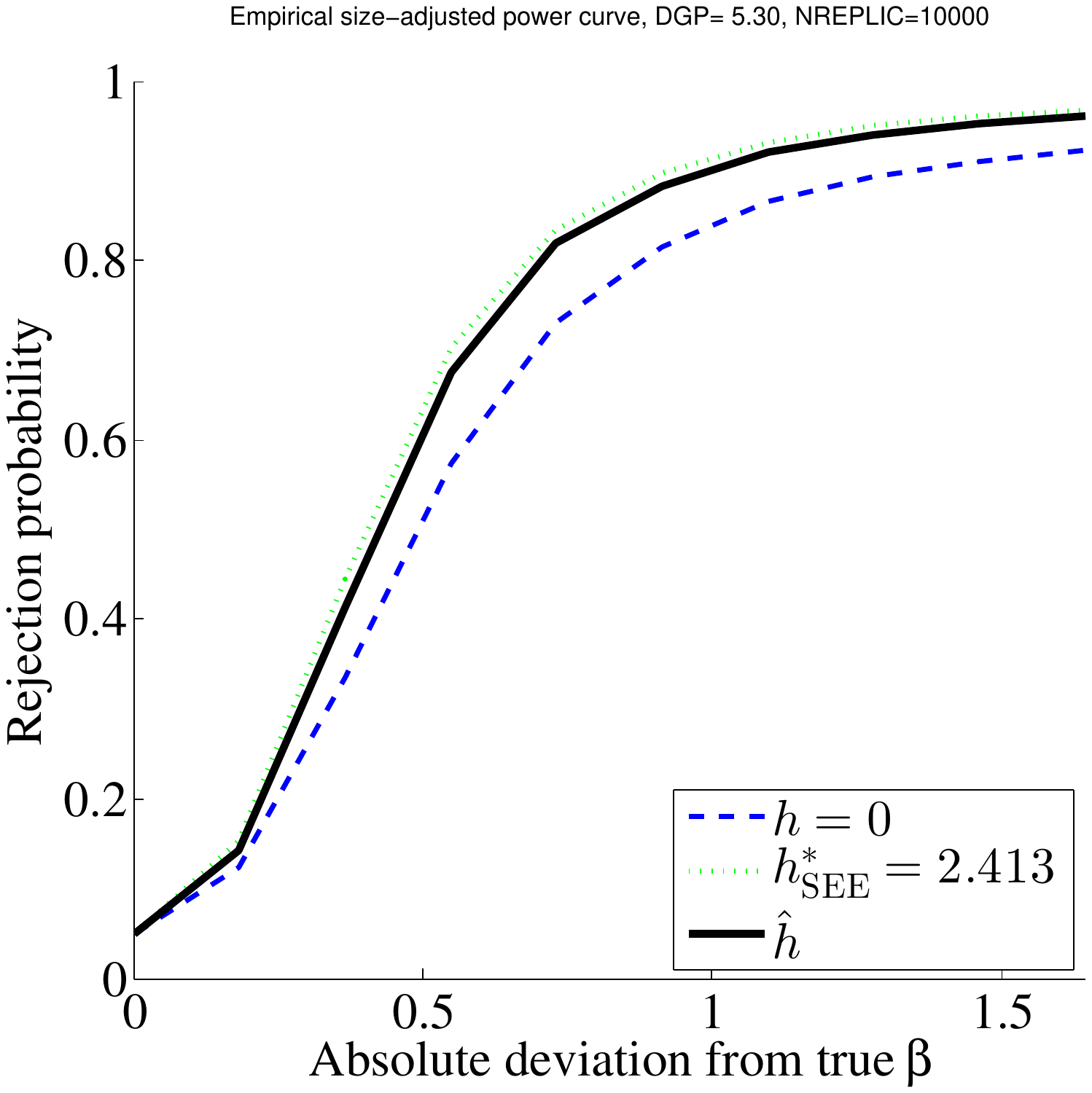}
\end{center}
\caption{Size-adjusted power for DGPs 4.1 (left) and 4.3 (right).}
\label{fig:adjpwr-4a}
\end{figure}

Figure \ref{fig:adjpwr-4a} shows SAP for DGPs 4.1 and 4.3.  The gain from smoothing is more substantial than in the exogenous DGPs, close to $10$ percentage points for a range of deviations.  Here, the randomness in $\hat h$ is not helpful.  In DGP 4.2 (not shown), the SAP for $\hat h$ is actually a few percentage points \emph{below} that for $h=0$ (which in turn is below the infeasible $h^*$), and in DGP 4.1, the SAP improvement from using the infeasible $h^*$ instead of $\hat h$ is similar in magnitude to the improvement from using $\hat h$ instead of $h=0$. Depending on one's loss function of type I and type II errors, the SEE-based test may be preferred or not.

\section{Conclusion\label{sec:conclusion}}

We have presented a new estimator for quantile regression with or without
instrumental variables. Smoothing the estimating equations (moment
conditions) has multiple advantages beyond the known advantage of allowing
higher-order expansions. It can reduce the MSE of both the estimating
equations and the parameter estimator, minimize type I error and improve
size-adjusted power of a chi-square test, and allow more reliable
computation of the instrumental variables quantile regression estimator
especially when the number of endogenous regressors is larger. We have given
the theoretical bandwidth that optimizes these properties, and simulations
show our plug-in bandwidth to reproduce all these advantages over the
unsmoothed estimator. Links to mean instrumental variables regression and
robust estimation are insightful and of practical use.

The strategy of smoothing the estimating equations can be applied to any
model with nonsmooth estimating equations; there is nothing peculiar to the
quantile regression model that we have exploited. For example, this strategy
could be applied to censored quantile regression, or to select the optimal
smoothing parameter in \citeauthor{Horowitz2002}'s (\citeyear{Horowitz2002})
smoothed maximum score estimator. The present paper has focused on
parametric and linear IV quantile regression; extensions to nonlinear IV
quantile regression and nonparametric IV quantile regression along the lines
of \citet{ChenPouzo2009,ChenPouzo2012} are currently under development. 

% \clearpage

\theendnotes

%\pagebreak
% \bibliographystyle{apalike} %was using, but doesn't support citet* etc.
\bibliographystyle{chicago} % ??? chicagomod
% \bibliography{kaplan_qreg}

%%%%%%%%% APPENDIX %%%%%%%%%%%%%%%%%%%%%%%%%%%%%%%%%%%%%%%%%%%%%%
\appendix

\renewcommand{\theequation}{\Alph{section}.\arabic{equation}}
\setcounter{equation}{0}

\section{Appendix of Proofs\label{sec:app-pfs}}

\subsection*{Proof of Theorem \protect \ref{thm:Wj}}

\subsubsection*{First moment of $W_j$}

Let $\left[ U_{L}(z),U_{H}(z)\right] $ be the support of the conditional PDF
of $U$ given $Z=z$. Since $P(U_{j}<0\mid Z_{j})=q$ for almost all $Z_{j}$
and $h\rightarrow 0$, we can assume without loss of generality that $%
U_{L}(Z_{j})\leq -h$ and $U_{H}(Z_{j})\geq h$ for almost all $Z_{j}$. For
some $\tilde{h}\in \left[ 0,h\right] $, we have 
\begin{align*}
\E(W_{j})
  &= \E\left \{ Z_{j}\left[ G(-U_{j}/h)-q\right] \right \} 
   = \E\left \{
\left( \int_{U_{L}(Z_{j})}^{U_{H}(Z_{j})}\left[ G(-u/h)-q\right]
dF_{U|Z}(u\mid Z_{j})\right) Z_{j}\right \} \\
& =\E\left[ \left( \left[ G(-u/h)-q\right] \left. F_{U|Z}(u\mid Z_{j})\right
\vert _{U_{L}(Z_{j})}^{U_{H}(Z_{j})}+\frac{1}{h}%
\int_{U_{L}(Z_{j})}^{U_{H}(Z_{j})}F_{U|Z}(u\mid Z_{j})G^{\prime
}(-u/h)du\right) Z_{j}\right] \\
& =\E\left \{ \left[ -q+\int_{-1}^{1}F_{U|Z}(-hv\mid Z_{j})G^{\prime
}(v)dv\right] Z_{j}\right \} \quad \text{(since $G^{\prime }(v)=0$ for $%
v\notin \lbrack -1,1]$)} \\
& =\E\left \{ \left[ -q+F_{U|Z}(0\mid Z_{j})+\int_{-1}^{1}\left(
\sum_{k=1}^{r}f_{U|Z}^{(k-1)}(0\mid Z_{j})\frac{(-h)^{k}v^{k}}{k!}\right)
G^{\prime }(v)dv\right] Z_{j}\right \} \\
& \quad +\E\left \{ \left[ \int_{-1}^{1}f_{U|Z}^{(r)}(-\tilde{h}v\mid
Z_{j})v^{r}G^{\prime }(v)dv\right] Z_{j}\right \} \frac{(-h)^{r+1}}{\left(
r+1\right) !} \\
& =\frac{(-h)^{r}}{r!}\left[ \int_{-1}^{1}G^{\prime }(v)v^{r}dv\right] \E%
\left[ f_{U|Z}^{(r-1)}(0\mid Z_{j})Z_{j}\right] \\
& \quad +\E\left \{ \left[ \int_{-1}^{1}f_{U|Z}^{(r)}(-\tilde{h}v\mid
Z_{j})v^{r}G^{\prime }(v)dv\right] Z_{j}\right \} O\left( h^{r+1}\right) .
\end{align*}%
Under Assumption \ref{a:fUZ}, for some bounded $C(\cdot )$ we have 
\begin{align*}
& \left \Vert \E\left \{ \left[ \int_{-1}^{1}f_{U|Z}^{(r)}(-\tilde{h}v\mid
Z)v^{r}G^{\prime }(v)dv\right] Z\right \} \right \Vert \\
& \quad \leq \E\left[ \int_{-1}^{1}C(Z)\left \Vert Z\right \Vert \left
\vert v^{r}G^{\prime }(v)\right \vert dv\right] =O(1).
\end{align*}%
Hence 
\begin{equation*}
\E(W_{j})=\frac{(-h)^{r}}{r!}\left[ \int_{-1}^{1}G^{\prime }(v)v^{r}dv\right]
\E\left[ f_{U|Z}^{(r-1)}(0\mid Z_{j})Z_{j}\right] +o(h^{r}).
\end{equation*}

%%%%%%%%%%%%%%%%%%%%%%%%%%%%%%%%%%%%%%%%%%%%%%%%%%%

\subsubsection*{Second moment of $W_j$}

For the second moment, 
\begin{equation*}
\E\left(W_{j}^{\prime }W_{j}\right)
   = \E\left \{ \left[ G(-U_{j}/h)-q\right]^{2}Z_{j}^{\prime }Z_{j}\right \} 
   = \E\left \{ \left[\int_{U_{L}(Z_{j})}^{U_{H}(Z_{j})}\left[ G(-u/h)-q\right] ^{2}dF_{U|Z}(u\mid
Z_{j})\right] Z_{j}^{\prime }Z_{j}\right \} .
\end{equation*}%
Integrating by parts and using Assumption \ref{a:fUZ}(i) in the last line
yields:%
\begin{align*}
& \int_{U_{L}(Z_{j})}^{U_{H}(Z_{j})}\left[ G(-u/h)-q\right]
^{2}dF_{U|Z}(u\mid Z_{j}) \\
& =\left. \left[ G(-u/h)-q\right] ^{2}F_{U|Z}(u\mid Z_{j})\right \vert
_{U_{L}(Z_{j})}^{U_{H}(Z_{j})}+\frac{2}{h}%
\int_{U_{L}(Z_{j})}^{U_{H}(Z_{j})}F_{U|Z}(u\mid Z_{j})\left[ G(-u/h)-q\right]
G^{\prime }(-u/h)du \\
& =q^{2}+2\int_{-1}^{1}F_{U|Z}(hv\mid Z_{j})\left[ G(-v)-q\right] G^{\prime
}(-v)dv\quad \text{(since $G^{\prime }(v)=0$ for $v\notin \lbrack -1,1]$)} \\
& =q^{2}+2q\left \{ \int_{-1}^{1}\left[ G(-v)-q\right] G^{\prime
}(-v)dv\right \} +2hf_{U|Z}(0\mid Z_{j})\left \{ \int_{-1}^{1}v\left[ G(-v)-q%
\right] G^{\prime }(-v)dv\right \} \\
& \quad +\left \{ \int_{-1}^{1}v^{2}f_{U|Z}^{\prime }(\tilde{h}v\mid Z_{j})%
\left[ G(-v)-q\right] G^{\prime }(-v)dv\right \} h^{2}.
\end{align*}%
But 
\begin{align*}
& 2\int_{-1}^{1}\left[ G(-v)-q\right] G^{\prime }(-v)dv=\int_{-1}^{1}2\left[
G(u)-q\right] G^{\prime }(u)du \\
& =\left. \left[ G^{2}(u)-2qG(u)\right] \right \vert _{-1}^{1}=1-2q , \\
& 2\int_{-1}^{1}v\left[ G(-v)-q\right] G^{\prime }(-v)dv=-2\int_{-1}^{1}u%
\left[ G(u)-q\right] G^{\prime }(u)du \\
& \quad =-2\int_{-1}^{1}uG(u)G^{\prime }(u)du = -\left[ \left.
uG^2(u)\right|_{-1}^1 - \int_{-1}^1 G^2(u)du \right] \\
& \quad = -\left[1-\int_{-1}^{1}G^{2}(u)du\right]\quad \text{(by Assumption %
\ref{a:G}(ii))} , \\
%TCIMACRO{\TeXButton{and}{\intertext{and}}}%
%BeginExpansion
\intertext{and}%
%EndExpansion
& \left \vert \int_{-1}^{1}v^{2}f_{U|Z}^{\prime }(\tilde{h}v\mid Z_{j})\left[
G(-v)-q\right] G^{\prime }(-v)dv\right \vert \leq \int_{-1}^{1}C(Z_{j})\left
\vert v^{2}G^{\prime }(v)\right \vert dv
\end{align*}%
for some function $C(\cdot )$. So 
\begin{align*}
\E(W_{j}^{\prime }W_{j}) 
  &= \E\left( \left \{ q^{2}+q\left( 1-2q\right) -hf_{U|Z}(0\mid Z_{j}) 
\left[ 1-\int_{-1}^{1}G^{2}(u)du\right] \right \} Z_{j}^{\prime
}Z_{j}\right) +O(h^{2}) \\
  &= q\left( 1-q\right) \E\left( Z_{j}^{\prime }Z_{j}\right) -h\left[
1-\int_{-1}^{1}G^{2}(u)du\right] \E\left[ f_{U|Z}(0\mid Z_{j})Z_{j}^{\prime
}Z_{j}\right] +O(h^{2}).
\end{align*}%
Similarly, we can show that 
\begin{align*}
\E(W_{j}W_{j}^{\prime }) 
  =q\left( 1-q\right) \E(Z_{j}Z_{j}^{\prime })-h\left[ 1-%
\int_{-1}^{1}G^{2}(u)du\right] \E\left[ f_{U|Z}(0\mid
Z_{j})Z_{j}Z_{j}^{\prime }\right] +O(h^{2}).
\end{align*}

%%%%%%%%%%%%%%%%%%%%%%%%%%%%%%%%%%%%%%%%%%%%%%%%%%%

\subsubsection*{First-order asymptotic distribution of $m_n$}

We can write $m_{n}$ as 
\begin{equation}
m_{n}=\frac{1}{\sqrt{n}}\sum_{j=1}^{n}W_{j}=\frac{1}{\sqrt{n}}%
\sum_{j=1}^{n}\left[ W_{j}-\E(W_j)\right] +\sqrt{n}\E(W_{j}).  \label{CLT_proof}
\end{equation}%
In view of the mean of $W_{j}$, we have $\sqrt{n}\E(W_{j})=O(h^{r}\sqrt{n}%
)=o(1) $ by Assumption \ref{a:h}. So the bias is asymptotically
(first-order) negligible. Consequently, the variance of $W_{j}$ is $%
\E(W_{j}W_{j}^{\prime })+o(1)$, so the first-order term from the second
moment calculation above can be used for the asymptotic variance.

Next, we apply the Lindeberg--Feller central limit theorem to the first term
in \eqref{CLT_proof}, which is a scaled sum of a triangular array since the
bandwidth in $W_{j}$ depends on $n$. We consider the case when $W_{j}$ is a
scalar as vector cases can be handled using the Cram\'{e}r--Wold device.
Note that 
\begin{align*}
\sigma _{W}^{2}& \equiv \mathrm{Var}\left\{ \frac{1}{\sqrt{n}}%
\sum_{j=1}^{n}\left[ W_{j}-\E(W_{j})\right] \right\} =n\frac{1}{n}\mathrm{Var}%
\left[ W_{j}-\E(W_{j})\right] \quad \text{(by iid Assumption \ref{a:sampling})%
} \\
& =\E\left(W_{j}^{2}\right) - \left[ \E(W_{j})\right]^{2}=q\left( 1-q\right) E\left( Z_{j}^{2}%
\right) \left[1+o(1)\right].
\end{align*}%
For any $\varepsilon >0$, 
\begin{align*}
& \lim_{n\rightarrow \infty }\sum_{j=1}^{n}\E\left( \frac{W_{j}-\E W_{j}}{\sqrt{%
n}\sigma _{W}}\right) ^{2}1\left \{ \frac{\left \vert W_{j}-\E W_{j}\right
\vert }{\sqrt{n}\sigma _{W}}\geq \varepsilon \right \} \\
& \quad =\lim_{n\rightarrow \infty }\frac{1}{n}\sum_{j=1}^{n}\E\frac{\left(
W_{j}-\E W_{j}\right) ^{2}}{\sigma _{W}^{2}}1\left \{ \frac{\left \vert
W_{j}-\E W_{j}\right \vert }{\sigma _{W}}\geq \sqrt{n}\varepsilon \right \} \\
& \quad =\lim_{n\rightarrow \infty }\E\frac{\left( W_{j}-EW_{j}\right) ^{2}}{%
\sigma _{W}^{2}}1\left \{ \frac{\left \vert W_{j}-EW_{j}\right \vert }{%
\sigma _{W}}\geq \sqrt{n}\varepsilon \right \} =0,
\end{align*}%
where the last equality follows from the dominated convergence theorem, as 
\begin{equation*}
\frac{\left( W_{j}-\E W_{j}\right) ^{2}}{\sigma _{W}^{2}}
1 \left \{ \frac{\left
\vert W_{j}-\E W_{j}\right \vert }{\sigma _{W}}\geq \sqrt{n}\varepsilon \right
\} \leq C\frac{Z_{j}^{2}+\E(Z_{j}^{2})}{\sigma _{W}^{2}}
\end{equation*}%
for some constant $C$ and $\E(Z_{j}^{2})<\infty $. So the Lindeberg condition
holds and $m_{n}\overset{d}{\rightarrow }N(0,V)$. \qed

\subsection*{Properties of estimating equations derived from smoothed criterion
function}

The EE derived from smoothing the criterion function in \eqref{eqn:SCF-EE} for $Z_j=X_j$ can be written
\begin{align*}
0 &= n^{-1}\sum_{j=1}^n W_j, \quad
W_j \equiv X_j\left[G(-U_j/h)-q\right] + (1/h)G'(-U_j/h)(-X_jU_j) .
\end{align*}
We calculate $\E(W_j)$ and $\E(W_jW_j')$, which can be compared to the results in Theorem \ref{thm:Wj}. 

\subsubsection*{First moment}

% With the EE derived from smoothing the criterion function in \eqref{eqn:SCF-EE} for $X_j=Z_j$, using 
Using iterated expectations, 
\begin{align*}
\E & \left \{ \left[ G\left( -U_{j}/h\right) -q\right] Z_{j}\right \} -\frac{1%
}{h}\E\left[ U_{j}G^{\prime }\left( -U_{j}/h\right) Z_{j}\right] \\
  &= \frac{(-h)^{r}}{r!}\left[ \int G^{\prime }(v)v^{r}dv\right]\E\left[
f_{U|Z}^{(r-1)}(0\mid Z_{j})Z_{j}\right] +o\left( h^{r}\right) 
    -\E\left[h\int vG^{\prime }(v)f_{U|Z}\left( -hv\mid Z_{j}\right) dvZ_{j}\right] \\
  &= \frac{(-h)^{r}}{r!}\left[ \int G^{\prime }(v)v^{r}dv\right] \E\left[
f_{U|Z}^{(r-1)}(0\mid Z_{j})Z_{j}\right] +o\left( h^{r}\right) \\
  & \quad -h\frac{(-h)^{r-1}}{(r-1)!}\left( \int G^{\prime }(v)v^{r}dv\right) \E%
\left[ f_{U|Z}^{(r-1)}(0\mid Z_{j})Z_{j}\right] +o\left( h^{r}\right) \\
  &= (-h)^{r}\left( \frac{1}{r!}+\frac{1}{\left( r-1\right) !}\right) \left(
\int G^{\prime }(v)v^{r}dv\right) \E\left[ f_{U|Z}^{(r-1)}(0\mid Z_{j})Z_{j}%
\right] +o\left( h^{r}\right) .
\end{align*}

\subsubsection*{Second moment}

Here,
\begin{align*}
W_jW_j' &= \underbrace{\left[G(-U_j/h)-q\right]X_jX_j'}_\textrm{Term 1}
          +\underbrace{(2/h)\left[G(-U_j/h)-q\right]G'(-U_j/h)(-U_j)X_jX_j'}_\textrm{Term 2} \\
    &\quad      +\underbrace{h^{-2}\left[G'(-U_j/h)\right]^2U_j^2X_jX_j'}_\textrm{Term 3} .
\end{align*}
Term 1 is identical to the (only) SEE term, so its expectation is identical, too.  The dominant terms of the expectations of Terms 2 and 3 are both positive and $O(h)$.  

For Term 2, the expectation conditional on $X_j$ is
\begin{align*}
& (2/h) \int_{-h}^h [G(-u/h)-q]G'(-u/h)(-u)f_{U|X}(u\mid X_j)du \,X_jX_j' \\
&= (2/h)\int_{-1}^1[G(-v)-q]G'(-v)(-vh)f_{U|X}(hv\mid X_j)hdv \,X_jX_j' \\
&= -2\int_{-1}^1[1-G(v)-q]G'(v)(vh)f_{U|X}(hv\mid X_j)dv \, X_jX_j' \\
&= -2(1-q)\int_{-1}^1 G'(v)(vh)[f_{U|X}(0\mid X_j)+(hv)f_{U|X}'(0\mid X_j)+\cdots+\frac{(hv)^{r-1}}{(r-1)!}f_{U|X}^{(r-1)}(0\mid X_j)]dv \, X_jX_j'  \\
 &\quad +2\int_{-1}^1 G(v)G'(v)(vh)[f_{U|X}(0\mid X_j)+(hv)f_{U|X}'(\tilde vh\mid X_j)]dv \, X_jX_j' \\
&= \left[O(h^r) + 2h\underbrace{\int_{-1}^1 G(v)G'(v)vdv}_\textrm{$>0$}f_{U|X}(0\mid X_j) + O(h^2)\right] X_jX_j' .
\end{align*}
The largest is the $O(h)$ middle term, which is positive. 

For Term 3, the expectation conditional on $X_j$ is
\begin{align*}
& h^{-2} \int_{-h}^h[G'(-u/h)]^2 u^2f_{U|X}(u\mid X_j)du \, X_jX_j' \\
&= h^{-2} \int_{-1}^1[G'(-v)]^2 (hv)^2 f_{U|X}(vh\mid X_j)hdv \, X_jX_j' \\
&= \left[ h f_{U|X}(0\mid X_j)\int_{-1}^1[G'(v)v]^2dv + O(h^2)\right] X_jX_j' .
\end{align*}
This is $O(h)$ and positive since the integrand is positive.  For the $G(\cdot)$ we use in our code, for example, $\int_{-1}^1[G'(v)v]^2dv=0.061$; values are similar if $G'(\cdot)$ is a bisquare ($0.065$) or Epanechnikov ($0.086$) kernel. 

Altogether, 
\begin{align*}
\E(W_jW_j') 
  &= \underbrace{q(1-q)\E(X_jX_j') 
                 -h\left(1-\int_{-1}^1 [G(v)]^2dv\right)\E\left[f_{U|X}(0\mid X_j)X_jX_j'\right]
                 +O(h^2)}_\textrm{from Term 1} \\
  &\qquad+
     \underbrace{h \int_{-1}^1 2G(v)G'(v)vdv \, \E\left[f_{U|X}(0\mid X_j)X_jX_j'\right]}_\textrm{from Term 2} \\
  &\qquad+
     \underbrace{h \int_{-1}^1[G'(v)v]^2dv \, \E\left[f_{U|X}(0\mid X_j)X_jX_j'\right] +O(h^2)}_\textrm{from Term 3} \\
  &= q(1-q)E(X_jX_j')
     +h \int_{-1}^1[G'(v)v]^2dv \, \E\left[f_{U|X}(0\mid X_j)X_jX_j'\right]
     +O(h^2) .
\end{align*}
The cancellation between the $O(h)$ parts of Terms 1 and 2 is by integration by parts of Term 2, using $\D{}{v}G(v)^2=2G(v)G'(v)$.  The remaining $O(h)$ term is positive, whereas for SEE the $O(h)$ term is negative.

\subsubsection*{AMSE of estimator} % done

Analogous to \eqref{eqn:expansion2}, we examine the estimator's properties by way of a mean value expansion, using a similar change of variables, Taylor expansion, and properties of $G'(\cdot)$, 
\begin{align*}
\E & \D{}{\beta'} n^{-1/2}m_n(\beta_0) \\
  &= \E\left[ (1/h)G'(-U_j/h) X_jX_j' 
            +\underbrace{h^{-1}G'(-U_j/h) X_jX_j' 
            -h^{-2}G''(-U_j/h)U_jX_jX_j'}_\textrm{Not in SEE; using product rule for derivative}
            \right] \\
  &= \E\left\{ 2 \int_{-1}^1 G'(-v)\left[f_{U|X}(0\mid X_j) +\cdots\right] dv \, X_jX_j'\right\} \\
  &\quad+ \E\left\{ h^{-1}\int_{-1}^1 G''(v)vh\left[f_{U|X}(0\mid X_j)+(vh)f_{U|X}'(0\mid X_j)+\cdots\right]dv \, X_jX_j' \right\} \\
  &= \E[2f_{U|X}(0\mid X_j)X_jX_j'] + O(h^r) \\
  &\quad+(-1)\E[f_{U|X}(0\mid X_j)X_jX_j'] 
    +(-1)h \E\left[f_{U|X}'(0\mid X_j)X_jX_j'\right] 
    +O(h^2) \\
  &= \E[f_{U|X}(0\mid X_j)X_jX_j'] 
    -h \E\left[f_{U|X}'(0\mid X_j)X_jX_j'\right] 
    +O(h^2) ,
\end{align*}
where $G''(-v)=-G''(v)$, and integration by parts gives 
\[ \int_{-1}^1 vdG'(v)=\left.vG'(v)\right|_{-1}^1 -\int_{-1}^1 G'(v)dv=0-1=-1. \]
The dominant term is the same as for SEE.  However, the $O(h)$ term will also affect the estimator's AMSE, through the first-order variance.  Its sign is indeterminant since it depends on the conditional PDF derivative.  

\subsection*{Proof of Proposition \protect \ref{prop:hSEE}}

The first expression comes directly from the FOC. Under the assumption $U%
\mathpalette{\protect \independenT}{\perp} Z$, we have 
\begin{equation*}
h_{\text{SEE}}^{\ast }=\left( \frac{\left( r!\right) ^{2}\left[
1-\int_{-1}^{1}G^{2}(u)du\right] f_{U}(0)}{2r\left( \int_{-1}^{1}G^{\prime
}(v)v^{r}dv\right) ^{2}\left[ f_{U}^{\left( r-1\right) }(0)\right] ^{2}}%
\frac{\E(Z^{\prime }V^{-1}Z)}{\E(Z)'V^{-1}\E(Z)}\frac{1}{n}\right) ^{%
\frac{1}{2r-1}}.
\end{equation*}%
The simplified $h_{\text{SEE}}^{\ast }$ then follows from the lemma below.

\begin{lemma}
\label{lem:h-simplify} If $Z\in \mathbb{R}^d$ is a random vector with first
element equal to one and $V\equiv \E(ZZ^{\prime})$ is nonsingular, then 
\begin{equation*}
\E(Z'V^{-1}Z) / \left[\E(Z')V^{-1}\E(Z)\right] = d.
\end{equation*}
\end{lemma}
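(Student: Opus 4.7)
The plan is to evaluate the numerator and denominator of the ratio separately, each by an elementary identity, and then take the quotient. Both computations hinge on the two structural facts available: (a) $V=\E(ZZ')$ (the raw second-moment matrix, not a centered covariance), and (b) the first component of $Z$ equals $1$ almost surely.

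For the numerator $\E(Z'V^{-1}Z)$, I will use the scalar/trace identity $Z'V^{-1}Z=\tr(V^{-1}ZZ')$ together with linearity of trace and expectation, swapping the order to obtain $\E(Z'V^{-1}Z)=\tr(V^{-1}\E(ZZ'))=\tr(V^{-1}V)=\tr(I_d)=d$. This only requires $V$ to be invertible.

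For the denominator $\E(Z')V^{-1}\E(Z)$, I will use the constant-first-coordinate assumption. Let $e_1=(1,0,\ldots,0)'\in\mathbb{R}^d$. Because the first entry of $Z$ is the constant $1$, we have $\E(ZZ')e_1=\E(Z\cdot Z'e_1)=\E(Z\cdot 1)=\E(Z)$, i.e.\ $Ve_1=\E(Z)$. Inverting gives $V^{-1}\E(Z)=e_1$, and therefore $\E(Z)'V^{-1}\E(Z)=\E(Z)'e_1=\E(Z_1)=1$.

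Dividing the two results yields $d/1=d$, as claimed. There is no real obstacle here; the only nuance is remembering that the lemma uses $V=\E(ZZ')$ rather than the centered $\mathrm{Var}(Z)$, which is precisely what allows the identity $Ve_1=\E(Z)$ to collapse the denominator to $1$. If $V$ were the covariance matrix, the proof would fail (and the identity would not hold in general).
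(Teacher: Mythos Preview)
Your proof is correct and follows essentially the same approach as the paper: the trace trick for the numerator, and the observation that $Ve_1=\E(Z)$ (equivalently, the first column of $V$ is $\E(Z)$) to conclude $V^{-1}\E(Z)=e_1$ and hence the denominator equals $1$. The only cosmetic difference is that the paper writes out $V$ in block form to read off the first column, whereas you compute $Ve_1$ directly.
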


\begin{proof}
For the numerator, rearrange using the trace: 
\begin{align*}
\E(Z^{\prime}V^{-1}Z) 
  &= \E\left[\mathrm{tr}\left(Z^{\prime}V^{-1}Z\right)%
\right] 
   = \E\left[\mathrm{tr}\left(V^{-1}ZZ'\right) \right] 
   = \mathrm{tr}\left[V^{-1}\E(ZZ')\right]
   = \mathrm{tr}\left(I_d\right) = d.
\end{align*}

For the denominator, let $\E(Z^{\prime })=(1,t^{\prime })$ for some $t\in 
\mathbb{R}^{d-1}$. Since the first element of $Z$ is one, the first row and
first column of $V$ are $\E(Z^{\prime })$ and $\E(Z)$. Writing the other $%
(d-1)\times (d-1)$ part of the matrix as $\Omega $, 
\begin{equation*}
V=\E(ZZ^{\prime })=\left( 
\begin{array}{cc}
1 & t^{\prime } \\ 
t & \Omega%
\end{array}%
\right) .
\end{equation*}%
We can read off $V^{-1}\E(Z)=(1,0,\ldots ,0)^{\prime }$ from the first column
of the identity matrix since 
\begin{equation*}
V^{-1}\left( 
\begin{array}{cc}
1 & t^{\prime } \\ 
t & \Omega%
\end{array}%
\right) =V^{-1}V=I_{d}=\left( 
\begin{array}{cccc}
1 & 0 & \cdots & 0 \\ 
0 & 1 & \cdots & 0 \\ 
\vdots & \vdots & \ddots & \vdots \\ 
0 & 0 & \cdots & 1%
\end{array}%
\right) .
\end{equation*}%
Thus, 
\begin{equation*}
\E(Z')V^{-1}\E(Z) = (1,t^{\prime })(1,0,\ldots ,0)^{\prime }=1. \qedhere
\end{equation*}
\end{proof}

%%%%%%%%%%%%%%%%%%%%%%%%%%%%%%%%%%%%%%%%%%%%%%%%%%%

\subsection*{Proof of Theorem \protect \ref{thm:inf}}

Adding to the variables already defined in the main text, let 
\begin{equation*}
Z_{j}^{\ast }\equiv \left( \E Z_{j}Z_{j}^{\prime }\right) ^{-1/2}Z_{j}\text{
and }D_{n}\equiv n^{-1}\sum_{j=1}^{n}\left( Z_{j}^{\ast }Z_{j}^{\ast \prime
}-\E Z_{j}^{\ast }Z_{j}^{\ast \prime }\right) =\frac{1}{n}%
\sum_{j=1}^{n}Z_{j}^{\ast }Z_{j}^{\ast \prime }-I_{d}.
\end{equation*}%
Then using the definition of $\Lambda _{n}$ in \eqref{Lambda_n}, we have 
\begin{align}
\Lambda _{n}^{-1}\hat{V}\left( \Lambda _{n}^{-1}\right) ^{\prime }&
=n^{-1}\sum_{j=1}^{n}\Lambda _{n}^{-1}Z_{j}\left( \Lambda
_{n}^{-1}Z_{j}\right) ^{\prime }q(1-q)  \notag \\
& =\left[ I_{d}-\E\left( AA^{\prime }\right) h+O\left(h^{2}\right)\right]^{-1/2}\left( 
\frac{1}{n}\sum_{j=1}^{n}Z_{j}^{\ast }Z_{j}^{\ast \prime }\right) 
   \left[ I_{d}-E\left( AA^{\prime }\right) h+O\left(h^{2}\right)\right]^{-1/2}  \notag \\
& =\left[ I_{d}-\E\left( AA^{\prime }\right) h+O\left(h^{2}\right)\right]^{-1/2}\left(
I_{d}+D_{n}\right) \left[ I_{d}-\E\left( AA^{\prime }\right)
h+O\left(h^{2}\right)\right]^{-1/2}  \notag \\
& =\left[ I_{d}+(1/2)\E\left( AA^{\prime }\right) h+O\left(h^{2}\right)\right] \left(
I_{d}+D_{n}\right) \left[ I_{d}+(1/2)\E\left( AA^{\prime }\right) h+O\left(h^{2}\right)%
\right] .  \notag
\end{align}%
Let $\xi _{n}=\left( I_{d}+D_{n}\right) ^{-1}-\left( I_{d}-D_{n}\right)
=\left( I_{d}+D_{n}\right) ^{-1}D_{n}^{2}$, then%
\begin{align}
&\left[ \Lambda _{n}^{-1}\hat{V}\left( \Lambda _{n}^{-1}\right) ^{\prime }%
\right] ^{-1}  \notag \\
&\quad= \left[ I_{d}-\frac{1}{2}\E\left( AA^{\prime }\right) h+O\left(h^{2}\right)\right]
\left( I_{d}-D_{n}+\xi _{n}\right) \left[ I_{d}-\frac{1}{2}\E\left(
AA^{\prime }\right) h+O\left(h^{2}\right)\right]  \notag \\
&\quad= I_{d}-\E\left( AA^{\prime }\right) h+\eta _{n},
\label{proof_thm:inf1.5}
\end{align}%
where $\eta_n=-D_{n}+D_{n}O\left( h\right) +\xi _{n}+O(h^{2})+\xi _{n}O(h) $
collects the remainder terms. To evaluate the order of $\eta _{n}$, we start
by noting that $\E\left( \left \Vert D_{n}\right \Vert ^{2}\right) =O\left(
1/n\right) $. Let $\lambda _{\min }\left( \cdot \right) $ and $\lambda
_{\max }\left( \cdot \right) $ be the smallest and largest eigenvalues of a
matrix, then for any constant $C>2\sqrt{d}>0$: 
\begin{align*}
P & \left \{ \left \Vert \left( I_{d}+D_{n}\right) ^{-1}\right \Vert \geq
C\right \} \leq P\left \{ \lambda _{\max }\left[\left( I_{d}+D_{n}\right)
^{-1}\right]\geq C/\sqrt{d}\right \} \\
&= P\left \{ \lambda _{\min }\left( I_{d}+D_{n}\right) \leq \sqrt{d}/C\right
\} = P\left \{ 1+\lambda _{\min }\left( D_{n}\right) \leq \sqrt{d}/C\right \}
\\
&= P\left\{ \lambda _{\min }\left( D_{n}\right) \leq -1/2\right\} \leq P\left\{
\lambda _{\min }^{2}\left( D_{n}\right) >1/4\right\} \\
&\leq P\left( \left \Vert D_{n}\right \Vert ^{2}>1/4\right) =O\left( 1/n\right)
\end{align*}%
by the Markov inequality. Using this probability bound and the Chernoff
bound, we have for any $\epsilon >0$, 
\begin{align*}
P & \left \{ \frac{n}{\log n}\left \Vert \xi _{n}\right \Vert >\epsilon
\right \} \overset{}{\leq }P\left \{ \frac{n}{\log n}\left \Vert \left(
I_{d}+D_{n}\right) ^{-1}\right \Vert \times \left \Vert D_{n}\right \Vert
^{2}>\epsilon \right \} \\
&= P\left \{ n\left \Vert D_{n}\right \Vert ^{2}>\frac{\epsilon \log n}{C}%
\right \} +P\left \{ \left \Vert \left( I_{d}+D_{n}\right) ^{-1}\right \Vert
>C\right \} =O\left( 1/n \right) .
\end{align*}%
It then follows that 
\begin{equation*}
P\left \{ \left \Vert \eta _{n}\right \Vert \geq C\max \left( h^{2},\sqrt{%
\frac{\log n}{n}},h\sqrt{\frac{\log n}{n}},\frac{\log n}{n},\frac{h\log n}{n}%
\right) \right \} =O\left( \frac{1}{n}+h^{2}\right) .
\end{equation*}%
Under Assumption \ref{a:h}, we can rewrite the above as 
\begin{equation}
P\left \{ \left \Vert \eta _{n}\right \Vert \geq Ch^{2}/\log n\right \}
=O\left( h^{2}\right)  \label{proof_thm:inf3}
\end{equation}%
for any large enough constant $C>0$.

Using \eqref{proof_thm:inf1.5} and defining $W_{j}^{\ast }\equiv \Lambda
_{n}^{-1}Z_{j}\left[ G(-U_{j}/h)-q\right] $, we have 
\begin{align*}
S_{n}& =\left( \Lambda _{n}^{-1}m_{n}\right) ^{\prime }\Lambda _{n}^{\prime }%
\hat{V}^{-1}\Lambda _{n}\left( \Lambda _{n}^{-1}m_{n}\right) 
 =\left( \Lambda _{n}^{-1}m_{n}\right) ^{\prime }\left[ \Lambda _{n}^{-1}%
\hat{V}\left( \Lambda _{n}^{-1}\right) ^{\prime }\right] ^{-1}\left( \Lambda
_{n}^{-1}m_{n}\right) =S_{n}^{L}+e_{n}
\end{align*}%
where 
\begin{align*}
S_{n}^{L} &= \left( \sqrt{n}\bar{W}_{n}^{\ast }\right) ^{\prime }\left( 
\sqrt{n}\bar{W}_{n}^{\ast }\right) -h\left( \sqrt{n}\bar{W}_{n}^{\ast
}\right) ^{\prime }\E\left( AA^{\prime }\right) \left( \sqrt{n}\bar{W}%
_{n}^{\ast }\right) , \\
e_{n} &= \left( \sqrt{n}\bar{W}_{n}^{\ast }\right) ^{\prime }\eta _{n}\left( 
\sqrt{n}\bar{W}_{n}^{\ast }\right) ,
\end{align*}%
and $\bar{W}_{n}^{\ast }=n^{-1}\sum_{j=1}^{n}W_{j}^{\ast }$ as defined in (%
\ref{define_W_star}). Using the Chernoff bound on $\sqrt{n}\bar{W}_{n}^{\ast
}$ and the result in \eqref{proof_thm:inf3}, we can show that $P\left(
\left
\vert e_{n}\right \vert >Ch^{2}\right) =O(h^{2})$. This ensures that
we can ignore $e_{n}$ to the order of $O(h^{2})$ in approximating the
distribution of $S_{n}$.

The characteristic function of $S_{n}^{L}$ is 
\begin{align*}
\E\left[ \exp \left(itS_{n}^{L}\right)\right] & =C_{0}(t)-hC_{1}(t)+O\left(
h^{2}\right) \text{ where} \\
C_{0}(t)& \equiv \E\left \{ \exp \left[ it\left( \sqrt{n}\bar{W}_{n}^{\ast
}\right) ^{\prime }\left( \sqrt{n}\bar{W}_{n}^{\ast }\right) \right] \right
\} , \\
C_{1}(t)& \equiv \E\left \{ it\left( \sqrt{n}\bar{W}_{n}^{\ast }\right)
^{\prime }\left( \E AA^{\prime }\right) \left( \sqrt{n}\bar{W}_{n}^{\ast
}\right) \exp \left[ it\left( \sqrt{n}\bar{W}_{n}^{\ast }\right) ^{\prime
}\left( \sqrt{n}\bar{W}_{n}^{\ast }\right) \right] \right \} .
\end{align*}

Following \citet{Phillips1982} %e.g., equation (2.25) with \omega=1?
and using arguments similar to those in \citet{Horowitz1998} and %
\citet{Whang2006}, we can establish an expansion of the PDF of $%
n^{-1/2}\sum_{j=1}^{n}\left( W_{j}^{\ast }-\E W_{j}^{\ast }\right) $ of the
form 
\begin{equation*}
pdf(x)=(2\pi )^{-d/2}\exp (-x^{\prime }x/2)\left[1+n^{-1/2}p(x)\right]+O(n^{-1}),
\end{equation*}%
where $p(x)$ is an odd polynomial in the elements of $x$ of degree 3. When $%
d=1$, we know from \citet[\S2.8]{Hall1992} 
%or DasGupta(2008) page 187 remark
that 
\begin{equation*}
p(x)=-\frac{\kappa _{3}}{6}\frac{1}{\phi (x)}\frac{d}{dx}\phi
(x)(x^{2}-1)\quad \text{for}\quad \kappa _{3}=\frac{\E\left( W_{j}^{\ast
}-\E W_{j}^{\ast }\right) ^{3}}{V_{n}^{3/2}}=O(1).
\end{equation*}%
We use this expansion to compute the dominating terms in $C_{j}(t)$ for $%
j=0,1$.

First, 
\begin{align*}
C_{0}(t)& =\E\left \{ \exp \left[ it\left( \sqrt{n}\bar{W}_{n}^{\ast }\right)
^{\prime }\left( \sqrt{n}\bar{W}_{n}^{\ast }\right) \right] \right \} \\
& =\left( 2\pi \right) ^{-d/2}\int \exp \left \{ it\left[ x+\sqrt{n}%
\E(W_{j}^{\ast })\right] ^{\prime }\left[ x+\sqrt{n}\E(W_j^*)\right]
\right \} \exp \left( -x'x/2\right) dx+O\left(n^{-1}\right) \\
& \quad +\frac{1}{\sqrt{n}}\left( 2\pi \right) ^{-d/2}\int \exp \left \{ it%
\left[ x+\sqrt{n}\E(W_j^*)\right] ^{\prime }\left[ x+\sqrt{n}%
\E(W_j^*)\right] \right \} p\left( x\right) \exp \left( -x'x/2\right) dx \\
& =\left( 1-2it\right) ^{-d/2}\exp \left( \frac{i\left \Vert \sqrt{n}%
\E(W_j^*)\right \Vert ^{2}t}{1-2it}\right) +O\left(n^{-1}\right) \\
& \quad +\frac{1}{\sqrt{n}}\left( 2\pi \right) ^{-d/2}\int p\left( x\right)
\exp \left( -x'x/2\right) \left[ 1+it2x^{\prime }\sqrt{%
n}\E(W_j^*)+O\left(n\|\E W_j^*\|^{2}\right)\right] dx \\
& =\left( 1-2it\right) ^{-d/2}\exp \left( \frac{i\left \Vert \sqrt{n}%
\E(W_j^*)\right \Vert ^{2}t}{1-2it}\right) +O\left( \left \Vert
\E(W_j^*)\right \Vert +\sqrt{n}h^{2r}+n^{-1}\right) \\
& =\left( 1-2it\right) ^{-d/2}\exp \left( \frac{i\left \Vert \sqrt{n}%
\E(W_j^*)\right \Vert ^{2}t}{1-2it}\right) +O\left( h^{r}\right) ,
\end{align*}%
where the third equality follows from the characteristic function of a
noncentral chi-square distribution.

Second, for $C_{1}(t)$ we can put any $o(1)$ term into the remainder since $%
hC_{1}(t)$ will then have remainder $o(h)$. Noting that $x$ is an odd
function (of $x$) and so integrates to zero against any symmetric PDF, 
\begin{align*}
C_{1}\left( t\right) & = \E\left \{ it\left( \sqrt{n}\bar{W}_{n}^{\ast
}\right) ^{\prime }\E\left( AA^{\prime }\right) \left( \sqrt{n}\bar{W}%
_{n}^{\ast }\right) \exp \left[ it\left( \sqrt{n}\bar{W}_{n}^{\ast }\right)
^{\prime }\left( \sqrt{n}\bar{W}_{n}^{\ast }\right) \right] \right \} \\
& =\left( 2\pi \right) ^{-d/2}\int it\left( x+\sqrt{n}\E W_{j}^{\ast }\right)
^{\prime }\E\left( AA^{\prime }\right) \left( x+\sqrt{n}\E W_{j}^{\ast }\right)
\\
& \quad \times \exp\left\{ it\left[ x+\sqrt{n}\E(W_j^*)\right] ^{\prime }\left[
x+\sqrt{n}\E(W_j^*)\right]\right\} \exp \left( -x'x/2 \right)
dx \\
& \quad \times \left[ 1+O\left( n^{-1/2}\right) \right] \\
& =\left( 2\pi \right) ^{-d/2}\int itx' \E\left( AA^{\prime }\right)
x\exp \left[ -x'x\left( 1-2it\right)/2 \right] dx \\
&\quad+O\left(
\left \Vert \sqrt{n}\E(W_j^*)\right \Vert ^{2}\right) +O\left( \left
\Vert \E(W_j^*)\right \Vert \right) \\
& =\left( 1-2it\right) ^{-d/2}it\left\{ \mathrm{tr}\left[\E\left( AA^{\prime
}\right) \E\left(\mathbb{XX}^{\prime }\right)\right]\right\} +O\left( \left \Vert \sqrt{n}%
\E(W_j^*)\right \Vert ^{2}\right) +O\left( \left \Vert \E(W_j^*)\right \Vert \right) \\
& =\left( 1-2it\right) ^{-d/2-1}it\left\{ \mathrm{tr}\left[\E\left( AA^{\prime
}\right)\right] \right\} +O\left( \left \Vert \sqrt{n}\E(W_j^*)\right \Vert
^{2}\right) +O\left( \left \Vert \E(W_j^*)\right \Vert \right) ,
\end{align*}%
where $\mathbb{X}\sim N\left(0,diag\left( 1-2it\right) ^{-1}\right)$.

Combining the above steps, we have, for $r\geq 2$, 
\begin{align}
\E\left[ \exp\left(itS_{n}^{L}\right)\right] & =\overbrace{\left( 1-2it\right)
^{-d/2}\exp \left( \frac{i\left \Vert \sqrt{n}\E W_{j}^{\ast }\right \Vert ^{2}t%
}{1-2it}\right) }^{\text{$C_{0}(t)$}}-h\overbrace{(1-2it)^{-d/2-1}it\mathrm{%
tr}\left[ \E\left( AA^{\prime }\right) \right] }^{\text{$O(1)$ term in $%
C_{1}(t)$}}  \notag \\
& \quad +\overbrace{O\left( nh^{2r+1}\right) +O\left( h^{r+1}\right) }^{%
\text{remainder from $hC_{1}(t)$}}+O(h^{2})  \notag \\
& =\left( 1-2it\right) ^{-d/2}+(1-2it)^{-d/2-1}it\left \{ \left \Vert \sqrt{n}%
\E W_{j}^{\ast }\right \Vert ^{2}-h\mathrm{tr}\left[\E\left( AA^{\prime }\right)
\right] \right\}  \notag \\*
\label{eqn:SnL-char-fn}
& \quad +O\left(h^{2}+nh^{2r+1}\right).
\end{align}%
The $\chi _{d}^{2}$ characteristic function is $(1-2it)^{-d/2}$, and
integrating by parts yields the Fourier--Stieltjes transform of the $\chi
_{d}^{2}$ PDF: 
\begin{align*}
\int_{0}^{\infty }\exp (itx)d\mathcal{G}_{d}^{\prime }(x)& =\int_{0}^{\infty
}\exp (itx)\mathcal{G}_{d}^{\prime \prime }(x)dx=\left. \exp (itx)\mathcal{G}%
_{d}^{\prime }(x)\right \vert _{0}^{\infty }-\int_{0}^{\infty }(it)\exp (itx)%
\mathcal{G}_{d}^{\prime }(x)dx \\
& =(-it)(1-2it)^{-d/2}.
\end{align*}%
Taking a Fourier--Stieltjes inversion of \eqref{eqn:SnL-char-fn} thus yields 
\begin{align*}
P\left( S_{n}^{L}<x\right) & =\mathcal{G}_{d}\left( x\right) -\mathcal{G}%
_{d+2}^{\prime }(x)\left \{ \left \Vert \sqrt{n}\E W_{j}^{\ast }\right \Vert
^{2}-h\mathrm{tr}\left[\E\left( AA^{\prime }\right) \right]\right \} 
    +O\left(h^{2}+nh^{2r+1}\right)
\\
& =\mathcal{G}_{d}\left( x\right) -\mathcal{G}_{d+2}^{\prime }(x)\left \{
nh^{2r}\E(B)'\E(B)-h\mathrm{tr}\left[\E\left( AA^{\prime
}\right) \right]\right \} +O\left(h^{2}+nh^{2r+1}\right).
\end{align*}

A direct implication is that type I error is 
\begin{equation*}
P\left( m_{n}^{\prime }\hat{V}^{-1}m_{n}>c_{\alpha }\right) 
  = \alpha +%
\mathcal{G}_{d+2}^{\prime }(c_{\alpha }) \left \{
nh^{2r}\E(B)'\E(B)-h\mathrm{tr}\left[\E\left( AA^{\prime
}\right)\right] \right \} +O\left(h^{2}+nh^{2r+1}\right). \qed
\end{equation*}

%\bigskip 
%\medskip 
%\bigskip

\subsection*{Proof of Theorem \protect \ref{thm:power}}

Define 
\begin{equation*}
W_{j}\equiv W_{j}\left( \delta \right) \equiv Z_{j}\left[ G\left( \frac{%
X_{j}^{\prime }\delta }{\sqrt{n}h}-\frac{U_{j}}{h}\right) -q\right] ,
\end{equation*}%
then 
\begin{equation*}
m_{n}\left( \beta _{0}\right) =\frac{1}{\sqrt{n}}\sum_{j=1}^{n}W_{j}=\frac{1%
}{\sqrt{n}}\sum_{j=1}^{n}W_{j}\left( \delta \right) .
\end{equation*}%
We first compute the mean of $m_{n}\left( \beta _{0}\right)$. Let $%
[U_{L}(Z_{j},X_{j}),U_{H}\left( Z_{j},X_{j}\right) ]$ be the support of $%
U_{j}$ conditional on $Z_{j}$ and $X_{j}$. Using the same argument as in the
proof of Theorem \ref{thm:Wj}, 
\begin{align*}
\E\left[m_{n}\left( \beta _{0}\right)\right] 
  &= \sqrt{n}\E(W_j)
   = \sqrt{n}\E\left\{ Z_{j}%
\int_{U_{L}(Z_{j},X_{j})}^{U_{H}\left( Z_{j},X_{j}\right) }\left[ G\left( 
\frac{X_{j}^{\prime }\delta }{\sqrt{n}h}-\frac{u}{h}\right) -q\right]
dF_{U|Z,X}(u\mid Z_{j},X_{j}) \right\} \\
  &= \sqrt{n}\E\left\{ Z_{j}\left. \left[ G\left( \frac{X_{j}^{\prime }\delta }{\sqrt{n}%
h}-\frac{u}{h}\right) -q\right] F_{U|Z,X}(u\mid Z_{j},X_{j})\right \vert
_{U_{L}(Z_{j},X_{j})}^{U_{H}(Z_{j},X_{j})} \right\} \\
&\quad +\frac{\sqrt{n}}{h}\E\left\{ Z_{j}\int_{U_{L}(Z_{j},X_{j})}^{U_{H}\left(
Z_{j},X_{j}\right) }F_{U|Z,X}(u\mid Z_{j},X_{j})G^{\prime }\left( \frac{%
X_{j}^{\prime }\delta }{\sqrt{n}h}-\frac{u}{h}\right) du \right\} \\
  &= -\sqrt{n}\E\left(Z_j q\right)
     +\sqrt{n}\E\left[Z_{j}\int_{-1}^{1}F_{U|Z,X}\left( \frac{%
X_{j}^{\prime }\delta }{\sqrt{n}}-hv\mid Z_{j},X_{j}\right) G^{\prime
}\left( v\right) dv \right] \\
  &= \sqrt{n}\E\left\{Z_{j}\left[ F_{U|Z,X}\left( \frac{X_{j}^{\prime }\delta }{\sqrt{n%
}}\mid Z_{j},X_{j}\right) -q\right] \right\} \\
  &\quad +\sqrt{n}\E\left\{Z_{j}\int_{-1}^{1}\left[ f_{U|Z,X}^{(r-1)}\left( \frac{%
X_{j}^{\prime }\delta }{\sqrt{n}}\mid Z_{j},X_{j}\right) \frac{(-h)^{r}v^{r}%
}{r!}\right] G^{\prime }(v)dv \right\} +O\left(\sqrt{n}h^{r+1}\right).
\end{align*}%
Expanding $F_{U|Z,X}\left( \frac{X_{j}^{\prime }\delta }{\sqrt{n}}\mid
Z_{j},X_{j}\right) $ and $f_{U|Z,X}^{(r-1)}\left( \frac{X_{j}^{\prime
}\delta }{\sqrt{n}}\mid Z_{j},X_{j}\right) $ at zero, and since $r$ is even, 
\begin{align*}
\E\left[m_{n}\left( \beta _{0}\right)\right] 
  &= \sqrt{n}\E\left\{Z_{j}\left[ F_{U|Z,X}\left( 0\mid
Z_{j},X_{j}\right) -q\right]\right\}
    +\E\left]Z_{j}X_{j}^{\prime }\delta f_{U|Z,X}\left(
0\mid Z_{j},X_{j}\right)\right]
    +O\left( n^{-1/2}\right) \\
  &\quad +\frac{h^r}{r!}\sqrt{n}\E\left[ Z_{j}f_{U|Z,X}^{(r-1)}\left( 0\mid
Z_{j},X_{j}\right) \right] \left( \int_{-1}^{1}v^{r}G^{\prime }(v)dv\right)
+O\left(\sqrt{n}h^{r+1}+h^r\right) \\
  &= \E\left[ f_{U|Z,X}\left( 0\mid Z_{j},X_{j}\right) Z_{j}X_{j}^{\prime
}\delta \right] +\sqrt{n}h^rV^{1/2}\E(B)+O\left( n^{-1/2}+\sqrt{n}%
h^{r+1} +h^r\right) .
\end{align*}%
Here we have used the following extensions of the law of iterated
expectation: 
\begin{align*}
\E &\left \{Z_j\left[ F_{U|Z,X}\left( 0\mid Z_j,X_j\right) -q\right]\right \}
= \E\left \{Z_j\E\left[ \E\left(1\{U_j<0\} \mid Z_j,X_j\right) - q \mid Z_j%
\right]\right \} \\
&= \E\left \{Z_j \left[F_{U|Z}\left( 0\mid Z_j\right) -q\right]\right \} = 0 ,
\\
\E & \left[f_{U|Z,X}(u\mid Z,X)\mid Z=z\right] 
    = \int_\mathcal{X} f_{U|Z,X}(u\mid
z,x)f_{X|Z}(x\mid z)dx \\
  &= \int_\mathcal{X} \frac{f_{U,Z,X}(u,z,x)}{f_{Z,X}(z,x)} \frac{f_{Z,X}(z,x)%
}{f_Z(z)}dx = [f_Z(z)]^{-1} \int_\mathcal{X} f_{U,Z,X}(u,z,x) dx \\
  &= f_{U,Z}(u,z)/f_Z(z) = f_{U|Z}(u\mid z) , \\
\E & \left[f_{U|Z,X}(0\mid Z,X)g(Z)\right] 
    = \E\left \{ \E\left[ f_{U|Z,X}(0\mid Z,X) g(Z)
\mid Z\right] \right \} \\
  &= \E\left \{ \E\left[ f_{U|Z,X}(0\mid Z,X) \mid Z\right] g(Z) \right \} 
   = \E\left \{ f_{U|Z}(0\mid Z) g(Z) \right \} ,
\end{align*}
and similarly for derivatives of the PDF by exchanging the order of
differentiation and integration. % \begin{eqnarray*}

Next, we compute the variance $V_{n}$ of $m_{n}\left( \beta _{0}\right)$: 
\begin{align*}
V_{n} &= \mathrm{Var}\left[ m_{n}\left( \beta _{0}\right) \right] = \mathrm{%
Var}\left[W_{j}\left( \delta \right) \right] \\
  &= \E\left\{ \left[ G\left( \frac{X_{j}^{\prime }\delta }{\sqrt{n}h}-\frac{U_{j}}{h}%
\right) -q\right] ^{2}Z_{j}Z_{j}^{\prime } \right\}
    -\left[ \E W_{j}\left( \delta
\right) \right] \left[ \E W_{j}\left( \delta \right) \right] ^{\prime } \\
  &= \E\left\{ \left[ G\left( \frac{X_{j}^{\prime }\delta }{\sqrt{n}h}-\frac{U_{j}}{h}%
\right) -q\right] ^{2}Z_{j}Z_{j}^{\prime }\right\} 
    +O\left(n^{-1}+h^{2r}\right) .
\end{align*}%
Now 
\begin{align*}
\E & \left \{ \left[ G\left( \frac{X_{j}^{\prime }\delta }{\sqrt{n}h}-\frac{%
U_{j}}{h}\right) -q\right] ^{2} \mid Z_{j},X_{j}\right \} \\
&= \int_{U_{L}(Z_{j},X_{j})}^{U_{H}\left( Z_{j},X_{j}\right) }\left[ G\left( 
\frac{X_{j}^{\prime }\delta }{\sqrt{n}h}-\frac{u}{h}\right) -q\right]
^{2}dF_{U|Z,X}\left( u\mid Z_{j},X_{j}\right) \\
&= \left. \left[ G\left( \frac{X_{j}^{\prime }\delta }{\sqrt{n}h}-\frac{u}{h}%
\right) -q\right] ^{2}F_{U|Z,X}\left( u\mid Z_{j},X_{j}\right) \right \vert
_{U_{L}(Z_{j},X_{j})}^{U_{H}\left( Z_{j},X_{j}\right) } \\
&\quad +\frac{2}{h}\int_{U_{L}(Z_{j},X_{j})}^{U_{H}\left( Z_{j},X_{j}\right)
}F_{U|Z,X}\left( u\mid Z_{j},X_{j}\right) \left[ G\left( \frac{X_{j}^{\prime
}\delta }{\sqrt{n}h}-\frac{u}{h}\right) -q\right] G^{\prime }\left( \frac{%
X_{j}^{\prime }\delta }{\sqrt{n}h}-\frac{u}{h}\right) du \\
&= q^{2}+2\int_{-1}^{1}F_{U|Z,X}\left( hv+\frac{X_{j}^{\prime }\delta }{%
\sqrt{n}}\mid Z_{j},X_{j}\right) \left[ G\left( -v\right) -q\right]
G^{\prime }\left( -v\right) dv \\
&= q^{2}+2F_{U|Z,X}\left( 0\mid Z_{j},X_{j}\right) \int_{-1}^{1}\left[
G\left( -v\right) -q\right] G^{\prime }\left( -v\right) dv \\
&\quad +2hf_{U|Z,X}\left( 0\mid Z_{j},X_{j}\right) \left[ \int_{-1}^{1}v%
\left[ G\left( -v\right) -q\right] G^{\prime }\left( -v\right) dv\right] \\
&\quad +\frac{2}{\sqrt{n}}\left[ f_{U|Z,X}\left( 0\mid Z_{j},X_{j}\right)
X_{j}^{\prime }\delta \int_{-1}^{1}\left[ G\left( -v\right) -q\right]
G^{\prime }\left( -v\right) dv\right] +O\left( h^{2}+n^{-1}\right) \\
&= q^{2}+F_{U|Z,X}\left( 0\mid Z_{j},X_{j}\right) \left( 1-2q\right)
-hf_{U|Z,X}\left( 0\mid Z_{j},X_{j}\right) \left(
1-\int_{-1}^{1}G^{2}(u)du\right) \\
&\quad +\frac{\left( 1-2q\right) }{\sqrt{n}}\left[ f_{U|Z,X}\left( 0\mid
Z_{j},X_{j}\right) X_{j}^{\prime }\delta \right] +O\left(
h^{2}+n^{-1}\right) ,
\end{align*}%
and so 
\begin{align*}
V_{n} 
  &= q^{2}\E\left(Z_{j}Z_{j}^{\prime }\right)
    +\left( 1-2q\right) \E\left[
F_{U|Z,X}\left( 0\mid Z_{j},X_{j}\right) Z_{j}Z_{j}^{\prime }\right] \\
&\quad +h\left( 1-\int_{-1}^{1}G^{2}(u)du\right) \E\left[f_{U|Z,X}\left(
0\mid Z_{j},X_{j}\right) Z_{j}Z_{j}^{\prime }\right] \\
&\quad +\frac{\left( 1-2q\right) }{\sqrt{n}}\E\left \{ \left[ f_{U|Z,X}\left(
0\mid Z_{j},X_{j}\right) X_{j}^{\prime }\delta \right] Z_{j}Z_{j}^{\prime
}\right \} +O\left(n^{-1}+h^{2}\right) \\
&= V-hV^{1/2} \E\left(AA^{\prime }\right) \left(V^{1/2}\right)^{\prime
}+O\left( n^{-1/2}+h^{2}\right) ,
\end{align*}%
where the last line holds because of the above law of iterated expectation
extension and 
\begin{align*}
q^{2} & \E\left(Z_{j}Z_{j}^{\prime }\right)+\left( 1-2q\right) \E\left[ F_{U|Z,X}\left(
0\mid Z_{j},X_{j}\right) Z_{j}Z_{j}^{\prime }\right] \\
&= q^{2}\E\left(Z_j Z_j'\right)+\left( 1-2q\right) \E\left \{ \E\left[ 1\left \{
U<0\right \} \mid Z_{j},X_{j}\right] Z_{j}Z_{j}^{\prime }\right \} \\
&= q^{2}\E\left(Z_j Z_j'\right)+\left( 1-2q\right) \E\left \{ \E\left[ 1\left \{
U<0\right \} Z_{j}Z_{j}^{\prime }\mid Z_{j},X_{j}\right] \right \} \\
&= q^{2}\E\left(Z_j Z_j'\right)+\left( 1-2q\right) \E\left( 1\left \{
U<0\right \} Z_{j}Z_{j}^{\prime }\right) =q(1-q)\E\left(Z_j Z_j'\right).
\end{align*}

Let $\Lambda _{n}=V_{n}^{1/2}$, then 
\begin{equation*}
\Lambda _{n}=V^{1/2}\left[ I_{d}-h\E\left( AA^{\prime }\right) +O\left(
n^{-1/2}+h^{2}\right) \right] ^{1/2}.
\end{equation*}%
Define $W_{j}^{\ast }\equiv W_{j}^{\ast }\left( \delta \right) =\Lambda
_{n}^{-1}W_{j}\left( \delta \right) $ and 
\begin{equation*}
\bar{W}_{n}^{\ast }\equiv \bar{W}_{n}^{\ast }\left( \delta \right)
=n^{-1}\sum_{j=1}^{n}W_{j}^{\ast }\left( \delta \right) .
\end{equation*}%
Then $\Delta =\sqrt{n}\E\left(W_j^*\right)$ and 
\begin{align*}
\left \Vert \Delta \right \Vert ^{2} &= \left \Vert V_{n}^{-1/2}\Sigma
_{ZX}\delta +V_{n}^{-1/2}\sqrt{n}\left( -h\right) ^{r}V^{1/2}\E(B)\right
\Vert ^{2} \\
&= \left \Vert V_{n}^{-1/2}V^{1/2}\tilde{\delta}+\sqrt{n}\left( -h\right)
^{r}V_{n}^{-1/2}V^{1/2}\E(B)\right \Vert ^{2} \\
&= \left \Vert \left[ I_{d}-h\E\left(AA'\right)\right]^{-1/2}\tilde{\delta}+%
\sqrt{n}\left( -h\right) ^{r}\E(B)\right \Vert ^{2}\left[ 1+o(1)\right] \\
&= \left \Vert \left[ I_{d}+\frac{1}{2}h\E\left(AA'\right)\right] \tilde{\delta}%
+\sqrt{n}\left( -h\right) ^{r}\E(B)\right \Vert ^{2}\left[ 1+o(1)\right] \\
&= \left\{ \left \Vert \tilde{\delta}\right \Vert ^{2}+h\tilde{\delta}%
^{\prime }\left[ \E\left(AA'\right)\right] \tilde{\delta}+nh^{2r}\E(B)'\E(B) +2\tilde{\delta}^{\prime }\sqrt{n}%
\left( -h\right) ^{r}\E(B)\right\} \left[ 1+o(1)\right] .
\end{align*}

We can now write 
\begin{equation*}
S_{n}=m_{n}\left( \beta _{0}\right) ^{\prime }\hat{V}^{-1}m_{n}\left( \beta
_{0}\right) =S_{n}^{L}+e_{n}
\end{equation*}%
where 
\begin{align*}
S_{n}^{L} &= \left( \sqrt{n}\bar{W}_{n}^{\ast }\right) ^{\prime }\left( 
\sqrt{n}\bar{W}_{n}^{\ast }\right) -h\left( \sqrt{n}\bar{W}_{n}^{\ast
}\right) ^{\prime }\E\left( AA^{\prime }\right) \left( \sqrt{n}\bar{W}%
_{n}^{\ast }\right) , \\
e_{n} &= \left( \sqrt{n}\bar{W}_{n}^{\ast }\right) ^{\prime }\eta _{n}\left( 
\sqrt{n}\bar{W}_{n}^{\ast }\right) .
\end{align*}%
By the same argument as in the proof of Theorem \ref{thm:inf}, we can show
that the presence of $e_{n}$ generates an approximation error that is not
larger than that given in Theorem \ref{thm:power}.

The characteristic function of $S_{n}^{L}$ is 
\begin{align*}
\E\left[ \exp\left(itS_{n}^{L}\right)\right] & =C_{0}(t)-hC_{1}(t)+O\left( h^{2}+%
n^{-1/2}\right) \text{ where} \\
C_{0}(t)& \equiv \E\left \{ \exp \left[ it\left( \sqrt{n}\bar{W}_{n}^{\ast
}\right) ^{\prime }\left( \sqrt{n}\bar{W}_{n}^{\ast }\right) \right]
\right \} , \\
C_{1}(t)& \equiv \E\left \{ it\left( \sqrt{n}\bar{W}_{n}^{\ast }\right)
^{\prime }\E\left( AA^{\prime }\right) \left( \sqrt{n}\bar{W}_{n}^{\ast
}\right) \exp \left[ it\left( \sqrt{n}\bar{W}_{n}^{\ast }\right) ^{\prime
}\left( \sqrt{n}\bar{W}_{n}^{\ast }\right) \right] \right \} .
\end{align*}%
Using the expansion of the PDF of $n^{-1/2}\sum_{j=1}^{n}\left(W_{j}^{\ast
}-\E W_{j}^{\ast }\right)$: 
\begin{equation*}
pdf(x)=(2\pi )^{-d/2}\exp\left(-x^{\prime }x/2\right)\left[1+n^{-1/2}p(x)\right]+O\left(n^{-1}\right),
\end{equation*}%
where $p(x)$ is an odd polynomial in the elements of $x$ of degree 3, we
obtain 
\begin{align*}
C_{0}(t)& =\E\left \{ \exp \left[ it\left( \sqrt{n}\bar{W}_{n}^{\ast }\right)
^{\prime }\left( \sqrt{n}\bar{W}_{n}^{\ast }\right) \right] \right \}  \\
& =\left( 2\pi \right) ^{-d/2}\int \exp \left \{ it\left[ x+\sqrt{n}%
\E\left(W_j^*\right)\right] ^{\prime }\left[ x+\sqrt{n}\E\left(W_j^*\right)\right]
\right \} \exp \left( -x'x/2\right) dx+O\left( n^{-1/2}\right)  \\
& =\left( 1-2it\right) ^{-d/2}\exp \left( \frac{it\left \Vert \sqrt{n}%
\E\left(W_j^*\right)\right \Vert ^{2}}{1-2it}\right) +O\left( n^{-1/2}\right) .
\end{align*}%
Similarly, 
\begin{align*}
C_{1}\left( t\right) & =\left( 2\pi \right) ^{-d/2}\int it\left( x+\sqrt{n}%
\E W_{j}^{\ast }\right) ^{\prime }\E\left( AA^{\prime }\right) \left( x+\sqrt{n}%
\E W_{j}^{\ast }\right)  \\
&\quad \times \exp \left \{ it\left[ x+\sqrt{n}\E\left(W_j^*\right)\right] ^{\prime }%
\left[ x+\sqrt{n}\E\left(W_j^*\right)\right] -x'x/2\right \}
dx+O\left( n^{-1/2}\right) .
\end{align*}%
Since 
\begin{align*}
it& \left[ x+\sqrt{n}\E\left(W_j^*\right)\right] ^{\prime }\left[ x+\sqrt{n}%
\E\left(W_j^*\right)\right] -x'x/2 \\
& =-\frac{1}{2}\left( 1-2it\right) \left[ x-\frac{2it}{1-2it}\sqrt{n}%
\E\left(W_j^*\right) \right] ^{\prime }\left[ x-\frac{2it}{1-2it} 
\sqrt{n}\E\left(W_j^*\right) \right]  \\
& \quad +\frac{it}{1-2it}\left( \sqrt{n}\E W_{j}^{\ast }\right) ^{\prime
}\left( \sqrt{n}\E W_{j}^{\ast }\right) ,
\end{align*}%
we have 
\begin{align*}
C_{1}\left( t\right) & =\left( 1-2it\right) ^{-d/2}\exp \left[ \frac{it}{%
1-2it}\left( \sqrt{n}\E W_{j}^{\ast }\right) ^{\prime }\left( \sqrt{n}%
\E W_{j}^{\ast }\right) \right]  \\
&\quad \times \E\left[it\left( \mathbb{X}+\sqrt{n}\E W_{j}^{\ast }\right) ^{\prime
}\E\left( AA^{\prime }\right) \left( \mathbb{X}+\sqrt{n}\E W_{j}^{\ast }\right)\right]
+O\left( n^{-1/2}\right)  \\
& = \left( 1-2it\right) ^{-d/2}\exp \left[ 
\frac{it}{1-2it}\left( \sqrt{n}\E W_{j}^{\ast }\right) ^{\prime }\left( \sqrt{n%
}\E W_{j}^{\ast }\right) \right]  \\
&\quad \times it\mathrm{tr}\left\{\E\left( AA^{\prime }\right) \left[ \frac{1}{1-2it}%
I_{d}+\left( \frac{2it}{1-2it}+1\right) ^{2}\left( \sqrt{n}\E W_{j}^{\ast
}\right) \left( \sqrt{n}\E W_{j}^{\ast }\right) ^{\prime }\right] \right\}
+O\left( n^{-1/2}\right)  \\
& = \left( 1-2it\right) ^{-d/2}\exp \left[ 
\frac{it}{1-2it}\left( \left \Vert \sqrt{n}\E W_{j}^{\ast }\right \Vert
^{2}\right) \right]  \\
&\quad \times \frac{it}{1-2it}\mathrm{tr}\left\{\E\left( AA^{\prime }\right) \left[ I_{d}+%
\frac{1}{1-2it}\left( \sqrt{n}\E W_{j}^{\ast }\right) \left( \sqrt{n}%
\E W_{j}^{\ast }\right) ^{\prime }\right]\right\} +O\left( n^{-1/2}\right) ,
\end{align*}%
where $\mathbb{X}\sim N\left[ \frac{2it}{1-2it}\left( \sqrt{n}\E W_{j}^{\ast
}\right) ,diag\left( 1-2it\right) ^{-1}\right] $. 

Combining the above steps, we have, for $r\geq 2$, 
\begin{align*}
& \E\left[ \exp\left(itS_{n}^{L}\right)\right] \\
& =\left( 1-2it\right) ^{-d/2}\exp \left( \frac{it\left \Vert \sqrt{n}%
\E W_{j}^{\ast }\right \Vert ^{2}}{1-2it}\right) \\
&\quad +(1-2it)^{-d/2-1}\exp \left( \frac{it\left \Vert \sqrt{n}\E W_{j}^{\ast
}\right \Vert ^{2}}{1-2it}\right) h\left( -it\right) \mathrm{tr}\left[\E\left(
AA^{\prime }\right)\right] +O\left(h^{2}+n^{-1/2}\right) \\
&\quad +(1-2it)^{-d/2-2}\exp \left( \frac{it\left \Vert \sqrt{n}\E W_{j}^{\ast
}\right \Vert ^{2}}{1-2it}\right) h(-it)\left( \sqrt{n}\E W_{j}^{\ast }\right)
^{\prime }\E\left( AA^{\prime }\right) \left( \sqrt{n}\E W_{j}^{\ast }\right) .
\end{align*}

Let $\mathcal{G}_{d}^{\prime }(x;\lambda )$ be the PDF of the noncentral
chi-square distribution with noncentrality parameter $\lambda $, so 
\begin{align}
\mathcal{G}_{d}^{\prime }(x;\lambda )& =\frac{1}{2\pi }\int_{\mathbb{R}%
}\left( 1-2it\right) ^{-d/2}\exp \left( \frac{it\lambda }{1-2it}\right) \exp
\left( -itx\right) dt,  \label{eqn:ncPDF} \\
\mathcal{G}_{d}^{\prime \prime }(x;\lambda )& =\frac{1}{2\pi }\int_{\mathbb{R%
}}\left( -it\right) \left( 1-2it\right) ^{-d/2}\exp \left( \frac{it\lambda }{%
1-2it}\right) \exp \left( -itx\right) dt.  \notag
\end{align}%
Using the above results and taking a Fourier--Stieltjes inversion, 
\begin{align*}
P_{\beta _{n}}\left( S_{n}<x\right) 
  &= \mathcal{G}_{d}\left( x;\left \Vert\Delta \right \Vert ^{2}\right) 
    +\mathcal{G}_{d+2}^{\prime }\left(x;\left \Vert\Delta \right \Vert ^{2}\right)
        h \mathrm{tr}\left[\E\left( AA^{\prime }\right) %
\right] \\
& \quad +\mathcal{G}_{d+4}^{\prime }\left(x;\left \Vert \Delta \right \Vert ^{2}\right)h%
\left[ \Delta ^{\prime }\E\left( AA^{\prime }\right) \Delta \right] +O\left(
h^{2}+n^{-1/2}\right) .
\end{align*}

Expanding $\mathcal{G}_{d}\left( x;\left \Vert \Delta \right \Vert ^{2}\right) 
$ around $\mathcal{G}_{d}\left( x;\Vert \tilde{\delta}\Vert ^{2}\right) $
yields 
\begin{align*}
\mathcal{G}_{d}\left( x;\left \Vert \Delta \right \Vert ^{2}\right) & =%
\mathcal{G}_{d}\left( x;\left \Vert \tilde{\delta}\right \Vert ^{2}\right)
+\left. \frac{\partial \mathcal{G}_{d}\left( x,\lambda \right) }{\partial
\lambda }\right \vert _{\lambda =\left \Vert \tilde{\delta}\right \Vert ^{2}} \\
& \quad \times \left[ h\tilde{\delta}^{\prime }\E\left( AA^{\prime }\right) 
\tilde{\delta}+nh^{2r}\E(B)'\E(B) +2\tilde{%
\delta}^{\prime }\sqrt{n}\left( -h\right) ^{r}\E(B)\right] \left[ 1+o(1)\right] 
\\
& =\mathcal{G}_{d}\left( x;\left \Vert \tilde{\delta}\right \Vert ^{2}\right) -%
\mathcal{G}_{d+2}^{\prime }\left(x;\left \Vert \tilde{\delta}\right \Vert ^{2}\right) \\
& \quad \times \left[ h\tilde{\delta}^{\prime } \E\left(AA^{\prime }\right) 
\tilde{\delta}+nh^{2r}\E(B)'\E(B) +2\tilde{%
\delta}^{\prime }\sqrt{n}\left( -h\right) ^{r}\E(B)\right] \left[ 1+o(1)\right] 
\end{align*}%
using the result that $\frac{\partial }{\partial \lambda }\mathcal{G}%
_{d}(x;\lambda )=-\mathcal{G}_{d+2}^{\prime }(x;\lambda )$, which can be
derived from \eqref{eqn:ncPDF}. Hence 
\begin{align*}
P_{\beta _{n}} \left( S_{n}<x\right)  
  &= \mathcal{G}_{d}\left( x;\left \Vert \tilde{\delta}\right \Vert ^{2}\right) -%
\mathcal{G}_{d+2}^{\prime }\left(x;\left \Vert \tilde{\delta}\right \Vert ^{2}\right)%
\left\{ nh^{2r}\E(B)'\E(B) -h\mathrm{tr}\left[%
\E\left( AA^{\prime }\right) \right]\right\}  \\
& \quad +\left[ \mathcal{G}_{d+4}^{\prime }\left(x;\left \Vert \tilde{\delta}%
\right \Vert ^{2}\right)-\mathcal{G}_{d+2}^{\prime }\left(x;\left \Vert \tilde{\delta}%
\right \Vert ^{2}\right)\right] h\left[ \tilde{\delta}^{\prime }\E\left( AA^{\prime
}\right) \tilde{\delta}\right]  \\
& \quad -\mathcal{G}_{d+2}^{\prime }\left(x;\left \Vert \tilde{\delta}\right \Vert
^{2}\right)2\tilde{\delta}^{\prime }\sqrt{n}\left( -h\right) ^{r}\E(B)+O\left(
h^{2}+n^{-1/2}\right) .
\end{align*}

Under the assumption that $\tilde{\delta}$ is uniform on the sphere $%
\mathcal{S}_{d}(\tau )$, we can write $\tilde{\delta}=\tau \xi /\left \Vert
\xi \right \Vert $ where $\xi \thicksim N(0,I_{d})$. Then%
\begin{align*}
\E_{\tilde{\delta}}& \left[P_{\beta _{n}}\left( S_{n}<x\right)\right] \\
& =\mathcal{G}_{d}\left( x;\tau ^{2}\right) -\mathcal{G}_{d+2}^{\prime
}\left(x;\tau ^{2}\right)\left\{ nh^{2r}\E(B)'\E(B) -h%
\mathrm{tr}\left[\E\left( AA^{\prime }\right) \right] \right\} \\
& \quad +\left[ \mathcal{G}_{d+4}^{\prime }\left(x;\tau ^{2}\right)-\mathcal{G}%
_{d+2}^{\prime }\left(x;\tau ^{2}\right)\right] \tau ^{2}h\mathrm{tr}\left[ \E\left(
AA^{\prime }\right) \E_{\xi }\left(\xi \xi ^{\prime }/\left \Vert \xi \right \Vert
^2\right)\right] +O\left( h^{2}+n^{-1/2}\right)
\end{align*}%
where $\E_{\xi }$ is the expectation with respect to $\xi $. As a
consequence, 
\begin{align*}
\E_{\tilde{\delta}} \left[P_{\beta _{n}}\left( S_{n}>x\right)\right] 
& =1-\mathcal{G}_{d}\left( x;\tau ^{2}\right) +\mathcal{G}_{d+2}^{\prime
}\left(x;\tau ^{2}\right)\left\{ nh^{2r}\E(B)'\E(B) -h%
\mathrm{tr}\left[\E\left( AA^{\prime }\right)\right] \right\} \\
& \quad -\left[ \mathcal{G}_{d+4}^{\prime }\left(x;\tau ^{2}\right)-\mathcal{G}%
_{d+2}^{\prime }\left(x;\tau ^{2}\right)\right] \frac{\tau ^{2}}{d}h\mathrm{tr}\left[
\E\left( AA^{\prime }\right) \right] +O\left( h^{2}+n^{-1/2}\right) .
\end{align*}%
Letting $x=c_{\alpha }$ yields the desired result. \qed

\subsection*{Proof of Corollary \protect \ref{cor:power}}

% \begin{proof}[Proof of Corollary \protect \ref{cor:power}]
By direct calculations, %we have%
\begin{align}
\E_{\tilde{\delta}}& \left[ P_{\beta _{n}}\left( S_{n}>c_{\alpha }^{\ast }\right) \right] \notag \\
& =1-\mathcal{G}_{d}\left( c_{\alpha }^{\ast };\tau ^{2}\right) +\mathcal{G}%
_{d+2}^{\prime }\left(c_{\alpha }^{\ast };\tau ^{2}\right)\left\{ nh^{2r}\E(B)'\E(B) -h\mathrm{tr}\left[\E\left( AA^{\prime
}\right) \right] \right\} \notag \\
& \quad -\left[ \mathcal{G}_{d+4}^{\prime }\left(c_{\alpha }^{\ast };\tau ^{2}\right)-%
\mathcal{G}_{d+2}^{\prime }\left(c_{\alpha }^{\ast };\tau ^{2}\right)\right] \frac{\tau
^{2}}{d}h\mathrm{tr}\left[ \E\left( AA^{\prime }\right) \right] +O\left(
h^{2}+n^{-1/2}\right)  \notag \\
& =1-\mathcal{G}_{d}\left( c_{\alpha };\tau ^{2}\right) +\mathcal{G}%
_{d}^{\prime }\left( c_{\alpha };\tau ^{2}\right) \frac{\mathcal{G}%
_{d+2}^{\prime }(c_{\alpha })}{\mathcal{G}_{d}^{\prime }\left( c_{\alpha
}\right) }\left( 1-\frac{1}{2r}\right) \mathrm{tr}\left[ \E\left(
AA^{\prime }\right) \right] h_{\text{SEE}}^{\ast }  \notag \\
& \quad +\mathcal{G}_{d+2}^{\prime }\left(c_{\alpha };\tau ^{2}\right)\left( \frac{1}{2r%
}-1\right) \mathrm{tr}\left[ \E\left( AA^{\prime }\right) \right] h_{%
\text{SEE}}^{\ast }  \notag \\
& \quad -\left[ \mathcal{G}_{d+4}^{\prime }\left(c_{\alpha };\tau ^{2}\right)-\mathcal{G%
}_{d+2}^{\prime }\left(c_{\alpha };\tau ^{2}\right)\right] \frac{\tau ^{2}}{d}h\mathrm{%
tr}\left[ \E\left( AA^{\prime }\right) \right] +O\left( h^{2}+n^{-1/2}\right)
\notag \\
& =1-\mathcal{G}_{d}\left( c_{\alpha };\tau ^{2}\right) +Q_{d}\left(
c_{\alpha },\tau ^{2},r\right) \mathrm{tr}\left[\E\left( AA^{\prime }\right)\right] h_{%
\text{SEE}}^{\ast }+O\left( h_{\text{SEE}}^{\ast 2}+n^{-1/2}\right) ,
\end{align}%
where%
\begin{align*}
Q_{d}\left( c_{\alpha },\tau ^{2},r\right) & =\left[ \mathcal{G}_{d}^{\prime
}\left( c_{\alpha };\tau ^{2}\right) \frac{\mathcal{G}_{d+2}^{\prime
}(c_{\alpha })}{\mathcal{G}_{d}^{\prime }\left( c_{\alpha }\right) }-%
\mathcal{G}_{d+2}^{\prime }\left(c_{\alpha };\tau ^{2}\right)\right] \left( 1-\frac{1}{%
2r}\right) \\
& \quad -\frac{1}{d}\left[ \mathcal{G}_{d+4}^{\prime }\left(c_{\alpha };\tau
^{2}\right)-\mathcal{G}_{d+2}^{\prime }\left(c_{\alpha };\tau ^{2}\right)\right] \tau ^{2}
\end{align*}%
as desired. 
% Note that $\mathcal G_{d+2}'(c_\alpha)/\mathcal G_d'(c_\alpha)=c_\alpha/d$, by the $\chi^2_d$ PDF formula. 
\qed% \end{proof}

\begin{lemma}
\label{lem:stochastic_expansion_beta} Let the assumptions in Theorem \ref%
{thm:est-MSE} hold. Then 
\begin{equation*}
\sqrt{n}(\hat{\beta}-\beta _{0})
  = -\left \{ \E\left[\frac{\partial }{\partial \beta
^{\prime }}\frac{1}{\sqrt{n}}m_{n}\left( \beta _{0}\right)\right] \right \}
^{-1}m_{n}+O_{p}\left( \frac{1}{\sqrt{nh}}\right) +O_{p}\left( \frac{1}{%
\sqrt{n}}\right) ,
\end{equation*}%
and 
\begin{equation*}
\E\left[\frac{\partial }{\partial \beta ^{\prime }}\frac{1}{\sqrt{n}}m_{n}\left(
\beta _{0}\right)\right] =\Sigma _{ZX}+O(h^{r}).
\end{equation*}
\end{lemma}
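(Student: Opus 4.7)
The two claims decouple: the second is a deterministic expected-Jacobian calculation, and the first is a just-identified $Z$-estimator expansion that uses it.

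For the second claim, I would differentiate under the sum to write
\begin{equation*}
\D{}{\beta'}\tfrac{1}{\sqrt n}m_n(\beta_0) = \tfrac{1}{n}\sum_{j=1}^n \tfrac{1}{h}G'(-U_j/h)Z_jX_j',
\end{equation*}
then condition on $(Z_j,X_j)$ and substitute $v=-u/h$ to obtain
\begin{equation*}
\E\left[\tfrac{1}{h}G'(-U_j/h)\mid Z_j,X_j\right] = \int_{-1}^{1} G'(v)\,f_{U|Z,X}(-hv\mid Z_j,X_j)\,dv.
\end{equation*}
A Taylor expansion of $f_{U|Z,X}(-hv\mid Z_j,X_j)$ to order $r$ around zero, combined with the $r$th-order kernel moment conditions $\int v^k G'(v)\,dv = 0$ for $k=1,\ldots,r-1$ in Assumption \ref{a:G}(ii), collapses the right-hand side to $f_{U|Z,X}(0\mid Z_j,X_j) + O(h^r)$ uniformly in $(Z_j,X_j)$, with the residual dominated by the envelope from Assumption \ref{a:fUZ}(iii). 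Taking the outer expectation yields $\Sigma_{ZX} + O(h^r)$.

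For the first claim, the strategy is a componentwise mean value expansion: $G$ is $C^1$, so $m_n(\cdot)$ is $C^1$ in $\beta$ and
\begin{equation*}
0 = m_n(\hat\beta) = m_n(\beta_0) + J_n^\ast\,(\hat\beta - \beta_0)
\end{equation*}
for some matrix $J_n^\ast$ whose $k$th row equals $\partial m_{n,k}/\partial \beta'$ evaluated at an intermediate point $\tilde\beta_k$ on the segment between $\beta_0$ and $\hat\beta$. Rearranging and scaling by $\sqrt n$ gives $\sqrt n(\hat\beta-\beta_0) = -[J_n^\ast/\sqrt n]^{-1} m_n$, so the task reduces to showing $J_n^\ast/\sqrt n = \E[\D{}{\beta'}n^{-1/2}m_n(\beta_0)] + O_p((nh)^{-1/2})$. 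Consistency of $\hat\beta$ (needed so that each $\tilde\beta_k \to \beta_0$) follows from Assumption \ref{a:beta}, a uniform law of large numbers for $m_n(\beta)/\sqrt n$ on $\mathcal B$ (underwritten by iid sampling and the bounded-support Assumption \ref{a:rank}(i)), and a standard extremum argument. The sampling fluctuation at $\beta_0$, $J_n(\beta_0) - \E J_n(\beta_0)$, has entry-variance $O(1/(nh))$ by direct computation (the $h^{-1}G'(-U_j/h)$ factor has second moment $O(1/h)$ by the same change of variables as above), so Chebyshev's inequality delivers $O_p((nh)^{-1/2})$.

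The main obstacle is controlling the perturbation $J_n^\ast - J_n(\beta_0)$, since $J_n$ depends on $\beta$ through the rescaled argument $(X_j'\beta - Y_j)/h$ and a parameter shift of magnitude $\|\hat\beta - \beta_0\|$ produces a kernel-argument shift of magnitude $\|\hat\beta - \beta_0\|/h$, so naive Lipschitz bounds lose a factor of $1/h$. The clean route is to split $J_n^\ast - J_n(\beta_0) = [\E J_n(\tilde\beta) - \E J_n(\beta_0)] + [(J_n - \E J_n)(\tilde\beta) - (J_n - \E J_n)(\beta_0)]$. The first piece is Lipschitz in $\beta$ with $O(1)$ constant by a Taylor argument analogous to the one used for the second claim (differentiating $\E J_n$ in $\beta$ uses Assumption \ref{a:power_mse}(i)), contributing $O_p(n^{-1/2})$ once $\|\hat\beta - \beta_0\| = O_p(n^{-1/2})$ is established from a first pass through the expansion using $m_n = O_p(1)$ from Theorem \ref{thm:Wj}. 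The second piece is handled by stochastic equicontinuity of the empirical process indexed by $\beta$ on a neighborhood of $\beta_0$ shrinking at rate $n^{-1/2}$, verified via a maximal inequality for the VC-type class $\{h^{-1}G'((X_j'\beta - Y_j)/h)\,Z_jX_j' : \beta \in \mathcal B\}$ (using Assumption \ref{a:G}(i)--(iii) plus bounded $Z_j$); under Assumption \ref{a:h}, which ensures $nh \to \infty$, both pieces assemble into the claimed $O_p((nh)^{-1/2}) + O_p(n^{-1/2})$ remainder.
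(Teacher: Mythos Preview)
Your proposal is correct and follows essentially the same route as the paper: componentwise mean value expansion, consistency via a uniform LLN plus the identification assumption, a two-pass argument to first get $\sqrt{n}$-consistency and then localize to a shrinking neighborhood, stochastic equicontinuity of the Jacobian process to control $J_n^\ast - J_n(\beta_0)$, a direct variance bound $O(1/(nh))$ for $J_n(\beta_0) - \E J_n(\beta_0)$, and the change-of-variables plus $r$th-order kernel Taylor expansion for the expected Jacobian. Your explicit split of the perturbation into an expected (Lipschitz) piece and a centered (equicontinuity) piece is exactly what the paper does, though the paper is terser about the rate coming out of the equicontinuity step.
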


\begin{proof}
We first prove that $\hat{\beta}$ is consistent. Using the Markov
inequality, we can show that when $\E\left(\left \Vert Z_{j}\right \Vert
^{2}\right)<\infty $, 
\begin{equation*}
\frac{1}{\sqrt{n}}m_{n}\left( \beta \right) =\frac{1}{\sqrt{n}}\E\left[m_{n}\left(
\beta \right)\right] +o_{p}\left( 1\right)
\end{equation*}%
for each $\beta \in \mathcal{B}$. It is easy to show that the above $%
o_{p}\left( 1\right) $ term also holds uniformly over $\beta \in \mathcal{B}$%
. But 
\begin{align*}
& \lim_{h\rightarrow 0}\sup_{\beta \in \mathcal{B}}\left \Vert \frac{1}{%
\sqrt{n}}\E\left[m_{n}\left( \beta \right)\right] -\E\left[ Z\left( 1\{Y<X^{\prime }\beta
\}-q\right) \right] \right \Vert \\
&\quad= \lim_{h\rightarrow 0}\max_{\beta \in \mathcal{B}}\left \Vert \E\left\{Z\left[
G\left( \frac{X^{\prime }\beta -Y}{h}\right) -1\{Y<X^{\prime }\beta \} %
\right] \right\} \right \Vert \\
&\quad= \lim_{h\rightarrow 0}\left \Vert \E\left\{Z\left[ G\left( \frac{X^{\prime
}\beta ^{\ast }-Y}{h}\right) -1\{Y<X^{\prime }\beta ^{\ast }\} \right]
\right\} \right \Vert =0
\end{align*}%
by the dominated convergence theorem, where $\beta ^{\ast }$ is the value of 
$\beta $ that achieves the maximum. Hence 
\begin{equation*}
\frac{1}{\sqrt{n}}m_{n}\left( \beta \right) 
  = \E\left[ Z\left( 1\{Y<X^{\prime
}\beta \}-q\right) \right] +o_{p}\left( 1\right)
\end{equation*}%
uniformly over $\beta \in \mathcal{B}$. Given the uniform convergence and
the identification condition in Assumption \ref{a:beta}, we can invoke
Theorem 5.9 of \citet{vanderVaart1998} to obtain that $\hat{\beta}\stackrel{p}{\to}
\beta _{0}$.

Next we prove the first result of the lemma. Under Assumption \ref{a:G}%
(i--ii), we can use the elementwise mean value theorem to obtain%
\begin{equation*}
\sqrt{n}(\hat{\beta}-\beta _{0})=-\left[ \frac{\partial }{\partial \beta
^{\prime }}\frac{1}{\sqrt{n}}m_{n}\left( \tilde{\beta}\right) \right]
^{-1}m_{n}
\end{equation*}%
where 
\begin{equation*}
\frac{\partial }{\partial \beta ^{\prime }}m_{n}(\tilde{\beta})=\left[ \frac{%
\partial }{\partial \beta }m_{n,1}(\tilde{\beta}_{1}),\ldots ,\frac{\partial 
}{\partial \beta }m_{n,d}(\tilde{\beta}_{d})\right] ^{\prime }
\end{equation*}%
and each $\tilde{\beta}_{i}$ is a point between $\hat{\beta}$ and $\beta
_{0} $. Under Assumptions \ref{a:sampling} and \ref{a:G}(i--ii) and that $\E\left[%
\frac{\partial }{\partial \beta ^{\prime }}\frac{1}{\sqrt{n}}m_{n}\left(
\beta \right)\right]$ is continuous at $\beta =\beta _{0}$, we have, using
standard textbook arguments, that $\frac{\partial }{\partial \beta ^{\prime }%
}\frac{1}{\sqrt{n}}m_{n}\left( \tilde{\beta}\right) =\frac{\partial }{%
\partial \beta ^{\prime }}\frac{1}{\sqrt{n}}m_{n}\left( \beta _{0}\right)
+o_{p}\left( 1\right) $. But%
\begin{equation*}
\frac{\partial }{\partial \beta ^{\prime }}\frac{1}{\sqrt{n}}m_{n}\left(
\beta _{0}\right) =\frac{1}{nh}\sum_{j=1}^{n}Z_{j}X_{j}^{\prime }G^{\prime
}\left( -U_j/h\right) \overset{p}{\to} \Sigma _{ZX} .
\end{equation*}%
Hence, under the additional Assumption \ref{a:h} and nonsingularity of $%
\Sigma _{ZX}$, we have $\sqrt{n}(\hat{\beta}-\beta _{0})=O_{p}\left(
1\right) $. With this rate of convergence, we can focus on a $\sqrt{n}$
neighborhood $\mathcal{N}_{0}$ of $\beta _{0}$. We write 
\begin{equation*}
\sqrt{n}\left( \hat{\beta}-\beta _{0}\right) 
   = -\left( \frac{\partial }{%
\partial \beta ^{\prime }}\frac{1}{\sqrt{n}}m_{n}\left( \beta _{0}\right) +%
\left\{ \frac{\partial }{\partial \beta ^{\prime }}\frac{1}{\sqrt{n}}\left[
m_{n}\left( \tilde{\beta}\right) -m_{n}\left( \beta _{0}\right) \right] %
\right\} \right) ^{-1}m_{n}.
\end{equation*}%
Using standard arguments again, we can obtain the following stochastic
equicontinuity result: 
\begin{equation*}
\sup_{\beta \in \mathcal{N}_{0}}\left \Vert \left[ \frac{\partial }{\partial
\beta ^{\prime }}m_{n}\left( \beta \right) -\E\frac{\partial }{\partial \beta
^{\prime }}m_{n}\left( \beta \right) \right] -\left[ \frac{\partial }{%
\partial \beta ^{\prime }}m_{n}\left( \beta _{0}\right) -\E\frac{\partial }{%
\partial \beta ^{\prime }}m_{n}\left( \beta _{0}\right) \right] \right \Vert
=o_{p}\left( 1\right) ,
\end{equation*}%
which, combined with the continuity of $\E\frac{\partial }{\partial \beta
^{\prime }}m_{n}\left( \beta \right) $, implies that%
\begin{equation*}
\left\{ \frac{\partial }{\partial \beta ^{\prime }}\frac{1}{\sqrt{n}}\left[
m_{n}\left( \tilde{\beta}\right) -m_{n}\left( \beta _{0}\right) \right] %
\right\} =O_{p}\left( n^{-1/2}\right) .
\end{equation*}%
Therefore 
\begin{align*}
\sqrt{n}\left( \hat{\beta}-\beta _{0}\right) 
  &= -\left[ \frac{\partial }{%
\partial \beta ^{\prime }}\frac{1}{\sqrt{n}}m_{n}\left( \beta _{0}\right)
+O_{p}\left( n^{-1/2}\right) \right]^{-1}m_{n} \\
&= -\left( \frac{\partial }{\partial \beta ^{\prime }}\frac{1}{\sqrt{n}}%
m_{n}\right)^{-1}m_{n}+O_{p}\left( n^{-1/2}\right) .
\end{align*}

Now 
\begin{align*}
& \mathrm{Var}\left( \text{vec}\left[ \frac{\partial }{\partial \beta
^{\prime }}m_{n}/\sqrt{n}\right] \right)  \\
& =n^{-1}\mathrm{Var}\left[ \text{vec}\left( Z_{j}X_{j}^{\prime }\right)
h^{-1}G^{\prime }(-U_{j}/h)\right]  \\
& \leq n^{-1}\E\left\{ \text{vec}\left( Z_{j}X_{j}^{\prime }\right) \left[ 
\text{vec}\left( Z_{j}X_{j}^{\prime }\right) \right] ^{\prime }h^{-2}\left[
G^{\prime }(-U_{j}/h)\right] ^{2}\right\}  \\
& =n^{-1}\E\left \{ \text{vec}\left( Z_{j}X_{j}^{\prime }\right) \left[ \text{%
vec}\left( Z_{j}X_{j}^{\prime }\right) \right] ^{\prime }\int h^{-2}\left[
G^{\prime }(-u/h)\right] ^{2}f_{U|Z,X}(u\mid Z_{j},X_{j})du\right \}  \\
& =(nh)^{-1}\E\left \{ \text{vec}\left( Z_{j}X_{j}^{\prime }\right) \left[ 
\text{vec}\left( Z_{j}X_{j}^{\prime }\right) \right] ^{\prime }\int \left[
G^{\prime }(v)\right] ^{2}f_{U|Z,X}(-hv\mid Z_{j},X_{j})dv\right \}  \\
& =O\left( \frac{1}{nh}\right) ,
\end{align*}%
so 
\begin{equation*}
\frac{\partial }{\partial \beta ^{\prime }}\frac{1}{\sqrt{n}}m_{n}
  = \E\left(\frac{%
\partial }{\partial \beta ^{\prime }}\frac{1}{\sqrt{n}}m_{n} \right)
+O_{p}\left( \frac{1}{\sqrt{nh}}\right) .
\end{equation*}%
As a result,%
\begin{align*}
\sqrt{n}\left( \hat{\beta}-\beta _{0}\right) 
  &= -\left[ \E\left(\frac{\partial }{%
\partial \beta ^{\prime }}\frac{1}{\sqrt{n}}m_{n}\right)+O_{p}\left( \frac{1}{\sqrt{%
nh}}\right) \right]^{-1}m_{n}+O_{p}\left( n^{-1/2}\right)  \\
& =-\left[ \E\left(\frac{\partial }{\partial \beta ^{\prime }}\frac{1}{\sqrt{n}}%
m_{n}\right)\right]^{-1}m_{n}+O_{p}\left( \frac{1}{\sqrt{nh}}\right) +O_{p}\left( 
n^{-1/2}\right) .
\end{align*}

For the second result of the lemma, we use the same technique as in the
proof of Theorem \ref{thm:Wj}. We have 
\begin{align*}
\E\left( \frac{\partial }{\partial \beta ^{\prime }}m_{n}/\sqrt{n}\right) 
  &= \E\left[ \frac{1}{nh}\sum_{j=1}^{n}Z_{j}X_{j}^{\prime }G^{\prime }(-U_{j}/h)%
\right] 
   = \E\left\{ \E\left[ Z_{j}X_{j}^{\prime }h^{-1}G^{\prime
}(-U_{j}/h)\mid Z_{j},X_{j}\right] \right\} \\
  &= \E\left[ Z_{j}X_{j}^{\prime }\int G^{\prime }(-u/h)f_{U|Z,X}(u\mid
Z_{j},X_{j})d(u/h)\right] \\
& =\E\left[ Z_{j}X_{j}^{\prime }\int G^{\prime }(v)f_{U|Z,X}(-hv\mid
Z_{j},X_{j})dv\right] \\
& =\E\left[ Z_{j}X_{j}^{\prime }f_{U|Z,X}(0\mid Z_{j},X_{j})\right] +O(h^{r}),
\end{align*}%
as desired. %$\qedhere$
\end{proof}

\end{document}